\theoremstyle{plain}
\newtheorem{theorem}{Theorem}
\newtheorem{lemma}{Lemma}
\newtheorem{corollary}{Corollary}
\newtheorem{proposition}{Proposition}
\newtheorem{definition}{Definition}
\theoremstyle{definition}
\newtheorem{example}{Example}
\newtheorem{remark}{Remark}
\newcommand{\hoang}[1]{\textcolor{black}{{#1}}}
\newcommand{\B}{{\mathcal B}}
\newcommand{\G}{{\mathcal G}}
\newcommand{\I}{{\mathcal I}}
\newcommand{\J}{{\mathcal J}}
\newcommand{\M}{{\mathcal M}}
\renewcommand{\S}{{\mathcal S}}
\newcommand{\T}{{\mathcal T}}
\renewcommand{\P}{{\mathcal P}}
\renewcommand{\O}{{\mathcal O}}
\newcommand{\bA}{{\boldsymbol A}}
\newcommand{\bB}{{\boldsymbol B}}
\newcommand{\bX}{\boldsymbol{X}}
\newcommand{\by}{{\boldsymbol y}}
\newcommand{\bw}{{\boldsymbol{w}}}
\newcommand{\bx}{{\boldsymbol{x}}}
\newcommand{\bz}{{\boldsymbol{z}}}
\newcommand{\bbZ}{{\mathbb Z}}
\newcommand{\define}{\stackrel{\mbox{\tiny $\triangle$}}{=}}
\newcommand{\et}{{\emph{et al.}}}
\newcommand{\NT}{$(N,L,F)$-TAS}
\newcommand{\NMOT}{$(N-1,L,F)$-TAS}
\newcommand{\NmT}{$(N_{\mathsf{max}},L,F)$-TAS}
\newcommand{\NPOT}{$(N+1,L,F)$-TAS}
\newcommand{\NnT}{$(N_0,L,F)$-TAS}
\newcommand{\NnET}{$(N_0,L,F)$-ETAS}
\newcommand{\NPT}{$(N',L,F)$-TAS}
\newcommand{\Nm}{N_{\sf{max}}}
\newcommand{\Nmi}{N_{\sf{min}}}
\newcommand{\Nmm}{[\Nmi, \Nm]}
\newcommand{\SNz}{\S^{N_0}}
\newcommand{\SN}{\S^{N}}
\newcommand{\SNm}{\S^{N_{\mathsf{max}}}}
\newcommand{\SNP}{\S^{N'}}
\newcommand{\SNMO}{\S^{N-1}}
\newcommand{\SNPO}{\S^{N+1}}
\newcommand{\SCNPO}{\S_{\sf{cyc}}^{N+1}}
\newcommand{\SCNMO}{\S_{\sf{cyc}}^{N-1}}
\newcommand{\SCN}{\S_{\sf{cyc}}^N}
\newcommand{\SDCNPO}{\S_{\delta\text{-}\sf{cyc}}^{N+1}}
\newcommand{\SDCNMO}{\S_{\delta\text{-}\sf{cyc}}^{N-1}}
\newcommand{\SDCN}{\S_{\delta\text{-}\sf{cyc}}^N}
\newcommand{\SDPCNPO}{\S_{\delta'\text{-}\sf{cyc}}^{N+1}}
\newcommand{\SDPCN}{\S_{\delta'\text{-}\sf{cyc}}^N}
\newcommand{\DNNMO}{\Delta_{N,N-1}}
\newcommand{\DNNPO}{\Delta_{N,N+1}}
\newcommand{\DNNP}{\Delta_{N,N'}}
\newcommand{\SNn}{S^N_n}
\newcommand{\SNns}{S^N_{n^*}}
\newcommand{\SNPOn}{S^{N+1}_n}
\newcommand{\SNMOn}{S^{N-1}_n}
\newcommand{\SNPn}{\S^{N'}_n}
\newcommand{\SNmn}{S^{N_{\mathsf{max}}}_n}
\newcommand{\SNmnp}{S^{N_{\mathsf{max}}}_{n'}}
\newcommand{\Gns}{\G_{n^*}}
\newcommand{\Uns}{U_{n^*}}
\newcommand{\Vns}{V_{n^*}}
\newcommand{\lfr}{$(L,F)$-ZWR}
\newcommand{\aaw}{\texttt{avg\_avg\_wasted}}
\newcommand{\amw}{\texttt{avg\_max\_wasted}}
\newcommand{\mmw}{\texttt{max\_max\_wasted}}
\newcommand{\aawp}{\texttt{aaw\_percentage}}
\newcommand{\amwp}{\texttt{amw\_percentage}}
\newcommand{\mmwp}{\texttt{mmw\_percentage}}
\newcommand{\fr}{\texttt{fraction}}
\begin{document}
\title{Transition Waste Optimization\\ for Coded Elastic Computing 
\thanks{
Son Hoang Dau and Zahir Tari are with the School of Computing Technologies, RMIT University. Emails: \{sonhoang.dau, zahir.tari\}@rmit.edu.au.
Ryan Gabrys is with the University of California, San Diego.
Email: gabrys@ucsd.edu. 
Yu-Chih Huang is with the Institute of Communications Engineering, National Yang Ming Chiao Tung University. Email: jerryhuang@nycu.edu.tw).
Chen Feng is with the School of Engineering, British Columbia University (Okanagan Campus). Email: chen.feng@ubc.ca. 
Quang-Hung Luu is with the Department of Computing Technologies, School of Science, Computing and Engineering Technologies, Swinburne University of Technology, and also with the Department of Civil Engineering, Monash University. Emails: hluu@swin.edu.au/hung.luu@monash.edu.
Eidah Alzahrani is with Albaha University. Email: ejalzahrani1@bu.edu.sa.
Part of this work was presented at the IEEE International Symposium on Information Theory, 2020.
}
\author{Son Hoang Dau, \emph{Member}, \emph{IEEE}, Ryan Gabrys, \emph{Member}, \emph{IEEE}, Yu-Chih Huang, \emph{Member}, \emph{IEEE}, Chen Feng, \emph{Member}, \emph{IEEE}, Quang-Hung Luu, \emph{Member}, \emph{IEEE}, Eidah Alzahrani, Zahir Tari}
}\vspace{-20pt}
\maketitle 

\begin{abstract}
Distributed computing, in which a resource-intensive task is divided into subtasks and distributed among different machines, plays a key role in solving large-scale problems. 
\emph{Coded computing} is a recently emerging paradigm where redundancy for distributed computing is introduced to alleviate the impact of slow machines (stragglers) on the completion time. 
We investigate coded computing solutions over elastic resources, where the set of available machines may change in the middle of the computation. 
This is motivated by recently available services in the cloud computing industry (e.g., EC2 Spot, Azure Batch) where low-priority virtual machines are offered at a fraction of the price of the on-demand instances but can be preempted on short notice.
Our contributions are three-fold. We first introduce a new concept called \emph{transition waste} that quantifies the number of tasks existing machines must abandon or take over when a machine joins/leaves. We then develop an efficient method to minimize the transition waste for the cyclic task allocation scheme recently proposed in the literature (Yang \emph{et al.} ISIT'19).
Finally, we establish a novel solution based on finite geometry achieving \emph{zero} transition wastes given that the number of active machines varies within a fixed range.  
\end{abstract}
\vspace{-5pt}

\section{Introduction} 
\label{sec:intro}

In the era of Big Data, massive computational tasks, e.g. in large-scale machine learning and data analytics, are often carried out in \emph{distributed systems} like Apache Spark~\cite{Zaharia_etal_2010} and Hadoop~\cite{Hadoop}, which can efficiently process terabytes or even petabytes of data. However, it has been observed in such systems that slow machines, or \emph{stragglers}, which may run 6x-8x slower than a median one, may significantly affect the performance of the whole distributed system~\cite{Ananthanarayanan_etal_2013, Dean_etal_2013, Yadwadkar_etal_2016}. 

\emph{Coded distributed computing}~\cite{Lee_etal_2018, Li_etal_2015, Tandon_etal_2017}, built upon algorithmic fault tolerance~\cite{Huang_etal_1984}, is a recently emerging paradigm where computation redundancy 
is employed to tackle the straggler effect. As a toy example~\cite{Lee_etal_2018}, to perform a matrix-vector multiplication $\bA \bx$, a master machine first partitions the matrix $\bA$ into two equal-sized submatrices $\bA_1$ and $\bA_2$ and then distributes $\bA_1$, $\bA_2$, and $\bA_1+\bA_2$ to three worker machines, respectively. These machines also receive the vector $\bx$ and perform three multiplications $\bA_1\bx$, $\bA_2\bx$, and $(\bA_1+\bA_2)\bx$ in parallel. Clearly, $\bA\bx$ can be recovered by the master from the outcomes of any two workers. Thus, this coded scheme can tolerate one straggler. 
The potential of coded distributed computing has been extensively investigated through a substantial body of work in the literature, e.g.,~\cite{Dutta_etal_2016, Yu_etal_2017, Karakus_etal_2017, Li_etal_2018, Mallick_etal_2018}. Recent breakthroughs have shown that this paradigm works for not only linear/bilinear operations but also general nonlinear operations such as polynomial evaluation~\cite{Yu_etal_2019} or even any function that can be represented by a deep network~\cite{Kosaian_etal_2018, Kosaian_etal_2020}.

Most of the research in the literature of coded distributed computing, however, assumes that the set of available worker machines remains \emph{fixed}. This critical limitation renders current coded computing schemes \emph{inapplicable} in an environment where low-cost elastic resources are readily available.
In fact, major cloud computing providers, very recently, started 
offering spare virtual machines at a price up to 90\% cheaper than that of the on-demand machines, e.g. \emph{Amazon EC2 Spot}~\cite{AmazonSpot} and \emph{Microsoft Azure Batch}~\cite{AzureBatch}, albeit at the cost of low priority in the sense that these machines can be preempted (removed) for a higher-priority customer under a short notice (e.g., two minutes in the case of Amazon Spot). This new development in the cloud computing industry provides customers with an opportunity to have large computing resources at a fraction of the cost of the standard on-demand service. Realizing this opportunity, however, requires the user to develop much more flexible distributed computing paradigms in order to efficiently exploit \emph{elastic resources} where low-cost machines can leave and join at any time during the computation cycle. 

Recently, Yang \emph{et al.}~\cite{Yang_etal_2019} proposed an elegant technique extending coded computing to deal with elastic resources.
Their key idea is to \emph{couple} a cyclic task allocation scheme, which works for any  number of machines, 
with a coded computing scheme to guarantee that a) as long as there are a sufficient number of machines working, the scheme can tolerate stragglers, and b) the workload at each machine is inversely proportional to the number of available machines. In other words, their solution allows an elastic task allocation: when a new machine joins, existing machines can share some of their workload with the newcomer, reducing the number of tasks allocated to them; likewise, when a machine leaves, existing machines must cover extra tasks left over. The elastic coded computing scheme proposed in~\cite{Yang_etal_2019} was evaluated in the multi-tenancy cluster at Microsoft using the Apache REEF Elastic Group Communication framework, and shown to reduce the completion time of matrix-vector multiplication and linear regression by up to 46\% compared to ordinary coded computing schemes. 
The elastic setting was later extended to cover machines with heterogeneous computation speeds and storage capacities in~\cite{WoolseyChenJi-ISIT-2020, WoolseyChenJi-TCOM-2021, WoolseyKliewerChenJi-Globecom-2021}.
A combination with hierarchical coding was also proposed in~\cite{KianiAdikariDraper-ICASSP-2021}, which allows finer task allocations to speed up the completion time. 
 
Relaxing the cyclic task allocation proposed in~\cite{Yang_etal_2019}, we investigate a more general \emph{elastic task allocation} problem, which we believe may find applications not just in coded distributed computing but also in a much broader context where a set of tasks is distributed to an elastic set of participants (e.g., virtual machines), which frequently leave and join. More specifically, we seek to address the following key questions. 
\begin{itemize}
	\item \emph{Task allocation}: given a set of tasks and a set of machines, how do we assign tasks to machines so that all machines are assigned an equal number of tasks (workload balance) and every task is covered by the same number of machines? \hoang{These combinatorial constraints can be easily satisfied}, e.g., by using the cyclic scheme employed in~\cite{Yang_etal_2019} or its generalization. 
	\item \emph{Transition reallocation}: when an elastic event occurs (machines leaving/joining), how do we reallocate the tasks to the new set of machines to minimize the \emph{transition waste}, i.e., the total number of tasks that existing machines have to abandon or take over when one machine joins or leaves, \hoang{minus} the necessary amount? \hoang{As an optimization problem, this is a much more challenging question compared to task allocation and is the focus in our work.}    
\end{itemize}


\begin{figure}[htb!]
	\centering
	\begin{center}
    \subfloat[Cyclic task allocation for five machines~\cite{Yang_etal_2019}. \label{subfig-1:dummy}]{
    \tabcolsep=0.25cm
	\begin{tabular}{|c|c|c|c|c|}
	\hline      
	$S^5_1$ & $S^5_2$ & $S^5_3$ & $S^5_4$ & $S^5_5$\\
    \hline
    $0\to 3$ &  &  & $0\to 3$ & $0\to 3$\\
    \hline
    $4\to 7$ & $4\to 7$ &  &  & $4\to 7$ \\
    \hline
    $8\to 11$ & $8\to 11$ & $8\to 11$ &  & \\
    \hline
     & $12\to 15$ & $12\to 15$ & $12\to 15$ & \\
    \hline
     &  & $16\to 19$ & $16\to 19$ & $16\to 19$\\
    \hline
     \end{tabular}
     }
     \end{center}
     \subfloat[Cyclic task allocation for four machines~\cite{Yang_etal_2019}. The \emph{transition waste} from five to four machines is \emph{12 tasks}. \label{subfig-2:dummy}]{
    \tabcolsep=0.45cm
    \begin{tabular}{|c|c|c|c|}
	\hline      
	$S^4_1$ & $S^4_2$ & $S^4_3$ & $S^4_4$\\
    \hline
    $0\to 4$ &  & $0\to 4$ & $0\to 4$\\
    \hline
    $5\to 9$ & $5\to 9$ &  & $5\to 9$ \\
    \hline
    $10\to 14$ & $10\to 14$ & $10\to 14$ & \\
    \hline
     & $15\to 19$ & $15\to 19$ & $15\to 19$\\
    \hline
     \end{tabular}
     }
     \hspace{10pt}
     \subfloat[Our proposed \emph{shifted} cyclic task allocation for four machines that results in an \emph{optimal transition waste} among all cyclic schemes (\emph{zero} in this case). \label{subfig-2:dummy}]{
     \tabcolsep=0.45cm
    \begin{tabular}{|c|c|c|c|}
	\hline      
	$S^4_1$ & $S^4_2$ & $S^4_3$ & $S^4_4$\\
    \hline
    $17\to 1$ &  & $17\to 1$ & $17\to 1$\\
    \hline
    $2\to 6$ & $2\to 6$ &  & $2\to 6$ \\
    \hline
    $7\to 11$ & $7\to 11$ & $7\to 11$ & \\
    \hline
     & $12\to 16$ & $12\to 16$ & $12\to 16$\\
    \hline
     \end{tabular}
     }
     \caption{Illustration of the sub-optimality of the cyclic task allocation scheme proposed in~\cite{Yang_etal_2019} with respect to the \emph{transition waste} when Machine 5 leaves. Here, we use $a\to b$ to denote the set $\{a,a+1,\ldots,b\}\pmod{F}$, where $F=20$.} 
     \label{fig:toy_example}
\end{figure}

We illustrate in a toy example (Fig.~\ref{fig:toy_example}) the  \emph{transition waste} concept and explain why the cyclic elastic task allocation scheme in~\cite{Yang_etal_2019} is \emph{suboptimal} with respect to this new metric. 
We consider the computation of $\bA\bx$ where $\bA$ can be partitioned into 40 equal-sized sub-matrices $\bA_1,\ldots,\bA_{40}$. We first group these sub-matrices into 20 groups, e.g., $\{\bA_1,\bA_2\}$, $\{\bA_3,\bA_4\}$, and so forth. Then each group is assigned a \emph{task index} 
from $0$ to $19$.
Task 0, for instance, corresponds to the computation of $\{\bA_1\bx,\bA_2\bx\}$. Task 0 is \emph{encoded} into five subtasks: $\bA_1 \bx$, $\bA_2\bx$, $(\bA_1 + \bA_2)\bx$, $(\bA_1 + 2\bA_2)\bx$, and $(\bA_1 + 3\bA_2)\bx$. \emph{A machine taking Task 0 means it computes one of these five subtasks}. Similar to the earlier discussion, any \emph{three} out of five subtasks/machines form a \emph{coded computing group} that can recover Task 0 given one straggler. 

Hence, abstracting away the underlying coded computing scheme, which can be designed \emph{independently} of the task allocation scheme in consideration, given $F=20$ tasks, we require that \emph{each task must be covered by precisely $L=3$ machines}. This requirement can be met by using the cyclic scheme in~\cite{Yang_etal_2019}: each of the $N$ machines is preloaded with a set of $F$ tasks, which is then divided into $N$ equal consecutive subsets of size $F/N$ each, and each machine works on tasks in the union of $L$ consecutive such subsets.
For instance, when $N=5$, Machine 1 works on the set of tasks $S_1^5 = \{0,1,\ldots,11\}=\{0,\ldots,3\}\cup\{4,\ldots,7\}\cup \{8,\ldots,11\}$, Machine 2 works on $S_2^5 = \{4,5,\ldots,15\}$, and so on (Fig.~\ref{fig:toy_example}~(a)). Note that each machine takes 12 tasks and due to the cyclic task allocation scheme, each task is covered by three machines. 

In Fig.~\ref{fig:toy_example}~(b), only four machines are available, each of which takes 15 tasks. As Machine 5 has left, it is necessary now that each of the four available machines must take $3=15-12$ more tasks. Ideally, when the transition from five machines to four machines occurs, each machine continues their existing tasks and works on three new tasks. This is true for Machine~1 because $S^5_1 \subset S^4_1$. However, it is not the case for other machines. For instance, Machine~3 has to abandon \emph{two} tasks (8 and 9) and takes over \emph{five} new tasks $(0\to 4)$. The transition waste at Machine~3 is $(2+5)-3 = 4$ tasks. Note that three is the \emph{necessary} increase in the number of tasks each machine must take and so we subtract that amount. The transition wastes at other machines can be computed in a similar manner. The total transition waste is
\[
(0+3-3) + (1+4-3) + (2+5-3) + (3+6-3) =12 \text{ (tasks)},
\] 
Therefore, sticking to the cyclic allocation scheme of~\cite{Yang_etal_2019}, we waste 12 tasks. However, it turns out that the transition waste can be reduced to \emph{zero} if we use the allocation scheme in Fig.~\ref{fig:toy_example}~(c) instead. In this case, as $S^5_n \subset S^4_n$, the transition wastes at all four machines are zero. The trick is to \emph{shift} the cyclic task allocation by a right amount ($-3$ in this case) to maximize the overlaps between $S^5_n$ and $S^4_n$, $n = 1,\ldots, 4$.

Our main contributions are summarized below.
\begin{itemize}
	\item We first introduce the new concept of \emph{transition waste} of an elastic task allocation scheme, which quantifies the total number of tasks that existing machines must abandon or take over when one machine joins or leaves, less the necessary amount. \emph{A reduction in transition waste implies lower computation and communication costs} (Remark~\ref{rm:TW}).  
	\item We then compute explicitly the transition waste incurred in the cyclic elastic task allocation scheme introduced by Yang \emph{et al.}~\cite{Yang_etal_2019} when machines leave and join (Theorems~\ref{thm:joining},~\ref{thm:leaving}) and propose \emph{shifted} cyclic schemes that minimize the transition waste among all cyclic schemes (Theorems~\ref{thm:shifted_joining},~\ref{thm:shifted_leaving}). The optimal transition waste of a shifted cyclic scheme is, in general, greater than zero. 
	\item Lastly, we show that there exists a \emph{zero-waste} transition when a machine leaves if and only if there exists a \emph{perfect matching} in a certain bipartite graph, using the famous Hall's marriage theorem. Based on this new insight, we construct several novel task allocation schemes based on \emph{finite geometry} that achieve \emph{zero transition wastes} when the number of active machines varies within a fixed range. 
\end{itemize}
While the cyclic schemes are simple to implement and efficient when there are many tasks and many machines, the schemes with zero-waste transitions are more suitable when there are a moderate number of machines and tasks but each task is resource-intensive. We will discuss this further in Sections~\ref{sec:pre}. 

We emphasize that our task allocation schemes are designed \emph{separately} from the underlying coded computing scheme and hence can be applied on top of existing coded computing schemes (see Section~\ref{subsec:coupling}). The readers who are familiar with the \emph{parity declustering} technique in redundant disk arrays (RAID)~\cite{MuntzLui1990, HollandGibson1992, DauJiaJinXiChan2014} may recognize the analogy between a coded computing scheme and a stripe unit and between a task allocation scheme and a data layout~(in the terminology of~\cite{HollandGibson1992}).

The paper is organized as follows. 
The concepts of elastic task allocation and transition waste are defined and discussed in Section~\ref{sec:pre}. Section~\ref{sec:cyclic} is devoted for the cyclic task allocation scheme and our proposed shifted version with optimal transition wastes. We develop elastic task allocation schemes that admit zero transition wastes in Section~\ref{sec:zero}, perform simulations and experiments in Section~\ref{sec:evaluations}, and conclude the paper in Section~\ref{sec:conclusions}. 

\section{Preliminaries}
\label{sec:pre}

\subsection{Elastic Task Allocation Scheme}
\label{subsec:ETAS}

We define in this section the \emph{elastic task allocation scheme}, which generalizes the cyclic scheme originally proposed by Yang \emph{et al.}~\cite{Yang_etal_2019}, and the new concept of the \emph{transition waste}. 
\hoang{Frequently used notations can also be found in Appendix~\ref{app:notations}.}

We henceforth use $N$ for the number of available machines, $F$ for the common number of pre-loaded tasks at each machine, and $L$ as minimum number of available machines so that the scheme still works ($L \leq N$). Each task is represented by a label from $[[F]] \define \{0,1,\ldots,F-1\}$. We assume that all tasks consume an equal amount of resources (storage, memory, CPU). We use $[F]$ to denote the set $\{1,2,\ldots,F\}$ and $[A,B]$ to denote the set $\{A,A+1,\ldots,B\}$. We also use $2^{[[F]]}$ to denote the power set
of the set $\{0,1,\ldots,F-1\}$ and $(2^{[[F]]})^ N= 2^{[[F]]} \times 2^{[[F]]} \times \cdots \times 2^{[[F]]}$ to denote the 
$N$-ary Cartesian power of $2^{[[F]]}$. 

\begin{definition}[Task allocation scheme] 
\label{def:TAS}
An ordered list of $N$ sets $\S^N = (S^N_1,\ldots,S^N_N)\in (2^{[[F]]})^N$, where $\SNn \subset [[F]]$, $n \in [N]$, is referred to as an $(N,L,F)$ \emph{task allocation scheme} (\NT) if it satisfies the following two properties. 
\begin{itemize}
	\item ($L$-Redundancy) each element in $[[F]]$ is included in precisely $L$ sets in $\S^N$, and 
	\item (Load Balancing) $|\SNn| = LF/N$ for all $n \in [N]$. Here we assume that $LF/N \in \bbZ$. 
\end{itemize}
\end{definition} 

Note that we can relax the Load Balancing property and require that $\SNn \in \{\lfloor LF/N \rfloor, \lceil LF/N \rceil\}$ and hence can lift the requirement that $N$ divides $LF$. To simplify the exposition, however, we assume $LF/N \in \bbZ$. In practice, padding of dummy tasks can be employed to achieve this property.
The $L$-Redundancy property is tied to the underlying coded computing scheme (see Section~\ref{subsec:coupling}). 

An \NT~$\S^N = (S^N_1,\ldots,S^N_N)$ can also be represented by its \emph{incidence matrix} $\bB = (b_{f,n})_{F\times N}$, where $b_{f,n} = 1$ if and only if $f \in \SNn$. The rows and columns of $\bB$ represent tasks and machines, respectively. Clearly, $\bB$ has row weight $L$ and column weight $LF/N$. In other words, each row of $\bB$ has precisely $L$ ones while each column has precisely $LF/N$ ones. Thus, a TAS simply corresponds to a binary matrix with constant row and column weights. 

\begin{example} 
\label{ex:toy2}
For $N = 3, L = 2, F = 6$, the list of sets 
\[
\S^3 = (\{0,1,2,3\}, \{2,3,4,5\}, \{4,5,0,1\}) 
\]
is a $(3,2,6)$-TAS as each member set has size $4 = 2\times 6/3$ and each element $f \in [[6]] = \{0,1,\ldots,5\}$ belongs to precisely $L = 2$ such sets. 
The incident matrix of $\S^3$, given by \eqref{mat:S3}, has column weight four and row weight two.
\begin{equation}
\label{mat:S3}
\begin{blockarray}{cccc}
& \text{Machine 1} & \text{Machine 2} & \text{Machine 3}\\
& S^3_1 & S^3_2 & S^3_3\\
\begin{block}{c(ccc)}
\text{Task 0} & 1 & 0 & 1\\
\text{Task 1} & 1 & 0 & 1\\
\text{Task 2} & 1 & 1 & 0\\
\text{Task 3} & 1 & 1 & 0\\
\text{Task 4} & 0 & 1 & 1\\
\text{Task 5} & 0 & 1 & 1\\ 
\end{block}
\end{blockarray}\ .
\end{equation}
\end{example} 
\vspace{-5pt}

When a machine leaves or joins, we need to reallocate tasks to a new set of machines. Thus, we must extend the notion of a task allocation scheme (TAS) to that of an \emph{elastic} task allocation scheme (ETAS). 
We explain in Section~\ref{subsec:coupling} how to couple an ETAS and a coded computing scheme to achieve a coded elastic computing scheme that tolerates stragglers. 
\hoang{Note that the parameter $L$ in Definition~\ref{def:ETAS} is specified by the underlying coded computing scheme and considered fixed in an elastic task allocation scheme while the number of machines $N$ and the number of tasks $F$ can vary.}

\begin{definition}[Elastic task allocation] 
\label{def:ETAS}
A pair $(\SNz,\T)$ is referred to as an $(N_0,L,F)$ elastic task allocation scheme (\NnET) if $\SNz$ is the initial \NnT~and $\T$ is an algorithm that reallocates tasks when machines leave and join so that the new scheme remains a TAS. More specifically,
\[
\T \colon (2^{[[F]]})^N \times \{-1,1\} \times [N] \to (2^{[[F]]})^{N-1}\cup (2^{[[F]]})^{N+1}
\]
takes as input an \NT~$\SN$, where $L \leq N \leq LF$, a variable $b \in \{-1,1\}$, which represents the elastic event of one machine leaving ($b=-1$) or joining ($b=1$), and an index $n^* \in [N]$, which indicates the index of the machine that leaves when $b = -1$ (when $b = 1$, $n^*$ is ignored). Moreover, $\T$ returns an output $\SNP$, which is another \NPT, where $N' = N+b$. In other words, moving from a set of $N$ machines to a new set of $N'=N+b$ machines, $\T$ updates the list of task sets $\SN$ to obtain $\SNP$, which remains a TAS. 
\end{definition} 

A few remarks are in order. 
\emph{First}, we make a simplifying assumption in Definition~\ref{def:ETAS} that each elastic event corresponds to \emph{one} machine leaving and joining only. In other words, we assume that machines leave and join one after another and not at the same time.
\hoang{This not only allows us to avoid complex mathematical notations  but also covers the case of multiple machines leaving/joining: we can treat that case as a series of independent transitions in each of which only one machine leaves or joins.}
\emph{Second}, while in general we allow $N$ to take any value in the range $[L,LF]$, it is more practical to limit $N$ within a fixed range $[L,\Nm]$. Moreover, we often assume that $F$ is divisible by any number within this range. These assumptions allow us to achieve concrete results and are also practically reasonable. For instance, we can use padding, i.e., adding dummy tasks, to make $F$ satisfy the aforementioned property. 
\emph{Third}, when Machine $n^*\in[N]$ leaves, we index the remaining machines by the set $[N-1] = \{1,\ldots,N-1\}$. However, when comparing with the previous TAS, we often use $\{1,\ldots,n^*-1,n^*+1,\ldots,N\}$, instead of $[N-1]$, so that the same machine is given the same index in the previous and in the current task allocation schemes.  

\textbf{Cyclic elastic task allocation scheme~\cite{Yang_etal_2019}.} A simple way to construct an ETAS is to let $\T$ depend only on the number of machines and not on the current TAS. More specifically, whenever there are $N$ machines available as the result of an elastic event, we always use a fixed \NT
\begin{equation}
\label{eq:cyclic}
\begin{split}
\SCN &= (S^N_1,\ldots,S^N_N),\\
\SNn &= \left[(n-1)\frac{F}{N}, (n-1)\frac{F}{N} + \frac{LF}{N}-1\right] \pmod F,
\end{split}
\end{equation}
for every $n \in [N]$, where $[A,B] \pmod F$ is obtained from $[A,B]$ by applying the modulo operation on every element of this set. We also assume here that $F/N \in \bbZ$. 

It is straightforward to verify that each $\SCN$ satisfies the Load Balancing and the $L$-Redundancy properties, and therefore, is indeed an $(N,L,F)$-TAS. The reallocation algorithm is trivial: $\T(\SCN,1) = \SCNPO$ and $\T(\SCN,-1,n^*) = \SCNMO$ for every $n^* \in [N]$. Fig.~\ref{fig:toy_example}(a) and (b) illustrate the cyclic ETAS when $N=5$ and when $N=4$, $L = 3$, $F=20$, and $n^*=5$. 

\subsection{Transition Waste} 
\label{subsec:tw}

We now define the \emph{transition waste} during an elastic event when one machine leaves or joins and demonstrate this new concept via a few examples. 

\begin{definition}[Necessary load change]
\label{def:nlc} 
For a transition from an \NT~$\SN$ to another \NPT~$\SNP$, $\DNNP \define |LF/N-LF/N'|$ is referred to as the necessary load change. 
When $N' = N\pm 1$, we have $\Delta_{N,N\pm1} = LF/(N(N\pm1))$.
\end{definition} 

The \emph{necessary load change}, $\DNNP = ||\SNn|-|\SNPn||$, reflects the necessary increase or decrease in the number of tasks each machine must take when one machine leaves or joins, respectively. For instance, when $L = 3, F = 20$, if there are $N = 5$ machines, the Load-Balancing property requires that each machine runs $LF/N = 12$ tasks, while if there are $N' = 4$ machines due to the removal of one, then each machine runs $LF/N' = 15$ tasks. Therefore, each of the four machines has to take $3 = 15-12$ more tasks to react to this event. The necessary load change is \emph{three} in this case. 

\begin{definition}[Transition waste for one machine] 
\label{def:TW_one}
The transition waste incurred at Machine $n$ when transitioning from a set of tasks $\SNn$ to another set of tasks $\SNPn$ is defined as
\[
W(\SNn \to \SNPn) = |\SNn \Delta \SNPn|-\DNNP,
\]
where $\DNNP$ is the necessary load change (Definition~\ref{def:nlc}) and $A\Delta B$ denotes the symmetric difference between $A$ and $B$. We also use $W_{n^*}(\SNn \to \SNPn)$ for the case Machine $n^*$ leaves. 
\end{definition}  

\begin{remark}
In Definition~\ref{def:TW_one}, we assume that each existing machine keeps its current index in the new TAS, that is, $\SNn$ and $\SNPn$ refer to the task sets assigned to the same machine.
\end{remark}

\begin{remark}
\label{rm:TW}
Note that
$|\SNn \Delta S^{N'}_n| = |\SNn \setminus S^{N'}_n| + |S^{N'}_n \setminus \SNn|$
corresponds to the number of scheduled tasks Machine $n$ has to abandon (tasks in $\SNn$ but not in $S^{N'}_n$) and take anew (tasks in $S^{N'}_n$ but not in $\SNn$). Thus, the transition waste $W(\SNn \to \SNPn)$ in Definition~\ref{def:TW_one} measures the maximum number of tasks wasted at Machine $n$ when another machines leaves or joins. As some tasks may have been already completed before the transition, one should abandon as few existing tasks as possible. Likewise, taking on fewer new tasks will decrease the downloading traffic (if the protocol requires new tasks to be downloaded). Thus, a \emph{low-waste transition} saves computation and network resources and hence \emph{reduces the completion time}. 
\end{remark}

The transition waste of a TAS is defined as the total transition wastes at all machines.

\begin{definition}[Transition waste]
\label{def:TW}
When Machine $N+1$ joins, the transition waste of the transition from an \NT~$\SN$ to an \NPOT~$\SNPO$ is defined as
\[
W(\SN\to \SNPO) \define \sum_{n\in [N]} W(\SNn\to \SNPOn).
\]
When Machine $n^*$ leaves, the transition waste of the transition from an \NT~$\SN$ to an \NMOT~$\SNMO$ is defined as\vspace{-5pt}
\[
W_{n^*}(\SN\to \SNMO) \define \sum_{n \in [N]\setminus \{n^*\}} W_{n^*}(\SNn\to \SNMOn).
\]
Here, $W(\SNn\to \SNPOn)$ and $W_{n^*}(\SNn\to \SNMOn)$ denote the transition waste incurred at Machine $n$ (Definition~\ref{def:TW_one}). 
\end{definition} 

We demonstrated in Fig.~\ref{fig:toy_example}(a), (b), (c)) in Section~\ref{sec:intro} two different transitions from a $(5, 3, 20)$-TAS to a $(4, 3, 20)$-TAS, i.e., one machine removed. The first transition has a transition waste of 12 tasks, while the second one has a zero waste. Another example, built upon Example~\ref{ex:toy2}, is given below. 

\begin{example} 
\label{ex:toy3}
Let $L = 2,F = 6,N = 3,N' = 4$. We can verify that  
$
\S^3 = (\{0,1,2,3\}, \{2,3,4,5\}, \{4,5,0,1\})
$
is a $(3,2,6)$-TAS and 
$
\S^4 = (\{0,1,2\}, \{0,1,2\}, \{3,4,5\}, \{3,4,5\})
$ 
is a $(4,2,6)$-TAS. The necessary load change when going from three to four machines, and vice versa, is $\Delta_{3,4} = |4-3| = 1$. The waste when transitioning from $\S^3$ to $\S^4$ is computed as follows. \vspace{-5pt}
\[
\begin{split}
W(\S^3\to \S^4) &= \sum_{n=1}^3 (|S^3_n \Delta S^4_n|-\Delta_{3,4})\\
&= (1-1) + (5-1) + (3-1) = 6.
\end{split}
\] 
\end{example} 

\subsection{Coupling an Elastic Task Allocation Scheme and a Coded Computing Scheme}
\label{subsec:coupling}

We now explain how to couple an elastic task allocation scheme (ETAS) and a coded computing scheme (CCS) to achieve a coded elastic computing scheme, which allows 
\begin{itemize}
\item \emph{straggler tolerance}: at most $E$ slow machines do not affect the completion time of the system,
\item \emph{load balancing}: every available machine is assigned the same workload,
\item \emph{elasticity}: the workload of available machines can be flexibly adjusted when machines leave and join. 
\end{itemize}

The general method is to first partition the problem instance into $F$ \emph{independent} sub-instances and then apply a CCS to \emph{each} sub-instance. Each task~$f \in [[F]]$ refers to the computation task performed over the $f$th sub-instance. 
Suppose that throughout the computation the number of available machines varies from $L$ to $\Nm$. 
For each task, a CCS generates $\Nm$ sub-tasks, which are distributed to maximum $\Nm$ machines so that the completion of any $L-E$ sub-tasks leads to the completion of the task ($L-E$ is referred to as the \emph{recovery threshold}). 
Each of the $N$ available machines must be loaded with the corresponding sub-tasks of \emph{all} $F$ sub-instances so as to be ready to work on any new tasks when machines leave or join. However, each machine only works on the sub-tasks of the tasks assigned to it by the TAS.
More specifically, if an \NT~$\SN = \{S^N_1,\ldots,S^N_n\}$ is used then Machine $n$ only works on tasks indexed by $S^N_n$.

The $L$-Redundancy of the TAS guarantees that any task $f$ is worked on by precisely $L$ different machines among $N$. 
As the CCS allows the recovery of Task~$f$ from any $L-E$ outputs, the coded elastic computing scheme, which couples a TAS and a CCS, can tolerate $E$ stragglers. 
The Load Balancing property of the TAS guarantees that every available machine is assigned the same workload.
When a machine joins or leaves, a new TAS constructed by the transition algorithm $\T$ of the ETAS is applied, which preserves the straggler tolerance and the load balancing property. 
We discuss below how to define the tasks for the \textit{matrix-vector multiplication} problem. For other related problems such as matrix-matrix multiplication, linear regression, and multivariate polynomial evaluation please refer to Appendix~\ref{app:coupling}.   

\textbf{Matrix-Vector Multiplication.}
We aim to compute $\bA \bx$, where $\bA$ is a matrix and $\bx$ is a vector of matching dimension, in a way that tolerates any $E$ stragglers $(0 \leq E < L)$, and with a varied number of available machines $N$ $(L \leq N \leq \Nm)$.

Assuming that the number of rows of $\bA$ is divisible by $F$ (padding if necessary), we partition $\bA$ row-wise into $F$ equal-sized sub-matrices $\bA_0,\bA_1,\ldots,\bA_{F-1}$.
The pair $(\bA_f,\bx)$ forms the $f$th sub-instance of the original instance $(\bA,\bx)$ and the computation of $\bA_f\bx$ is referred to as Task~$f$.
A known CCS for matrix-vector multiplication (e.g.,~\cite{Lee_etal_2018}) can then be used to generate $\Nm$ sub-tasks for each Task~$f$, each of which is then distributed to the corresponding machine (machines joining later download later). 
Clearly, the completion of all tasks $f \in [[F]]$ gives us the desired product $\bA\bx$.

\subsection{Storage, Communication, and Computation Overhead of an ETAS} 
\label{subsec:costETAS}

As proposed in~\cite{Yang_etal_2019}, each machine stores all $F$ tasks but only runs a subset of those tasks based on the specific allocation. In this way, when switching to a new TAS, each existing machine doesn't have to download new data.
When coupling with a coded computing scheme (see Section~\ref{subsec:coupling}), each machine actually stores only a $1/(L-E)$-fraction of the input data, e.g., the matrix $\bA$ if we are computing $\bA\bx$, where $E < L$ is the number of stragglers the scheme can tolerate. 

Every machine joining the system has to download its portion of data once, which constitutes the most costly, but necessary, communication overhead of the system.
From a practical perspective, letting a machine \textit{joining} in the middle of the computation might be costly as it must download its allocated tasks from existing servers or from the master (decoding/re-encoding may even be needed). Machines leaving, on the other hand, would not cause any issue in terms of communication overhead because the active ones already stored all needed data and are ready to transition to any new sets of tasks. 
The communication between a master machine, which coordinates the task allocation, and the worker machines, is negligible. 

The master has to run an algorithm to find a new TAS whenever a machine leaves or joins. If a cyclic or a shifted cyclic ETAS (see Section~\ref{subsec:TW_shifted_cyclic}) is used, the computation overhead is negligible. If a zero-waste transition (see Section~\ref{sec:zero}) is insisted, the complexity of the search is polynomial in $N$, $L$, and $F$ (basically, it runs a network flow algorithm). \hoang{A zero-waste transition will be particularly beneficial when there are a moderate number of tasks while each task is resource-intensive. 
In that case, the benefit of a zero-waste transition will offset the time spent for finding one. Note that we have total control of the number of tasks $F$ when defining tasks (only requiring that $F$ satisfies some divisibility condition). For example, when computing the matrix-matrix multiplication $\bA \bB$  (see Appendix~\ref{app:coupling}), we can partition both $\bA$ and $\bB$ into a desirable number $F$ of sub-matrices $\bA_f$ and $\bB_f$ (row-wise for $\bA$ and column-wise for $\bB$). Then, Task $f$ corresponds to $\bA_f\bB_f$, $f=0,1,\ldots,F-1$.}

\section{Shifted Cyclic Elastic Task Allocation Schemes with Optimal Transition Wastes}
\label{sec:cyclic}

We first compute explicitly the transition waste of the cyclic elastic task allocation scheme introduced by Yang \emph{et al.}~\cite{Yang_etal_2019} (Theorems~\ref{thm:joining},~\ref{thm:leaving}) and then propose a shifted cyclic scheme that achieves the optimal transition waste among all such cyclic schemes (Theorems~\ref{thm:shifted_joining},~\ref{thm:shifted_leaving},~\ref{thm:optimal}). We assume that the number of machines $N$ lies in a predetermined interval $[L,\Nm]$ and $N(N+1)$ divides $F$ for every $L \leq N < \Nm$. 
\subsection{Transition Waste of the Cyclic Elastic Task Allocation}
\label{subsec:TW_cyclic}

The following lemma is useful in determining the symmetric difference between two sets in $[[F]]$. 

\begin{lemma} 
\label{lem:sym}
Let $S = [a,b]\pmod F$ and $T = [c,d]\pmod F$. Assume that $0 \leq a \leq c < F$, and moreover, $0 < |S| < F$ and $0 < |T| < F$. The following statements hold. 
\begin{itemize}
	\item[(a)] If $c-a < |S| < (c-a)+|T| < F$ then
	\[
	|S \Delta T| = 2(c-a) + |T|-|S|.
	\] 
	\item[(b)] If $|S| \geq (c-a) + |T|$ then $T \subset S$ and
	\[
	|S \Delta T| = |S|-|T|.
	\]
\end{itemize}
\end{lemma}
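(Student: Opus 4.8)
The plan is to strip the wrap-around from both intervals by a single rotation, after which the whole statement reduces to elementary arithmetic on ordinary integer intervals. First I would observe that for any $t$ the map $x \mapsto x - t \pmod F$ is a bijection of $\{0,1,\ldots,F-1\}$ that sends $[a,b]\pmod F$ to $[a-t,b-t]\pmod F$; since a bijection commutes with symmetric difference, $|S\Delta T|$ is invariant under it. Taking $t = a$, I may therefore assume $a = 0$. Then $S = [0,|S|-1]$ is a genuine (non-wrapping) interval because $0 < |S| < F$, and $T = [g, g+|T|-1]\pmod F$ where $g = c-a$ satisfies $0 \le g < F$ — this is exactly what the hypothesis $0 \le a \le c < F$ buys us.

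For part (b), the hypothesis $|S| \ge g + |T|$ gives $g + |T| - 1 \le |S| - 1 < F$, so $T$ does not wrap either: $T = \{g, g+1, \ldots, g+|T|-1\} \subseteq \{0,1,\ldots,|S|-1\} = S$. Hence $T \subset S$, and $|S\Delta T| = |S \setminus T| = |S| - |T|$, which is the claim (and is visibly independent of $a$, $c$, as it must be).

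For part (a), the hypotheses $g < |S| < g+|T| < F$ again force $g + |T| - 1 < F$, so $T$ is the ordinary interval $[g, g+|T|-1]$. Combining $g \le |S|-1$ (from $g < |S|$) with $|S|-1 < g + |T| - 1$ (from $|S| < g+|T|$) shows that $S \cap T = [g, |S|-1]$, a set of size $|S| - g$. Therefore $|S \Delta T| = |S| + |T| - 2|S\cap T| = |S| + |T| - 2(|S|-g) = 2g + |T| - |S| = 2(c-a) + |T| - |S|$.

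The argument is almost entirely mechanical; the one point that genuinely needs care is verifying that the two hypothesis regimes really do force $T$ (and, after the rotation, $S$) to be non-wrapping, so that all set operations collapse to comparisons of interval endpoints, together with keeping the strict versus non-strict inequalities straight at those endpoints — since that is precisely what separates the overlapping configuration of (a) from the containment of (b).
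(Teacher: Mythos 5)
Your proof is correct and follows essentially the same approach as the paper's: both arguments pin down the cyclic order of the four endpoints from the hypotheses and then count. Your rotation-by-$a$ normalization is just a cleaner formalization of the paper's informal ``travel clockwise along the circle starting from $a$'' argument, and your use of $|S\Delta T| = |S|+|T|-2|S\cap T|$ versus the paper's direct computation of $|S\setminus T|$ and $|T\setminus S|$ is an immaterial difference.
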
 
\begin{proof} \textbf{(a)}
Suppose that $c-a <|S| < (c-a)+|T| < F$. If we travel along the circle of integers mod $F$ (see Fig.~\ref{fig:circle} (a)) clockwise from $a$, we first see $c$, then $b \pmod F$ (because $c-a < |S|$), then $d \pmod F$ (because $|S| < (c-a)+|T|$), before we reach $a$ again (because $(c-a)+|T| < F$). Therefore,    
\[
\begin{split}
|S \Delta T| &= |S \setminus T| + |T \setminus S|\\ 
&= (c-a) + (|T|-(|S|-(c-a)))\\
&= 2(c-a) + |T|-|S|.
\end{split}
\] 
\vspace{-5pt}
\begin{figure}[!htb]
\centering
\includegraphics[scale=0.6]{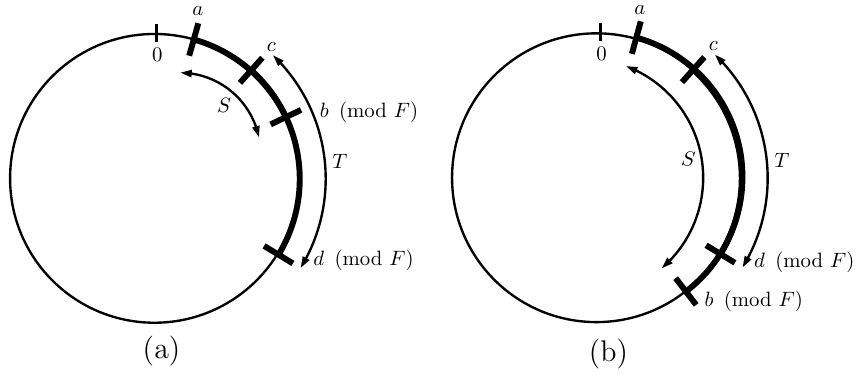}
\caption{Illustrations of the two sets $S = [a,b]\pmod F$ and $T = [c,d]\pmod F$ on the circle of integers mod $F$.}
\label{fig:circle}
\end{figure}

\textbf{(b)} Suppose that $|S| \geq (c-a)+|T|$. This clearly implies that $T \subset S$ and hence $|S\Delta T| = |S|-|T|$ (see Fig.~\ref{fig:circle} (b)). 
\end{proof} 

Lemma~\ref{lem:zero_trivial} identifies the case where the zero waste is achieved, which is obvious by the definition of the transition waste.

\begin{lemma} 
\label{lem:zero_trivial}
The transition waste incurred at Machine $n$ when transitioning from a set of tasks $\SNn$ to another set of tasks $\SNPn$ is zero if and only if $\SNn \subset \SNPn$ or $\SNn \supset \SNPn$. 
\end{lemma}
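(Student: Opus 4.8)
The plan is to unwind the definitions and reduce the claim to an elementary set-theoretic identity. Write $A = \SNn$ and $B = \SNPn$. By Definition~\ref{def:nlc}, the necessary load change $\DNNP = |LF/N - LF/N'|$ equals $\bigl||A|-|B|\bigr|$, since the Load Balancing property forces $|A| = LF/N$ and $|B| = LF/N'$. Hence, by Definition~\ref{def:TW_one}, the transition waste at Machine $n$ is
\[
W(A \to B) = |A \Delta B| - \bigl||A|-|B|\bigr|,
\]
and the lemma asserts this quantity is zero if and only if one of $A, B$ contains the other.

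The first step is to rewrite both terms in terms of the two one-sided differences $A \setminus B$ and $B \setminus A$. Using $|A| = |A\setminus B| + |A\cap B|$ and $|B| = |B\setminus A| + |A\cap B|$, I get $\bigl||A|-|B|\bigr| = \bigl||A\setminus B| - |B\setminus A|\bigr|$, while $|A\Delta B| = |A\setminus B| + |B\setminus A|$ by definition of symmetric difference. Substituting,
\[
W(A\to B) = |A\setminus B| + |B\setminus A| - \bigl||A\setminus B| - |B\setminus A|\bigr| = 2\min\bigl(|A\setminus B|,\, |B\setminus A|\bigr),
\]
where the last equality is the identity $x + y - |x-y| = 2\min(x,y)$ for nonnegative reals.

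The conclusion is then immediate: $W(A\to B) = 0$ if and only if $\min(|A\setminus B|, |B\setminus A|) = 0$, i.e. if and only if $A\setminus B = \emptyset$ or $B\setminus A = \emptyset$, which is exactly the statement that $A \subseteq B$ or $B \subseteq A$. Translating back, $\SNn \subseteq \SNPn$ or $\SNn \supseteq \SNPn$, as claimed. I do not anticipate any genuine obstacle here; the only point requiring a moment of care is the bookkeeping that turns $\bigl||A|-|B|\bigr|$ into $\bigl||A\setminus B|-|B\setminus A|\bigr|$ via the shared term $|A\cap B|$, after which the argument is a one-line inequality manipulation. (As an alternative that avoids even that, one can split into the cases $|A|\ge|B|$ and $|A|<|B|$ and use $|A\Delta B| = |A|+|B|-2|A\cap B|$ directly; the case $|A|\ge|B|$ gives $W = 2(|B| - |A\cap B|) = 0 \iff B\subseteq A$, and symmetrically for the other case.)
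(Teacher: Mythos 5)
Your proof is correct; the paper states this lemma without proof (dismissing it as ``obvious by the definition of the transition waste''), and your derivation of $W(A\to B) = 2\min\bigl(|A\setminus B|,\,|B\setminus A|\bigr)$ is precisely the natural unwinding of the definitions that the paper leaves implicit. The only cosmetic remark is that since $N \neq N'$ forces $|A| \neq |B|$, the containment in the conclusion is automatically proper, so your use of $\subseteq$ versus the paper's $\subset$ makes no difference.
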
 

In the next corollary, we show that when there are $N = L+1$ machines and one machine leaves or when there are $N = L$ machines and one machine joins, the transition waste is trivially zero, no matter which TASs the system are employing.

\begin{corollary} 
The transition waste when transitioning from an $(L,L,F)$-TAS to an $(L+1,L,F)$-TAS and vice versa are zero. 
\end{corollary}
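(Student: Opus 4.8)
The plan is to exploit the fact that an $(L,L,F)$-TAS is essentially trivial, so that both transitions reduce immediately to Lemma~\ref{lem:zero_trivial}. First I would observe that in any $(L,L,F)$-TAS $\S^L=(S^L_1,\ldots,S^L_L)$, the $L$-Redundancy property forces each task of $[[F]]$ to lie in precisely $L$ of the $L$ sets $S^L_1,\ldots,S^L_L$, hence in \emph{all} of them; therefore $S^L_n=[[F]]$ for every $n\in[L]$ (consistent with Load Balancing, since $LF/L=F=|[[F]]|$). In particular the $(L,L,F)$-TAS is unique.

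Next, for the transition in which Machine $L+1$ joins, I would take an arbitrary $(L+1,L,F)$-TAS $\S^{L+1}$ and note that for each existing machine $n\in[L]$ we have $S^{L+1}_n\subseteq[[F]]=S^L_n$, so by Lemma~\ref{lem:zero_trivial} the transition waste at Machine $n$ is zero. Summing over $n\in[L]$ as in Definition~\ref{def:TW} — recall the newly joined Machine $L+1$ does not contribute to that sum — gives $W(\S^L\to\S^{L+1})=0$. For the reverse transition, where Machine $n^*$ leaves and we move from an $(L+1,L,F)$-TAS $\S^{L+1}$ back to the (unique) $(L,L,F)$-TAS, the same inclusion $S^{L+1}_n\subseteq[[F]]=S^L_n$ holds for each surviving machine $n\in[L+1]\setminus\{n^*\}$, so Lemma~\ref{lem:zero_trivial} again yields zero waste at each such machine, and summing over the $L$ survivors gives $W_{n^*}(\S^{L+1}\to\S^L)=0$.

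There is no genuine obstacle here: the only point that needs to be stated carefully is the uniqueness/triviality of the $(L,L,F)$-TAS (every machine holding every task), after which the corollary is an immediate consequence of Lemma~\ref{lem:zero_trivial} and the definition of transition waste.
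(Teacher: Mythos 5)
Your proposal is correct and matches the paper's own argument: both rest on the observation that $L$-Redundancy forces $S^L_n=[[F]]$ for every $n\in[L]$, so each surviving machine's new and old task sets are nested and Lemma~\ref{lem:zero_trivial} gives zero waste at every machine. You simply spell out the uniqueness of the $(L,L,F)$-TAS and the summation over machines more explicitly than the paper does.
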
 
\begin{proof} 
Note that for an $(L,L,F)$-TAS $\S^L = (S^L_1,\ldots,S^L_L)$, we have $S^L_n = [[F]]$ for all $n \in [L]$. Therefore, $\SNn \supset \SNPn$. By Lemma~\ref{lem:zero_trivial}, the corollary follows.    
\end{proof} 

We henceforth assume that $N > L$ when one machine joins and $N > L+1$ when one machine leaves. 
First, we consider the case of one machine \textit{joining}. Theorem~\ref{thm:joining} establishes the transition waste of the cyclic task allocation scheme. 

\begin{theorem} 
\label{thm:joining}
The transition waste when transitioning from a cyclic \NT~$\SCN$ to a cyclic \NPOT~$\SCNPO$ (defined in \eqref{eq:cyclic}) is given below (assuming $N > L$).
\[
W(\SCN \to \SCNPO) = 
\dfrac{N-1}{N+1}F.
\]
\end{theorem}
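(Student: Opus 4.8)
The plan is to reduce everything to a per-machine computation via Lemma~\ref{lem:sym} and then sum. First I would record the elementary facts: the necessary load change is $\DNNPO = LF/N - LF/(N+1) = LF/\big(N(N+1)\big)$, and under the standing assumption $N(N+1)\mid F$ the quantities $F/N$, $F/(N+1)$, $LF/N$, $LF/(N+1)$ and $LF/\big(N(N+1)\big)$ are all integers, so every interval mod $F$ below has integer endpoints and integer size. Machine $N+1$ is newly joined and contributes nothing, so it remains to evaluate $W(\SNn \to \SNPOn)$ for each $n \in [N]$ and add up, as in Definition~\ref{def:TW}.

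Fix $n \in [N]$. I would write $\SNn = [a',\,b']\pmod F$ with $a' = (n-1)F/N$ and $\SNPOn = [a,\,b]\pmod F$ with $a = (n-1)F/(N+1)$, and observe $a \le a'$ since $F/(N+1)\le F/N$. Hence in the notation of Lemma~\ref{lem:sym} I take $S = \SNPOn$ (the ``left'' set, starting at $a$) and $T = \SNn$ (starting at $c := a'$), so that $|S| = LF/(N+1)$, $|T| = LF/N$, and $c - a = (n-1)F/\big(N(N+1)\big)$. Then I would check the hypotheses of part~(a): $0 < |S| < F$ and $0 < |T| < F$ follow from $1 \le L < N$; $c - a < |S|$ rewrites as $(n-1)/N < L$, true since $n-1 \le N-1 < LN$; $|S| < (c-a) + |T|$ is immediate from $|S| = LF/(N+1) < LF/N \le (c-a)+|T|$; and $(c-a)+|T| < F$ rewrites as $n-1 < (N+1)(N-L)$, which holds because $n-1 \le N-1 < N+1 \le (N+1)(N-L)$ using $N-L \ge 1$.

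With the hypotheses in hand, Lemma~\ref{lem:sym}(a) gives
\[
|\SNn \Delta \SNPOn| = 2(c-a) + |T| - |S| = \frac{2(n-1)F}{N(N+1)} + \frac{LF}{N(N+1)},
\]
so by Definition~\ref{def:TW_one} the waste at machine $n$ is
\[
W(\SNn \to \SNPOn) = |\SNn \Delta \SNPOn| - \DNNPO = \frac{2(n-1)F}{N(N+1)}.
\]
Summing over $n \in [N]$ with $\sum_{n=1}^N (n-1) = \binom{N}{2} = N(N-1)/2$ then yields $W(\SCN \to \SCNPO) = \tfrac{2F}{N(N+1)}\cdot \tfrac{N(N-1)}{2} = \tfrac{N-1}{N+1}F$.

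The computation is essentially mechanical once the interval $\SNPOn$ is (correctly) cast as the ``left'' set $S$ in Lemma~\ref{lem:sym}. The one genuine care-point, and the step I would be most careful about, is the verification that \emph{every} machine $n \in [N]$ falls into case~(a) rather than the containment case~(b) or a wrap-around situation: this is exactly the inequality $(c-a)+|T| < F$, i.e. $n-1 < (N+1)(N-L)$, which is where the standing hypothesis $N > L$ is used and which guarantees uniformity of the per-machine formula.
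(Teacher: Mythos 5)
Your proposal is correct and follows essentially the same route as the paper's own proof: cast $\SNPOn$ as the set $S$ and $\SNn$ as the set $T$ in Lemma~\ref{lem:sym}, verify that case~(a) applies uniformly for all $n \in [N]$, obtain the per-machine waste $\frac{2(n-1)F}{N(N+1)}$, and sum. Your verification of the hypothesis $(c-a)+|T| < F$ (equivalently $n-1 < (N+1)(N-L)$) is in fact spelled out slightly more explicitly than in the paper, but the argument is the same.
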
  
\begin{proof} 
We prove this theorem by performing a direct computation of the symmetric difference of the sets of tasks allocated before and after the transition.
Suppose Machine $N+1$ joins the computation.
According to \eqref{eq:cyclic}, we have
\[
\SCN = (S^N_1,\ldots,S^N_N) \text{ and }
\SCNPO = (S^{N+1}_1,\ldots,S^{N+1}_{N+1}),
\]
where for $n \in [N]$,
\[
\SNn = \left[(n-1)\frac{F}{N}, (n-1)\frac{F}{N} + \frac{LF}{N}-1\right] \pmod F,
\]
and for $n \in [N+1]$,
\[
\SNPOn \hspace{-2pt}=\hspace{-2pt} \left[\hspace{-2pt}(n-1)\frac{F}{N+1}, (n\hspace{-2pt}-\hspace{-2pt}1)\frac{F}{\hspace{-2pt}N+\hspace{-2pt}1}\hspace{-2pt} +\hspace{-2pt} \frac{LF}{N\hspace{-2pt}+\hspace{-2pt}1}\hspace{-2pt}-\hspace{-2pt}1\right]\hspace{-7pt} \pmod F. 
\]
We now apply Lemma~\ref{lem:sym} to find the symmetric difference of $\SNn$ and $\SNPOn$ for every $n \in [n]$. We write 
\[
S = \SNPOn = [a,b]\pmod F, \quad T = \SNn=[c,d]\pmod F 
\] 
and can verify that all assumptions of Lemma~\ref{lem:sym}~(a) are satisfied. Indeed, since $N > L$ and $N \geq n \geq 1$, we have 
\[
0 \leq a=(n-1)\frac{F}{N+1} \leq c=(n-1)\frac{F}{N} < F,
\]
\[
0 < |S| = \frac{LF}{N+1} < F,\quad 0 < |T| = \frac{LF}{N} < F, 
\]
\[
c-a = (n-1)\frac{F}{N(N+1)} < \frac{LF}{N+1}=|S|,
\]
\[
|S| = \frac{LF}{N+1} < (n-1)\frac{F}{N(N+1)} + \frac{LF}{N} = (c-a) + |T| < F.
\]
Therefore, by Lemma~\ref{lem:sym}~(a), 
\[
\begin{split}
|\SNn \Delta \SNPOn| &= 2(c-a) + (|\SNn|-|\SNPOn|)\\
&= \frac{2(n-1)F}{N(N+1)} + \frac{LF}{N(N+1)}\\ &= \frac{2(n-1)F}{N(N+1)} + \DNNPO.
\end{split}
\]
Thus, the transition waste incurred at Machine $n$ is
\[
W(\SNn\to \SNPOn) = |\SNn \Delta \SNPOn| - \DNNPO = \frac{2(n-1)F}{N(N+1)}.
\]
Finally, the transition waste when transitioning from $\SCN$ to $\SCNPO$ is
\[
\begin{split}
W(\SCN \to \SCNPO) &= \sum_{n\in [N]}W(\SNn\to \SNPOn)\\
&= \sum_{n\in [N]}\frac{2(n-1)F}{N(N+1)} = \frac{N-1}{N+1}F,
\end{split}
\]
as desired. 
\end{proof} 

We now turn to the slightly more involved case when one machine leaves the computation. When Machine $n^* \in [N]$ leaves, for the ease of notation, we assume the system transitions to the cyclic TAS 
\[
\SCNMO = \{S^{N-1}_1,\ldots,S^{N-1}_{n^*-1},S^{N-1}_{n^*+1},\ldots,S^{N-1}_N\}, 
\]
where for $n < n^*$,
\[
S^{N-1}_n = \bigg[(n-1)\frac{F}{N-1},(n-1)\frac{F}{N-1}+\frac{LF}{N-1}-1\bigg]\hspace{-10pt}\pmod F,
\]  
and for $n > n^*$, 
\[
S^{N-1}_n = \bigg[(n-2)\frac{F}{N-1},(n-2)\frac{F}{N-1}+\frac{LF}{N-1}-1\bigg]\hspace{-10pt}\pmod F. 
\]

\begin{lemma} 
\label{lem:leave_1}
Suppose that Machine $n^* \in [N]$ leaves and the system transitions from a cyclic \NT~$\SCN$ to a cyclic \NMOT~$\SCNMO$ (defined in \eqref{eq:cyclic}). The transition waste incurred at Machine $n$ for $n < n^*$ is (assuming $N > L+1$)
\[
W_{n^*}(\SNn \to \SNMOn) = 
\dfrac{2(n-1)F}{N(N-1)}.
\]
\end{lemma}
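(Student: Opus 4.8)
The plan is to imitate the proof of Theorem~\ref{thm:joining} almost verbatim. The key observation is that when Machine $n^*$ leaves, the machines indexed $n < n^*$ keep their old index, so the only change between the old and the new task set at such a machine is the replacement of $N$ by $N-1$ in the defining formula \eqref{eq:cyclic}. Concretely, I would set
\[
S = \SNn = [a,b]\hspace{-5pt}\pmod F, \quad a = (n-1)\tfrac{F}{N}, \qquad
T = \SNMOn = [c,d]\hspace{-5pt}\pmod F, \quad c = (n-1)\tfrac{F}{N-1},
\]
and note that since $\tfrac{F}{N-1} > \tfrac{F}{N}$ we have $a \le c$, so the ordering hypothesis $0 \le a \le c < F$ of Lemma~\ref{lem:sym} holds; here $c < F$ because $n \le n^*-1 \le N-1$ forces $n-1 \le N-2$.

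Next I would verify the four inequalities needed for Lemma~\ref{lem:sym}~(a). First, $0 < |S| = \tfrac{LF}{N} < F$ and $0 < |T| = \tfrac{LF}{N-1} < F$, the latter using the standing assumption $N > L+1$ (so $L < N-1$). Second, $c-a = \tfrac{(n-1)F}{N(N-1)} < \tfrac{LF}{N} = |S|$, since $n-1 \le N-2 < L(N-1)$ as $L \ge 1$. Third, $|S| < (c-a)+|T|$ is immediate from $\tfrac{LF}{N} < \tfrac{LF}{N-1}$. Fourth, $(c-a)+|T| < F$; after clearing denominators this reads $(n-1) + LN < N(N-1)$, which follows from $L \le N-2$ (equivalently $N > L+1$) together with $n-1 \le N-2$. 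This last inequality is the only genuinely delicate point, and it is exactly where the hypothesis $N > L+1$ is used; I expect it to be the main (though modest) obstacle.

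With these checks in place, Lemma~\ref{lem:sym}~(a) gives
\[
|\SNn \Delta \SNMOn| = 2(c-a) + |T| - |S| = \frac{2(n-1)F}{N(N-1)} + \frac{LF}{N-1} - \frac{LF}{N} = \frac{2(n-1)F}{N(N-1)} + \DNNMO,
\]
because $\tfrac{LF}{N-1} - \tfrac{LF}{N} = \tfrac{LF}{N(N-1)} = \DNNMO$. Subtracting the necessary load change $\DNNMO$ as in Definition~\ref{def:TW_one} then yields
\[
W_{n^*}(\SNn \to \SNMOn) = |\SNn \Delta \SNMOn| - \DNNMO = \frac{2(n-1)F}{N(N-1)},
\]
which is the claimed formula. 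The whole argument is a routine specialization of Lemma~\ref{lem:sym}, the only new ingredients being the bookkeeping on indices ($n < n^*$ means no index shift) and the use of $N > L+1$ to keep $T$ a proper subset of the circle.
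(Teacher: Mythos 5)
Your proposal is correct and follows exactly the route the paper takes: the paper's own proof is the one-line remark that the argument of Theorem~\ref{thm:joining} carries over with Lemma~\ref{lem:sym}~(a) applied to $S=\SNn$ and $T=\SNMOn$, which is precisely your setup, and your verification of the four hypotheses (including the use of $N>L+1$ for $(c-a)+|T|<F$ and $|T|<F$) is accurate.
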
 
\begin{proof} 
The proof is the same as that of Theorem~\ref{thm:joining}, whereby Lemma~\ref{lem:sym}~(a) is applied to $S = \SNn$ and $T = \SNMOn$.
\end{proof} 

\begin{lemma} 
\label{lem:leave_2}
Suppose that Machine $n^* \in [N]$ leaves and the system transitions from a cyclic \NT~$\SCN$ to a cyclic \NMOT~$\SCNMO$ (defined in \eqref{eq:cyclic}). The transition waste incurred at Machine $n$ for $N \geq n \geq n^*+1$ is given below (assuming $N > L+1$).

If $n^*\geq N-L$ then $W_{n^*}(\SNn \to \SNMOn) = 0$. 

If $n^* < N-L < n$ then $W_{n^*}(\SNn \to \SNMOn) = 0$.

If $n^* < n \leq N-L$ then 
\[
W_{n^*}(\SNn \to \SNMOn) = 
\dfrac{2(N-L-n+1)F}{N(N-1)}.
\]
\end{lemma}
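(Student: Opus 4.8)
The plan is to follow the template of Theorem~\ref{thm:joining} and Lemma~\ref{lem:leave_1}: for each index $n$ with $n^*+1 \le n \le N$, I identify the two arcs $\SNn$ and $\SNMOn$ on the circle of integers mod $F$, decide which of the two starts earlier, and then feed them into Lemma~\ref{lem:sym} with the roles of $S$ and $T$ assigned in the correct order.

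First I would locate the starting points. Write $\alpha = (n-2)\tfrac{F}{N-1}$ for the start of $\SNMOn$ and $\gamma = (n-1)\tfrac{F}{N}$ for the start of $\SNn$. A direct computation gives $\gamma - \alpha = \tfrac{(N-n+1)F}{N(N-1)}$, which is positive because $n \le N$; hence $\SNMOn$ starts earlier. So in Lemma~\ref{lem:sym} I take $S = \SNMOn$ (playing the role of $[a,b]$) and $T = \SNn$ (playing the role of $[c,d]$), so that $c-a = \gamma-\alpha = \tfrac{(N-n+1)F}{N(N-1)}$, $|S| = \tfrac{LF}{N-1}$, $|T| = \tfrac{LF}{N}$. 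The standing hypotheses of Lemma~\ref{lem:sym} check out: $0 \le \alpha \le \gamma < F$ because $n \ge n^*+1 \ge 2$ makes $\alpha \ge 0$ and $n \le N$ makes $\gamma < F$, while $0 < |S|, |T| < F$ follows from $N > L+1$. The first inequality needed for Lemma~\ref{lem:sym}(a), namely $c-a < |S|$, i.e. $\tfrac{N-n+1}{N} < L$, is immediate since $N-n+1 \le N-1 < LN$.

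Next I split on the remaining dichotomy of Lemma~\ref{lem:sym}. A short computation shows $|S| < (c-a)+|T|$ is equivalent to $n \le N-L$ (Case (a)), and $|S| \ge (c-a)+|T|$ is equivalent to $n \ge N-L+1$, i.e. $n > N-L$ (Case (b)). In Case (b) --- which, since $n > n^*$, is precisely the union of ``$n^* \ge N-L$'' and ``$n^* < N-L < n$'' --- Lemma~\ref{lem:sym}(b) gives $\SNn \subset \SNMOn$ and $|\SNn \Delta \SNMOn| = |S|-|T| = \DNNMO$, whence $W_{n^*}(\SNn \to \SNMOn) = \DNNMO - \DNNMO = 0$. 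In Case (a), i.e. $n^* < n \le N-L$, I would additionally verify the third hypothesis $(c-a)+|T| < F$, equivalently $N-n+1 < (N-1)(N-L)$, which holds because here $N-L \ge 2$ and $n \ge 2$, so $(N-1)(N-L) \ge 2(N-1) > N-1 \ge N-n+1$; then Lemma~\ref{lem:sym}(a) gives $|\SNn \Delta \SNMOn| = 2(c-a)+|T|-|S| = \tfrac{2(N-n+1)F}{N(N-1)} - \DNNMO$, and subtracting the necessary load change $\DNNMO = \tfrac{LF}{N(N-1)}$ yields $W_{n^*}(\SNn \to \SNMOn) = \tfrac{2(N-n+1)F}{N(N-1)} - 2\DNNMO = \tfrac{2(N-L-n+1)F}{N(N-1)}$, as claimed.

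The only thing requiring genuine care is the bookkeeping: fixing the order of $S$ and $T$ so that $0 \le a \le c$ (i.e. noticing that it is $\SNMOn$, not $\SNn$, that starts earlier), translating each of the three hypotheses of Lemma~\ref{lem:sym} into a clean threshold on $n$ relative to $N-L$, and handling the boundary $n = N-L$ (still Case (a), with the formula degenerating to $\tfrac{2F}{N(N-1)}$) as well as the vacuous index range when $n^* = N$. Everything else is a routine case analysis that plugs directly into Lemma~\ref{lem:sym}, exactly mirroring the one-machine-joining computation in Theorem~\ref{thm:joining}.
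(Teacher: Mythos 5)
Your proposal is correct and follows essentially the same route as the paper's proof: the same assignment $S=\SNMOn$, $T=\SNn$ in Lemma~\ref{lem:sym}, the same dichotomy at $n\le N-L$ versus $n>N-L$ (covering both zero-waste cases via Lemma~\ref{lem:sym}(b)), and the same computation via Lemma~\ref{lem:sym}(a) in the remaining case. Your explicit verification of $(c-a)+|T|<F$ and the boundary/vacuous-range remarks are slightly more careful than the paper's, but the argument is the same.
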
 
\begin{proof} 
\hoang{See Appendix~\ref{app:proof_leave_2}.}
\end{proof}

\hoang{Theorem~\ref{thm:leaving} dertermines the transition waste for the cyclic task allocation scheme when one machine \textit{leaves}.}

\begin{theorem} 
\label{thm:leaving}
The transition waste when Machine $n^* \in [N]$ leaves and the system transitions from a cyclic \NT~$\SCN$ to a cyclic \NMOT~$\SCNMO$ (defined in \eqref{eq:cyclic}) is given as follows (assuming $N > L+1$). 

If $n^*< N-L$, $W_{n^*}(\SCN\to \SCNMO)$ is\vspace{-5pt}
\[
\big((n^*-1)(n^*-2)+(N-L-n^*)(N-L-n^*+1)\big)\frac{F}{N(N-1)}.\vspace{-5pt}
\]

If $n^*\geq N-L$, $W_{n^*}(\SCN\to \SCNMO)$ is\vspace{-5pt}
\[
(n^*-1)(n^*-2)\frac{F}{N(N-1)}.\vspace{-5pt}
\]

Averaging $n^*$ over $[N]$, the averaged transition waste when one machine leaves in the cyclic ETAS is
\begin{multline*}
W_{\sf{avg}}(\SCN\to \SCNMO)\\
=
\bigg(\frac{N-2}{3N} + \frac{(N-L-1)(N-L)(N-L+1)}{3(N-1)N^2} \bigg)F.
\end{multline*}
\end{theorem}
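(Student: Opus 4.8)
The statement follows by directly summing the per-machine transition wastes already pinned down in Lemmas~\ref{lem:leave_1} and~\ref{lem:leave_2}; the plan is essentially careful book-keeping of index ranges together with two elementary summation identities. Fix $n^* \in [N]$ (with $N > L+1$) and split the defining sum
$W_{n^*}(\SCN \to \SCNMO) = \sum_{n \in [N]\setminus\{n^*\}} W_{n^*}(\SNn \to \SNMOn)$
into the part over machines $n < n^*$ and the part over machines $n > n^*$. For $n < n^*$, Lemma~\ref{lem:leave_1} gives $W_{n^*}(\SNn \to \SNMOn) = 2(n-1)F/(N(N-1))$, so this part equals $\frac{2F}{N(N-1)}\sum_{n=1}^{n^*-1}(n-1) = \frac{(n^*-1)(n^*-2)}{N(N-1)}F$, using $\sum_{n=1}^{m}(n-1) = \binom{m}{2}$.

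For the part over $n > n^*$, Lemma~\ref{lem:leave_2} says the summand vanishes whenever $n^* \ge N-L$, and also whenever $n^* < N-L < n$. Hence when $n^* \ge N-L$ this part contributes nothing and we recover the stated formula $(n^*-1)(n^*-2)F/(N(N-1))$. When $n^* < N-L$, the only machines $n > n^*$ with nonzero waste are those with $n^* < n \le N-L$, each contributing $2(N-L-n+1)F/(N(N-1))$; reindexing by $k = N-L-n+1$, which runs over $1,\ldots,N-L-n^*$, turns this into $\frac{2F}{N(N-1)}\sum_{k=1}^{N-L-n^*}k = \frac{(N-L-n^*)(N-L-n^*+1)}{N(N-1)}F$. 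Adding the two parts yields the first displayed formula of the theorem; one can note in passing that at $n^* = N-L$ the two cases agree, since $(N-L-n^*)(N-L-n^*+1) = 0$ there.

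For the averaged waste, I would compute $W_{\sf{avg}}(\SCN\to\SCNMO) = \frac{1}{N}\sum_{n^*=1}^{N} W_{n^*}(\SCN\to\SCNMO)$. The ``triangular'' term $(n^*-1)(n^*-2)$ occurs for every $n^*$, and $\sum_{n^*=1}^{N}(n^*-1)(n^*-2) = 2\binom{N}{3} = N(N-1)(N-2)/3$. The second term $(N-L-n^*)(N-L-n^*+1)$ occurs only for $n^* = 1,\ldots,N-L-1$ (a nonempty range since $N > L+1$), and substituting $j = N-L-n^*$ gives $\sum_{j=1}^{N-L-1}j(j+1) = 2\binom{N-L+1}{3} = (N-L-1)(N-L)(N-L+1)/3$. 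Scaling the sum of these two quantities by $F/(N(N-1))$, dividing by $N$, and separating the two fractions reproduces exactly the claimed expression $\big(\frac{N-2}{3N} + \frac{(N-L-1)(N-L)(N-L+1)}{3(N-1)N^2}\big)F$. There is no genuine obstacle here beyond getting the case boundaries from Lemma~\ref{lem:leave_2} right and invoking the identities for $\sum m(m-1)$ and $\sum j(j+1)$; the mild care point is ensuring the second summation is taken over the correct truncated range $1 \le n^* \le N-L-1$.
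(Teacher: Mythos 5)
Your proposal is correct and follows essentially the same route as the paper: sum the per-machine wastes from Lemmas~\ref{lem:leave_1} and~\ref{lem:leave_2} over $n<n^*$ and $n^*<n\le N-L$ respectively, evaluate the two arithmetic sums, and then average over $n^*$ using the identities for $\sum m(m-1)$ and $\sum j(j+1)$. All index ranges and closed forms check out against the stated formulas.
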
 
\begin{proof} 
If $n^*<N-L$, by Lemma~\ref{lem:leave_1} and Lemma~\ref{lem:leave_2}, we have
\begin{multline*}
W_{n^*}(\SCN\to \SCNMO)\\ = \sum_{n=1}^{n^*-1}\frac{2(n-1)F}{N(N-1)} + \sum_{n=n^*+1}^{N-L}\frac{2(N-L-n+1)F}{N(N-1)} +\hspace{-8pt} \sum_{n=N-L+1}^N 0\\
= \big((n^*-1)(n^*-2)+(N-L-n^*)(N-L-n^*+1)\big)\frac{F}{N(N-1)}.
\end{multline*}
Similarly, when $n^* \geq N-L$, we obtain
\[
\begin{split}
W_{n^*}(\SCN\to \SCNMO) &= \sum_{n=1}^{n^*-1}\frac{2(n-1)F}{N(N-1)} + \sum_{n=n^*+1}^{N} 0\\ &= (n^*-1)(n^*-2)\frac{F}{N(N-1)}.
\end{split}
\]
Averaging $W_{n^*}(\SCN\to \SCNMO)$ over all $n^*\in [N]$ we obtain the stated formula for $W_{\sf{avg}}(\SCN\to \SCNMO)$.
\end{proof} 

\subsection{Shifted Cyclic Scheme Achieving Optimal Transition Waste}
\label{subsec:TW_shifted_cyclic}

From Theorem~\ref{thm:joining} and Theorem~\ref{thm:leaving}, the transition waste incurred across all existing machines in the cyclic ETAS proposed in~\cite{Yang_etal_2019} is $\frac{N-1}{N+1}F \approx F$ or $(\frac{N-2}{3N}+\cdots)F \approx \frac{F}{3}$ tasks when a machine joins or leaves, respectively. 
In this section, we show that by applying a calculated shift, we can significantly reduce the transition waste of the cyclic ETAS. 

As mentioned earlier, the updated TAS used by the cyclic ETAS~\cite{Yang_etal_2019} (see Section~\ref{sec:pre}) only depends on the number of machines available and not on the current TAS, which is one reason that leads to the scheme's poor transition waste. We now generalize the cyclic TAS to \emph{shifted} cyclic TAS in order to allow a more adaptive transition that takes into account the current TAS. 

\begin{definition}[Shifted cyclic task allocation] 
\label{def:shift}
For $\delta \in [[F]]$, a $\delta$-shifted cyclic \NT~is given as follows.
\[
\SDCN = (S^N_1,\ldots,S^N_N),\]
where for $n \in [N]$,
\[
\SNn = \left[(n-1)\frac{F}{N}+\delta, (n-1)\frac{F}{N} + \frac{LF}{N}-1+\delta\right]\hspace{-10pt} \pmod F.
\]
\end{definition}    

Note that there are $F$ different shifted TASs possible corresponding to $F$ different values of $\delta$. When $\delta = 0$, the shifted cyclic TAS reduces to an ordinary cyclic TAS (Section~\ref{sec:pre}).  

Given that the system transitions from an $\delta'$-shifted cyclic \NT~to a $\delta$-shifted cyclic \NPT, the question of interest is to determine $\delta$ that leads to a minimum transition waste. 
We note here that the master machine can always exhaustively examine all possible $F$ shifted schemes and find the one with the smallest waste. However, this will take the master roughly $LF^2 = FN\frac{LF}{N}$ operations, which is time-consuming for large $F$. \emph{Our contribution} is to derive the explicit formula of an \emph{optimal shift}, which results in the \emph{minimum waste} among all $F$ shifted schemes. 
We first tackle the case of one machine joining and then argue that the case of one machine leaving follows by symmetry.  

\hoang{Theorem~\ref{thm:shifted_joining} computes the transition waste for a particular shifted cyclic task allocation scheme when one machine \textit{joins}. The amount of shift $\delta-\delta'$ given in the theorem will be partially proved to be optimal in Theorem~\ref{thm:optimal}.} 

\begin{theorem} 
\label{thm:shifted_joining}
The transition waste when transitioning from a $\delta'$-shifted cyclic \NT~$\SDPCN$ to a $\delta$-shifted cyclic \NPOT~$\SDCNPO$ with $\delta\hspace{-2pt} =\hspace{-2pt} \delta'\hspace{-2pt} +\hspace{-2pt} \lfloor \frac{N+L-1}{2} \rfloor \frac{F}{N(N+1)}$ is
\[
W(\SDPCN\hspace{-3pt} \to\hspace{-1pt} \SDCNPO)\hspace{-2pt} = \hspace{-2pt}
\begin{cases}
\frac{(N-L-1)(N-L+1)F}{2N(N+1)},\hspace{-5pt} &\text{for odd } N - L,\\
\frac{(N-L)^2F}{2N(N+1)},\hspace{-5pt} &\text{for even } N - L.\\
\end{cases}
\]
\end{theorem}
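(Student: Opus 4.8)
The plan is to reduce the theorem to a one–dimensional optimization problem over the shift parameter and then solve that problem exactly. First I would fix notation: writing $N' = N+1$, I would observe that both the source TAS $\SDPCN$ and the target TAS $\SDCNPO$ are cyclic up to a global rotation, so the symmetric difference $|\SNPn \Delta \SNPOn|$ depends only on the \emph{relative} offset between the $n$-th interval of the $N$-machine scheme and the $n$-th interval of the $(N+1)$-machine scheme. Concretely, $\SNPn$ starts at $(n-1)\frac{F}{N}+\delta'$ and $\SNPOn$ starts at $(n-1)\frac{F}{N+1}+\delta$; their relative start offset is $g_n \define (n-1)\frac{F}{N}-(n-1)\frac{F}{N+1}+\delta'-\delta = (n-1)\frac{F}{N(N+1)} - \Delta$, where $\Delta \define \delta-\delta'$. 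So the total waste is a function $W(\Delta) = \sum_{n=1}^{N} w(g_n)$, where $w(\cdot)$ is the per-machine waste as a function of the relative offset of two circular arcs of lengths $\frac{LF}{N}$ (the longer, the $N$-machine load) and $\frac{LF}{N+1}$ (the shorter, the $(N+1)$-machine load).

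Next I would compute $w(\cdot)$ explicitly using Lemma~\ref{lem:sym}. Let the shorter arc have length $\ell = \frac{LF}{N+1}$ and the longer arc have length $\ell + \DNNPO$ where $\DNNPO = \frac{LF}{N(N+1)}$. If we slide the shorter arc inside the longer one, as long as the offset $t$ (measured appropriately as distance of one endpoint from the corresponding endpoint of the bigger arc, reduced mod $F$) satisfies $0 \le t \le \DNNPO$, the shorter arc sits entirely inside the longer one and Lemma~\ref{lem:sym}(b) gives zero waste; outside that window the waste grows linearly on each side by Lemma~\ref{lem:sym}(a), i.e. $w$ is a ``tent''-shaped (V-shaped, flat-bottomed) piecewise-linear function of $t$ with a plateau of width $\DNNPO$ at value $0$. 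The offsets $g_n$ for $n=1,\dots,N$ are equally spaced with common gap $\frac{F}{N(N+1)}$, and they span a total range of $(N-1)\frac{F}{N(N+1)}$, which is exactly $\frac{N-1}{L}$ times the plateau width. So the problem becomes: place an arithmetic progression of $N$ points with gap $\frac{F}{N(N+1)}$ against this flat-bottomed tent function (the free parameter being the common translation $-\Delta$), and minimize the sum of the function values; this is a clean discrete optimization.

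Then I would solve that optimization. By symmetry of the tent function and of the arithmetic progression, the optimum is attained when the configuration of the $N$ points is symmetric about the center of the plateau — this is where the claimed shift $\delta = \delta' + \lfloor \frac{N+L-1}{2}\rfloor\frac{F}{N(N+1)}$ comes from (it centers the progression, with the floor handling the parity of $N-L$). I would substitute this symmetric placement into $W(\Delta) = \sum_n w(g_n)$: the points falling on the plateau contribute $0$, and the remaining points contribute an arithmetic-progression-like sum of linearly growing terms on each side, giving a sum of the form $2\cdot\frac{F}{N(N+1)}\cdot(\text{slope})\cdot(1+2+\cdots)$, which evaluates to the stated $\frac{(N-L-1)(N-L+1)F}{2N(N+1)}$ for odd $N-L$ and $\frac{(N-L)^2 F}{2N(N+1)}$ for even $N-L$. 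Finally I would need to argue \emph{optimality} — that no other shift does better. This is the step I expect to be the main obstacle, though it is not deep: one shows that $W(\Delta)$, as a sum of translated tent functions evaluated at a fixed progression, is itself a piecewise-linear convex-like function on each relevant interval (or, more simply, a majorization/exchange argument: moving any point off the symmetric configuration weakly increases the total because the tent function is convex away from its flat bottom and the symmetric placement equalizes the ``overhang'' on the two sides). The $\lfloor\cdot\rfloor$ in the shift and the case split on the parity of $N-L$ are exactly the discretization artifacts of this convexity/symmetry argument, so the bookkeeping there — checking that the claimed $\delta$ really realizes the symmetric optimum in both parity cases and that it is an integer multiple of $\frac{F}{N(N+1)}$ given the divisibility hypothesis $N(N+1)\mid F$ — is where care is required. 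The case of one machine leaving I would then dispatch in a sentence by the symmetry remark already announced in the text (a transition from $N$ to $N-1$ machines is, up to relabeling, the reverse of a transition from $N-1$ to $N$ machines, so the same formula applies with $N$ replaced by $N-1$).
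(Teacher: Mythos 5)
Your proposal is correct and is essentially the paper's own proof repackaged: the paper's three cases on the machine index $n$ (apply Lemma~\ref{lem:sym}(a), then (b), then (a)) are exactly the left slope, flat bottom, and right slope of your tent function evaluated at the arithmetic progression of relative offsets, and the resulting arithmetic sums give the same parity-split formulas. Note, however, that the optimality of the shift --- which you flag as ``the main obstacle'' --- is not part of this statement at all (it is the separate Theorem~\ref{thm:optimal}, proved in the appendix by an explicit case analysis over the intervals of $\delta$), so for this theorem only the direct evaluation at the stated $\delta$ is required.
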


\begin{proof}[Proof of Theorem~\ref{thm:shifted_joining}]
\hoang{See Appendix~\ref{app:proof_shifted_joining}.}
\end{proof} 

\hoang{By comparing the formulas derived in Theorem~\ref{thm:joining} and Theorem~\ref{thm:shifted_joining},} we deduce that the transition waste of the proposed shifted cyclic TAS when a machine joins is improved over that of the ordinary cyclic TAS (\cite{Yang_etal_2019}) by a considerable factor of approximately $\frac{2N^2}{(N-L)^2}$, which is 8X when $L \approx N/2$. The improvement becomes even more significant when $L$ gets closer to $N$, e.g., in the order of $N^2$ when $N-L$ is small.

Theorem~\ref{thm:shifted_leaving} determines the transition waste for a particular shifted cyclic task allocation scheme when one machine \textit{leaves}. The amount of shift $\delta-\delta'$ given in the theorem will be proved to be optimal in Theorem~\ref{thm:optimal} under a divisibility condition. 

\begin{theorem} 
\label{thm:shifted_leaving}
The transition waste when transitioning from a $\delta'$-shifted cyclic \NT~$\SDPCN$ to a $\delta$-shifted cyclic \NMOT~$\SDCNMO$ with $\delta = \delta' + \big((N-n^*)- \lfloor \frac{(N+L-2)}{2} \rfloor\big) \frac{F}{N(N-1)}$, where Machine $n^*$ leaves, is
\[
W(\SDPCN\hspace{-3pt} \to\hspace{-1pt} \SDCNMO)\hspace{-2pt} = \hspace{-2pt}
\begin{cases}
\frac{(N-L-1)^2F}{2N(N-1)},\hspace{-5pt} &\text{for odd } N - L,\\
\frac{(N-L)(N-L-2)F}{2N(N-1)},\hspace{-5pt} &\text{for even } N - L.\\
\end{cases}
\]
\end{theorem}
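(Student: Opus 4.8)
The plan is to follow the paper's announced strategy and derive the leaving case from the joining case (Theorem~\ref{thm:shifted_joining}) by a \emph{reversal symmetry}. The transition waste $W_{n^*}(\SN\to\SNMO)=\sum_{n\neq n^*}\bigl(|S^N_n\Delta S^{N-1}_n|-\DNNMO\bigr)$ is built only from symmetric differences and from $\DNNMO=|LF/N-LF/(N-1)|=\Delta_{N-1,N}$, all of which are symmetric in their arguments; hence it is unchanged if we swap the roles of $\SN$ and $\SNMO$, i.e.\ it equals the waste of the \emph{reverse} transition $\SNMO\to\SN$. But ``the reverse of one machine leaving'' is exactly ``one machine joining'': we pass from an $(N-1,L,F)$-TAS to an $(N,L,F)$-TAS by reinstating Machine $n^*$. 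Moreover the transition waste depends on the two schemes only through the multiset of pairs $\{(S^N_n,S^{N-1}_{\sigma(n)})\}_{n\neq n^*}$, where $\sigma$ is the re-indexing of the survivors, and this multiset is invariant under applying one common cyclic relabeling of the task labels $[[F]]$ together with a compatible cyclic permutation of the machine indices — the symmetry group of the cyclic construction.

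Using this invariance I would first reduce to $\delta'=0$ and to a canonical departing machine (say $n^*=N$), absorbing the general $n^*$ into the compensating relabeling; this is the source of the $n^*$-dependent term in the prescribed shift $\delta$. After this normalization the reverse transition is a \emph{standard} joining transition from a shifted cyclic $(N-1,L,F)$-TAS to a shifted cyclic $(N,L,F)$-TAS with the new machine appended at the end — precisely the situation of Theorem~\ref{thm:shifted_joining} with $N$ replaced by $N-1$ — provided the relative shift of the two schemes equals the offset $\lfloor\frac{(N-1)+L-1}{2}\rfloor\frac{F}{(N-1)N}=\lfloor\frac{N+L-2}{2}\rfloor\frac{F}{N(N-1)}$ used there. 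Checking that the $\delta$ in the statement is exactly the one making this true (after the $n^*$-correction) is the heart of the argument.

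It then remains to specialize the formula of Theorem~\ref{thm:shifted_joining} to $N\mapsto N-1$ and simplify, using the parity flip $N-L\text{ odd}\iff (N-1)-L\text{ even}$: the even branch of that theorem yields $\frac{((N-1)-L)^2F}{2(N-1)N}=\frac{(N-L-1)^2F}{2N(N-1)}$ for odd $N-L$, and the odd branch yields $\frac{((N-1)-L-1)((N-1)-L+1)F}{2(N-1)N}=\frac{(N-L-2)(N-L)F}{2N(N-1)}$ for even $N-L$, which are the two cases claimed.

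The main obstacle, as usual with these cyclic schemes, is the index bookkeeping: verifying precisely which cyclic relabeling of $[[F]]$ and permutation of machine indices undoes the deletion of Machine $n^*$, and confirming that the prescribed $\delta$ lands exactly on the optimal joining offset. Should that route prove too delicate, the safe fallback is to imitate the proof of Theorem~\ref{thm:shifted_joining} literally: write each surviving machine's pair as $S=\,$(the larger of $S^{N-1}_{\sigma(n)}$, $S^N_n$), $T=\,$(the smaller), split the summation index $n$ into three ranges according to its position relative to $n^*$ and to the shifted ``overlap window'', apply Lemma~\ref{lem:sym}(a) or~(b) in each range, and sum the resulting arithmetic progressions; there the obstacle moves to re-deriving the three case boundaries with the $n^*$-dependent offset and re-checking the hypotheses of Lemma~\ref{lem:sym}, as well as separating the even and odd $N-L$ subcases.
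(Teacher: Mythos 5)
Your proposal is correct and follows essentially the same route as the paper: the paper's proof is exactly the reversal symmetry you describe (treat the departing Machine $n^*$ as the machine joining in Theorem~\ref{thm:shifted_joining}, replace $N$ by $N-1$, and use the cyclic-shift invariance to absorb the $n^*$-dependence of the offset), and your parity bookkeeping matching the odd/even branches is the same check the paper leaves implicit. The paper states this in three sentences; you have simply made the symmetry and the reindexing explicit.
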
 
\begin{proof} 
The proof works by symmetry. By treating Machine $n^*$ that leaves as the machine that joins the system in Theorem~\ref{thm:shifted_joining} and replacing $N$ by $N-1$, we obtain the claimed formula for the transition wastes. Note that because the task sets can be cyclically shifted along the circle of integers mod $F$, the index of the machine that leaves does not matter. This phenomenon, however, does not apply to the ordinary cyclic ETAS.  
\end{proof} 

Although we are able to show the optimality of our shifted cyclic ETASs only when the parameter $\delta$ satisfies a certain divisibility property (Theorem~\ref{thm:optimal}), we believe the optimality holds for every $\delta$, which was supported by an exhaustive search over small values of $L$ and $N$. 

\begin{theorem} 
\label{thm:optimal}
The transition wastes stated in Theorem~\ref{thm:shifted_joining} and Theorem~\ref{thm:shifted_leaving} are optimal among all choices of $\delta$-shifted cyclic TASs where $\frac{F}{N(N+1)}$ and $\frac{F}{N(N-1)}$ divide $\delta-\delta'$, respectively. 
\end{theorem}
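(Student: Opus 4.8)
The plan is to reduce to the joining case, turn the total transition waste into a one-parameter discrete function, recognize it as a sliding-window sum of a convex function, and then minimize by a discrete-convexity argument; the leaving case follows by the same symmetry already used in the proof of Theorem~\ref{thm:shifted_leaving}.

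\emph{Reduction and parametrization.} Since symmetric differences are unchanged under a simultaneous cyclic rotation of both task allocation schemes, I may assume $\delta'=0$, exactly as in the proof of Theorem~\ref{thm:shifted_leaving}; the divisibility hypothesis then says $\delta=ku$ with $u\define\frac{F}{N(N+1)}$ and $k$ an integer (unique modulo $N(N+1)$). Writing $W(k)$ for the transition waste $W(\SCN\to\SDCNPO)$ with this $\delta$, the goal is to show $W(k)$ is minimized at $k^\star=\lfloor\frac{N+L-1}{2}\rfloor$, whose value equals that of Theorem~\ref{thm:shifted_joining}. For each machine $n\in[N]$, the arc $\SNn$ has length $\frac{LF}{N}$ and left endpoint $(n-1)\frac{F}{N}$, the arc $\SNPOn$ has length $\frac{LF}{N+1}$ and left endpoint $(n-1)\frac{F}{N+1}+\delta$, their left endpoints differ by $s_n=(k-n+1)u$, and $\frac{LF}{N}-\frac{LF}{N+1}=Lu=\DNNPO$. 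Applying Lemma~\ref{lem:sym} (part (b) when the longer arc $\SNn$ contains $\SNPOn$, part (a) otherwise), precisely as in the three-case computation in the proof of Theorem~\ref{thm:shifted_joining}, gives $W(\SNn\to\SNPOn)=2u\,g(k-n+1)$, where $g(j)=\max(0,-j,j-L)$ as long as the arcs do not wrap past the point antipodal to $\{0,\ldots,L\}$ on the circle $\bbZ_{N(N+1)}$, and $g(j)$ equals the circular distance from $j$ to $\{0,1,\ldots,L\}$ in general. Summing over machines,
\[
W(k)=2u\sum_{n=1}^{N}g(k-n+1)=2u\sum_{j=k-N+1}^{k}g(j).
\]

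\emph{Minimization.} The function $g$ is nonnegative, vanishes exactly on $\{0,\ldots,L\}$, and on the arc of $\bbZ_{N(N+1)}$ obtained by deleting a neighborhood of the antipode it coincides with the convex piecewise-linear map $j\mapsto\max(0,-j,j-L)$. If $k$ is such that the length-$(N-1)$ window $\{k-N+1,\ldots,k\}$ reaches within distance $N$ of the antipode, then some term $g(j)$ is at least $\tfrac12 N(N+1)-N-L$, so $W(k)\ge 2u\bigl(\tfrac12 N(N+1)-N-L\bigr)$, which for every $N>L$ exceeds the candidate value of Theorem~\ref{thm:shifted_joining} (of order $\frac{(N-L)^2}{N(N+1)}F<\tfrac12F$); hence such $k$ are not minimizers and I restrict to windows on which $g$ is genuinely convex. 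On that range, $h(k)\define\sum_{j=k-N+1}^{k}g(j)$ has forward difference $h(k+1)-h(k)=g(k+1)-g(k-N+1)=\sum_{j=k-N+1}^{k}\bigl(g(j+1)-g(j)\bigr)$, a sum of $N$ consecutive slopes of the convex $g$, hence non-decreasing in $k$; so $h$ is discretely convex and attains its minimum at the least $k$ with $g(k+1)\ge g(k-N+1)$. Near $k^\star$ the active linear pieces are $g(k+1)=k+1-L$ and $g(k-N+1)=N-1-k$, so the inequality reads $2k+2\ge N+L$, i.e. $k\ge\lceil\frac{N+L}{2}-1\rceil=\lfloor\frac{N+L-1}{2}\rfloor=k^\star$ (when $N+L$ is odd the forward difference is never zero, giving a unique minimizer; when $N+L$ is even an adjacent $k$ ties, but both produce the same waste). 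Evaluating $W(k^\star)$ from the closed form recovers the expression in Theorem~\ref{thm:shifted_joining}, establishing optimality in the joining case.

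\emph{Leaving case and main obstacle.} For one machine leaving, I argue by symmetry exactly as in the proof of Theorem~\ref{thm:shifted_leaving}: treating the departing Machine $n^*$ as a machine joining a system of $N-1$ machines turns $W(\SDPCN\to\SDCNMO)$ into the joining-waste for $N-1$ machines whose relative shift is $\delta'-\delta+N-n^*$, assumed divisible by $\frac{F}{N(N-1)}$; the joining result then yields optimality at the shift $\lfloor\frac{(N-1)+L-1}{2}\rfloor\frac{F}{N(N-1)}=\lfloor\frac{N+L-2}{2}\rfloor\frac{F}{N(N-1)}$, which is precisely the $\delta$ of Theorem~\ref{thm:shifted_leaving}. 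The two delicate points I expect to be the real work are: (i) pinning down the exact piecewise-linear dependence of each per-machine waste $W(\SNn\to\SNPOn)$ on $k$ across the parity split of $N+L$ and the several position regimes of $n$ — this reuses but must carefully book-keep the case analysis already in the proof of Theorem~\ref{thm:shifted_joining}; and (ii) making the wrapping argument rigorous, i.e. showing cleanly that any admissible $k$ whose window straddles the antipodal region is strictly worse than $k^\star$, so that the non-circular convexity argument legitimately returns the global minimum over all admissible shifts.
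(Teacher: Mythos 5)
Your reduction, parametrization, and analysis near the optimum are correct and essentially mirror the paper's route: the paper likewise derives the per-machine waste as a piecewise-linear function of the shift via Lemma~\ref{lem:sym} (its Lemma~\ref{lem:opt} is exactly your $g$, written as three sums), restricts to $\delta$ a multiple of $d=\frac{F}{N(N+1)}$, and minimizes the resulting expression; your repackaging as a sliding-window sum $h(k)=\sum_{j=k-N+1}^{k}g(j)$ with the forward-difference/discrete-convexity argument is a cleaner way to locate $k^\star=\lfloor\frac{N+L-1}{2}\rfloor$ than the paper's interval-by-interval quadratic minimization, and your evaluation of $W(k^\star)$ correctly recovers the value in Theorem~\ref{thm:shifted_joining}. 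The symmetry reduction for the leaving case is also the paper's argument verbatim.

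The genuine gap is in how you exclude shifts whose window straddles the ``antipodal'' region, and the specific claims you make there are false. Once the two arcs $\SNn$ and $\SNPOn$ become disjoint, $|\SNn\,\Delta\,\SNPOn|=|\SNn|+|\SNPOn|$ is \emph{constant}, so the per-machine waste saturates: $g$ plateaus at height $LN$ when $L<\lceil\frac{N+1}{2}\rceil$ (and, when $L\ge\lceil\frac{N+1}{2}\rceil$, the arcs never become disjoint but overlap in two pieces and $g$ plateaus at $(N-L)(N+1)$ instead). It does \emph{not} keep growing like the circular distance from $j$ to $\{0,\ldots,L\}$, so your asserted lower bound ``some term $g(j)\ge\tfrac12N(N+1)-N-L$'' can exceed the true cap (e.g., for $L$ fixed and $N$ large, $LN\ll\tfrac12N(N+1)-N-L$) and the comparison with the candidate value collapses. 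A correct exclusion must use the plateau-aware $g$: for instance, a window containing a plateau point has all $N$ of its terms within $N-1$ of the plateau height (since $g$ has slopes in $\{0,\pm1\}$), or one can simply compute $h$ on each of the remaining intervals as the paper does. You rightly identify this as ``the real work''; note that the paper itself only carries out the computation on the interval containing $\delta_{\sf{opt}}$ and declares the remaining intervals ``straightforward but tedious,'' so neither argument is complete at exactly this point — but your written justification for that step, as it stands, is incorrect rather than merely omitted.
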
 
\begin{proof} 
By symmetry, we just need to prove this for the case of machines joining. 
We first derive a formula of the transition waste for every $\delta$ and then show that it is minimized within the specified range of $\delta$. See Appendix~\ref{sub:AppendixOptProof} for more details. 
\end{proof} 

\section{Zero-Waste Elastic Task Allocation Schemes}
\label{sec:zero}

The shifted cyclic ETAS developed in Section~\ref{subsec:TW_shifted_cyclic} is easy to implement and has a negligible computation overhead at the master machine. \hoang{Indeed, to coordinate a transition, the master just needs to inform each machine its updated index, the number of active machines, and the amount of shift required.} However, to maintain the cyclic structure, the transitions incur a nontrivial transition waste. \hoang{This quantity can scale linear in $F$, which is the maximum number of tasks each machine can take, and hence may significantly increase the computation overhead at each machine.} 
Moreover, high transition wastes also mean more new tasks than necessary must be downloaded if each machine does not already store all the tasks from the beginning, which leads to higher communication overhead. 

This drawback of the (shifted) cyclic ETAS motivated us to investigate elastic task allocation schemes with \emph{zero} transition wastes. Our key findings include a necessary and sufficient condition for the existence of a zero-waste transition from an \NT~to an \NPT~based on the famous Hall's marriage theorem and a construction of zero-waste ETAS based on finite geometry. \vspace{-5pt}

\subsection{Zero-Waste Transition When One Machine Joins}

By Lemma~\ref{lem:zero_trivial}, the transition waste incurred at Machine $n$ when transitioning from the set of tasks $\SNn$ to another set $\SNPn$ is zero if and only if $\SNn \subset \SNPn$ or vice versa. 
It turns out that if \hoang{the elastic events consist of only} machines joining then it is easy to achieve zero-waste transitions. 

\begin{proposition}
There always exists a zero-waste transition from an \NT~to an \NPOT. 
\end{proposition}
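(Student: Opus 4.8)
The plan is to exhibit an explicit zero-waste transition, i.e., given any \NT~$\SN = (\SNn)_{n\in[N]}$, construct an \NPOT~$\SNPO = (\SNPOn)_{n\in[N+1]}$ such that $\SNn \subseteq \SNPOn$ for every $n \in [N]$; by Lemma~\ref{lem:zero_trivial} this forces each per-machine waste to be zero, hence the total waste is zero. The idea is that when a machine joins, each existing machine should only \emph{shed} tasks (never acquire new ones), so the new TAS is obtained from the old one purely by moving tasks off existing machines and onto Machine $N+1$. Concretely, note $|\SNn| = LF/N$ and the target size is $|\SNPOn| = LF/(N+1)$, so each existing machine must give up exactly $\DNNPO = LF/N - LF/(N+1) = LF/(N(N+1))$ tasks, and Machine $N+1$ must end up with $LF/(N+1)$ tasks; since $\sum_{n\in[N]} \DNNPO = N \cdot LF/(N(N+1)) = LF/(N+1)$, the count is consistent.

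The key steps, in order: First I would observe that $\DNNPO = LF/(N(N+1))$ is a nonnegative integer under the divisibility assumption $N(N+1)\mid F$ stated at the start of Section~\ref{sec:cyclic} (and in any case the proposition only needs $LF/N, LF/(N+1)\in\bbZ$, which we have been assuming). Second, for each $n \in [N]$ choose an arbitrary subset $R_n \subseteq \SNn$ with $|R_n| = \DNNPO$; set $\SNPOn \define \SNn \setminus R_n$ and $\SNPO_{N+1} \define \bigcup_{n\in[N]} R_n$. Third, verify Load Balancing: $|\SNPOn| = LF/N - LF/(N(N+1)) = LF/(N+1)$ for $n\le N$ by construction; for the last machine I would need the $R_n$ to be pairwise disjoint so that $|\SNPO_{N+1}| = \sum_n |R_n| = LF/(N+1)$ — so the $R_n$ must be chosen disjointly, which is where a small amount of care is needed. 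Fourth, verify $L$-Redundancy: a task $f$ originally covered $L$ times loses coverage only on machines $n$ with $f \in R_n$, but gains coverage on Machine $N+1$ exactly once if $f \in R_n$ for some $n$; so the multiplicity of $f$ changes by $-(\text{number of }n\text{ with }f\in R_n) + \mathbf{1}[f \in \bigcup_n R_n]$, which is zero precisely when $f$ lies in \emph{at most one} $R_n$ — again the disjointness of the $R_n$ is exactly what makes redundancy preserved.

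So the real (and only mildly subtle) point is showing the $R_n$ can be chosen \emph{disjoint} while each has the prescribed size $\DNNPO$ and $R_n \subseteq \SNn$. This is a matter of choosing, for each task $f$, how many of its $L$ "copies" (i.e., which of the $L$ machines currently holding $f$) to reassign to Machine $N+1$, subject to at most one copy of each task moving, and subject to machine $n$ donating a total of $\DNNPO$ copies. I expect this to be the main obstacle, and I would handle it by a greedy/counting argument or, more cleanly, by invoking an integral-flow / bipartite-degree-constrained-subgraph argument on the incidence structure — essentially the same Hall-type / network-flow reasoning the paper sets up for the leaving case. In fact, since $\sum_n \DNNPO$ equals $F \cdot (L/(N+1)) \le F$ and each task has $L \le N$ available copies spread over distinct machines, a straightforward round-robin assignment of "donation slots" to tasks works: process tasks one at a time, assign each task's single donated copy to whichever currently-holding machine is furthest from its quota of $\DNNPO$, and a simple averaging argument shows no machine overshoots. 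Once disjointness is secured, the remaining verifications in steps three and four are immediate, completing the proof.
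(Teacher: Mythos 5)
Your construction is the same as the paper's: when Machine $N+1$ joins, each existing machine $n$ donates a set $R_n \subseteq \SNn$ of exactly $\DNNPO = \frac{LF}{N(N+1)}$ tasks to the newcomer and keeps the rest, so the whole proof reduces to choosing the $R_n$ pairwise disjoint. You correctly isolate disjointness as the only nontrivial point (and your verification of Load Balancing and $L$-Redundancy from it is right), but the mechanism you actually spell out --- a task-by-task round-robin in which ``a simple averaging argument shows no machine overshoots'' --- is left unverified, and as stated it is also underspecified: only $\frac{LF}{N+1}$ of the $F$ tasks donate a copy at all, so you would first have to decide which tasks participate. The paper closes this step with the one-line machine-by-machine greedy that you allude to (``greedy/counting argument'') but do not carry out: let Machine $1$, then Machine $2$, and so on, each pick an arbitrary $\DNNPO$-subset of its own task set avoiding everything picked so far; Machine $n$ can always do this because at most $(n-1)\DNNPO$ tasks have been claimed by its predecessors, and
\[
|\SNn| = \frac{LF}{N} = (N+1)\,\DNNPO \;\geq\; n\,\DNNPO = (n-1)\DNNPO + \DNNPO .
\]
With that inequality in hand your remaining verifications go through verbatim, so the argument is complete; I would simply replace the round-robin sketch by this counting bound (or else fully justify the averaging claim).
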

\begin{proof} 
To achieve a zero-waste transition when Machine $N+1$ joins, each existing machine (from $1$ to $N$) can simply choose a subset of $\frac{LF}{N(N+1)}$ tasks to pass to Machine $N+1$, which will then have in total $N\frac{LF}{N(N+1)}$ tasks. The requirement is to have these $N$ sets disjoint. We can achieve this by letting each machine $n$ from $1$ to $N$ choose an arbitrary subset of $\SNn$ of size $\frac{LF}{N(N+1)}$ that does not intersect any sets chosen by previous machines so far. This is always possible because Machine $n$ has enough tasks in its set to do the selection:   
\[
|\SNn| = \frac{LF}{N} \geq (n-1)\frac{LF}{N(N+1)}+\frac{LF}{N(N+1)}.\vspace{-5pt}
\]
\hoang{This completes the proof.}
\end{proof}

Note that this proposition is a stand-alone result and will not be used in the rest of the paper.  \vspace{-5pt}

\subsection{Zero-Waste Transition When One Machine Leaves}

The case of one machine leaving, say Machine $n^*$, is more challenging. Note that to achieve a zero-waste transition, due to Lemma~\ref{lem:zero_trivial}, it is necessary and sufficient to let other machines keep their current sets of tasks while reallocating the tasks from the leaving machine to them (so that $\SNn \subset \SNMOn$). Reallocating one task from Machine $n^*$ to a machine $n$ corresponds to selecting one edge in the \emph{transition graph} (Definition~\ref{def:transition_graph} below). 
Note that when the transition happens, both $L$ and $F$ are fixed. This means that each task from the leaving machine needs to be reallocated to exactly \textit{one} active machine to maintain the $L$-Redundancy, which requires that each task is performed by exactly $L$ machines (see Definition~\ref{def:TAS}).
We will see later that reallocating all tasks turns out to correspond to a ``matching'' in that graph (Lemma~\ref{lem:matching}).  

\begin{definition} 
\label{def:transition_graph}
Given an \NT~$\SN=(S^N_1,\ldots,S^N_N)$, the transition graph $\Gns$ is the bipartite graph with vertex set $\Uns \cup \Vns$, where $\Uns = [N]\setminus \{n^*\}$ and $\Vns = \SNns$ and there is an edge $(u,v)$, $u \in \Uns$, $v \in \Vns$, if and only if $v \in \overline{S^N_u}\define[[F]]\setminus S^N_u$. 
\end{definition} 

Note that the set $\Vns$ of the transition graph represents the tasks from the leaving machine $n^*$ that need to be reallocated to other machines, while an edge $(u,v)$ implies that the task $v \in \Vns$ can be taken over by Machine $u$, i.e., this machine was not allocated this task before the transition. An example of such a graph is given below. \vspace{-5pt}

\begin{figure}[!htb]
\centering
\includegraphics[scale=0.68]{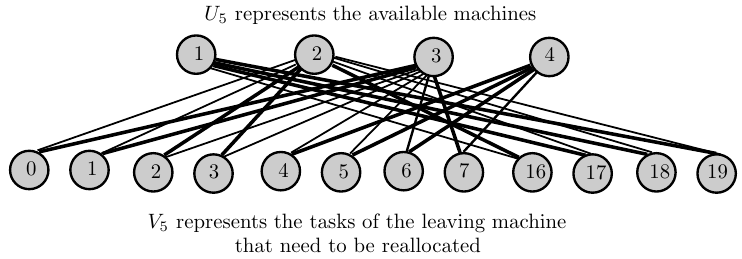}
\caption{Illustration of the transition graph $\G_5$ in Example~\ref{ex:transition_graph}. An edge $(u,v)$ means the task $v$ from the leaving machine can be taken over by Machine $u$ because Machine $u$ was not allocated this task before the transition.}
\label{fig:transition_graph}
\vspace{-10pt}
\end{figure}

\begin{example} 
\label{ex:transition_graph}
When $N = 5$, $L=3$, and $F=20$, we consider the \NT~given in Fig.~\ref{fig:toy_example}~(a), $\S^5 = (S^5_1,\ldots,S^5_5)$, where $S^5_1 = \{0,\ldots,11\}$, $S^5_2 = \{4,\ldots,15\}$, $S^5_3 = \{8,\ldots,19\}$, $S^5_4 = \{0,\ldots,3,8,\ldots,19\}$, and $S^5_5 = \{0,\ldots,7,16,\ldots,19\}$. Suppose that Machine $5$ leaves, i.e., $n^* = 5$. Then the transition graph $\G_5$ is illustrated in Fig.~\ref{fig:transition_graph}.  \vspace{-5pt}
\end{example} 

A subset $\M$ of edges of a bipartite graph $\G$ with vertex set $(U,V)$ is referred to as a \emph{perfect $\Delta$-matching} of $\G$ if each vertex in $V$ is incident to precisely one edge in $\M$ while each vertex in $U$ is incident to precisely $\Delta$ edges in $\M$. 
\vspace{-5pt}

\begin{lemma} 
\label{lem:matching}
There exists a zero-waste transition from an \NT~$\SN$ to an \NMOT~$\SNMO$ when Machine $n^*$ leaves if and only if the transition graph $\Gns$ admits a perfect $\DNNMO$-matching. 
\end{lemma}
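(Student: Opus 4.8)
The plan is to set up a bijection-style argument between zero-waste transitions and perfect $\DNNMO$-matchings of $\Gns$, using Lemma~\ref{lem:zero_trivial} to pin down exactly what a zero-waste transition must look like. First I would observe, via Lemma~\ref{lem:zero_trivial}, that in a zero-waste transition each surviving machine $n \in [N]\setminus\{n^*\}$ must satisfy $\SNn \subseteq \SNMOn$ (the reverse containment $\SNMOn \subseteq \SNn$ is impossible since $|\SNMOn| = LF/(N-1) > LF/N = |\SNn|$ when $N > L$). Hence a zero-waste transition is completely determined by the choice, for each $n$, of the set $A_n \define \SNMOn \setminus \SNn$ of newly acquired tasks; moreover load balancing forces $|A_n| = LF/(N-1) - LF/N = \DNNMO$ for every $n$, and $A_n \subseteq \overline{\SNn}$ since the new tasks were not already at Machine $n$.

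Next I would check the global consistency conditions that the collection $\{A_n\}$ must satisfy so that $\SNMO$ is a valid \NMOT. The only tasks whose coverage changes are those in $\Vns = \SNns$ (every other task is still covered by exactly the same $L$ machines, none of which is $n^*$), and each such task loses its copy at the departing machine $n^*$, so it must be picked up by exactly one surviving machine to restore $L$-redundancy. Conversely a task not in $\SNns$ must not be handed to anyone. Therefore the sets $A_n$ must be pairwise disjoint subsets of $\Vns$ whose union is all of $\Vns$. Encoding ``$v \in A_n$'' as ``select the edge $(n,v)$'' — which is an edge of $\Gns$ precisely because $v \in A_n \subseteq \overline{\SNn}$, matching Definition~\ref{def:transition_graph} — the conditions ``each $v \in \Vns$ lies in exactly one $A_n$'' and ``each $A_n$ has size $\DNNMO$'' translate verbatim into ``each vertex of $\Vns$ has degree $1$ in the selected edge set'' and ``each vertex of $\Uns$ has degree $\DNNMO$'', i.e. a perfect $\DNNMO$-matching. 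Running this correspondence in both directions gives the equivalence: from a zero-waste transition read off the $A_n$'s and hence the matching; from a perfect $\DNNMO$-matching define $\SNMOn \define \SNn \cup A_n$ and verify it is an \NMOT~with zero waste.

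The main obstacle — really the only point requiring care rather than bookkeeping — is verifying that the reconstructed $\SNMO$ genuinely satisfies \emph{both} defining properties of a TAS simultaneously, i.e. that the two matching conditions are exactly what is needed and nothing more. Load balancing is immediate from $|\SNMOn| = |\SNn| + \DNNMO = LF/N + (LF/(N-1) - LF/N) = LF/(N-1)$. For $L$-redundancy I need to argue cleanly that tasks outside $\SNns$ keep their exact $L$ covering machines (none equal to $n^*$, so removing $n^*$ and adding nothing leaves them at $L$), while tasks inside $\SNns$ drop from $L$ to $L-1$ upon $n^*$'s departure and are restored to $L$ by the unique machine whose $A_n$ contains them — which is precisely the ``each vertex of $\Vns$ has degree exactly one'' half of the perfect-matching condition. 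I would make this a short explicit case split on whether a task belongs to $\SNns$, and note that the hypothesis $N > L+1$ adopted earlier guarantees $\DNNMO \ge 1$ so the matching is nontrivial. Everything else is substitution of the formula $\DNNMO = LF/(N(N-1))$.
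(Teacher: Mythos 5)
Your proposal is correct and follows essentially the same route as the paper's proof: invoke Lemma~\ref{lem:zero_trivial} to reduce a zero-waste transition to the containments $\SNn \subseteq \SNMOn$, and then identify the reallocation of the leaving machine's tasks with a perfect $\DNNMO$-matching of $\Gns$. You actually supply more detail than the paper does (which dismisses the correspondence with ``it is evident that''), in particular the explicit verification via $L$-redundancy that the new tasks $A_n$ must be disjoint subsets of $\SNns$ covering all of it, and the reverse construction $\SNMOn = \SNn \cup A_n$.
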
 
\begin{proof} 
Recall that due to Lemma~\ref{lem:zero_trivial}, the transition has a zero-waste if and only if $\SNMOn \subset \SNn$ for every $n \in [N]\setminus \{n^*\}$. This means that we need to reallocate tasks left over by Machine $n^*$ to other $N-1$ machines by adding these new tasks to the existing task sets of these machines.  

It is evident that a way to reallocate $\frac{LF}{N}$ tasks from Machine $n^*$ to $N-1$ other machines corresponds precisely to a perfect $\DNNMO$-matching of the transition graph $\Gns$: each task, which corresponds to a vertex $v \in \Vns$, is reallocated to exactly \emph{one} machine, which corresponds to a vertex $u \in \Uns$; moreover, each machine is allocated precisely $\DNNMO = \frac{LF}{N(N-1)}$ new tasks, which shows that each vertex $u$ is incident to precisely $\DNNMO$ edges while each vertex $v$ is incident to exactly one edge in the matching.
\end{proof} 

For instance, the zero-waste transition presented in Fig.~\ref{fig:toy_example}~(a)(c) corresponds to the following perfect $3$-matching of $\G_5$ (thicker edges in Fig.~\ref{fig:transition_graph}): 
\begin{multline*}
\M = \{(1,17), (1,18), (1,19), (2,2), (2,3),\\ (2,16), (3,0), (3,1), (3,7), (4, 4), (4,5), (4,6)\}.
\end{multline*}
Based on this matching, each machine $1$, $2$, $3$, and $4$ is allocated three new tasks from the leaving Machine 5. Moreover, every task from Machine 5, i.e., $\{0,1,2,3,4,5,6,7,16,17,18,19\}$, is reallocated to exactly one machine. 

The following lemma is a straightforward corollary of Hall's marriage theorem.

\begin{lemma}
\label{lem:Hall_delta}
A bipartite graph $\G$ with the vertex set $U \cup V$ has a perfect $\DNNMO$-matching if and only if the inequality
\begin{equation}
\label{eq:Hall_inequality_variation}
|\cup_{n \in J}\Gamma_{\G}(n)| \geq |J|\DNNMO,
\end{equation}
holds for every nonempty set $J \subseteq U$, where $\Gamma_{\G}(n)$ denotes the set of neighbors of $n$ in $\G$. 
\end{lemma}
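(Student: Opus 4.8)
The plan is to reduce Lemma~\ref{lem:Hall_delta} to the ordinary Hall's marriage theorem by a standard vertex-splitting (or edge-multiplicity) argument. First I would construct an auxiliary bipartite graph $\G'$ from $\G$ as follows: keep the right side $V$ unchanged, but replace each vertex $u \in U$ by $\DNNMO$ clones $u^{(1)},\ldots,u^{(\DNNMO)}$, and for each original edge $(u,v)$ of $\G$ put all $\DNNMO$ edges $(u^{(i)},v)$, $i \in [\DNNMO]$, into $\G'$. Then a perfect $\DNNMO$-matching of $\G$ (each $v\in V$ incident to exactly one matching edge, each $u\in U$ incident to exactly $\DNNMO$ matching edges) corresponds bijectively to a matching of $\G'$ that saturates the left side $U' = \{u^{(i)} : u \in U,\ i \in [\DNNMO]\}$: given the $\DNNMO$-matching, the $\DNNMO$ tasks assigned to $u$ can be distributed one-to-each among the clones $u^{(1)},\ldots,u^{(\DNNMO)}$; conversely a left-saturating matching of $\G'$ assigns to each $u$ the $\DNNMO$ distinct right-vertices matched to its clones, and no right vertex is used twice since $\G'$-matchings are matchings. (Note $|V| = |\SNns| = LF/N = (N-1)\DNNMO = |U'|$, so a left-saturating matching of $\G'$ is automatically perfect.)

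Next I would invoke Hall's theorem on $\G'$: a perfect matching (equivalently, a matching saturating $U'$) exists if and only if $|\Gamma_{\G'}(J')| \geq |J'|$ for every $J' \subseteq U'$. The final step is to check that this family of inequalities over subsets of $U'$ is equivalent to the family \eqref{eq:Hall_inequality_variation} over subsets of $J \subseteq U$. For the forward direction, given $J \subseteq U$ take $J' = \{u^{(i)} : u \in J,\ i \in [\DNNMO]\}$; then $|J'| = |J|\DNNMO$ and $\Gamma_{\G'}(J') = \cup_{n\in J}\Gamma_{\G}(n)$ (clones share the neighborhood of their parent), so Hall for $\G'$ gives $|\cup_{n\in J}\Gamma_{\G}(n)| \geq |J|\DNNMO$. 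For the converse, it suffices to verify Hall's condition on $\G'$ only for sets $J'$ that are "clone-closed" (if they contain one clone of $u$ they contain all of them): for a general $J'$, replacing it by its clone-closure $\widehat{J'}$ does not change the neighborhood but can only increase the size, so $|\Gamma_{\G'}(J')| = |\Gamma_{\G'}(\widehat{J'})| \geq |\widehat{J'}| \geq |J'|$ once the inequality is known for clone-closed sets; and clone-closed sets are exactly those of the form $\{u^{(i)}: u \in J, i\in[\DNNMO]\}$ for some $J \subseteq U$, for which the inequality is precisely \eqref{eq:Hall_inequality_variation}.

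I expect the only real subtlety — rather than a genuine obstacle — to be the careful bookkeeping in the converse direction, specifically the observation that it is enough to test Hall's condition on clone-closed subsets of $U'$. This is a routine "it suffices to check a minimal witness" monotonicity argument (enlarging to the clone-closure leaves neighborhoods fixed and does not decrease cardinality), but it is the one place where one must be precise to avoid quantifying \eqref{eq:Hall_inequality_variation} over the wrong family of sets. Everything else — the bijection between $\DNNMO$-matchings of $\G$ and left-saturating matchings of $\G'$, and the identity $\Gamma_{\G'}(J') = \cup_{n \in J}\Gamma_{\G}(n)$ for clone-closed $J'$ — is immediate from the construction. An alternative route, which I would mention as a remark, is to appeal directly to the deficiency form of the defect Hall theorem (or to the max-flow/min-cut theorem applied to the bipartite transportation network with source-to-$U$ capacities $\DNNMO$, $U$-to-$V$ capacities $1$, and $V$-to-sink capacities $1$), both of which yield \eqref{eq:Hall_inequality_variation} directly; the max-flow formulation is also what underlies the "polynomial-time network flow" claim made earlier in the paper.
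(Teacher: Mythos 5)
Your proposal is correct and follows essentially the same route as the paper's proof: duplicate each vertex of $U$ (together with its incident edges) $\DNNMO$ times and apply the ordinary Hall's marriage theorem to the resulting bipartite graph. The paper states this in one sentence, whereas you additionally spell out the bijection between $\DNNMO$-matchings and left-saturating matchings of the cloned graph and the reduction of Hall's condition to clone-closed subsets; these are exactly the details the paper leaves implicit.
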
 
\begin{proof} 
The celebrated Hall's marriage theorem~\cite{Hall} states that a bipartite graph $\G$ with the vertex set $(U,V)$ has a perfect matching (or, perfect $1$-matching, in our notation), if and only if for every nonempty set $J \subseteq U$, it holds that $
|\cup_{n \in J}\Gamma_{\G}(n)| \geq |J|$, where $\Gamma_{\G}(n)$ denotes the set of neighbors of $n$ in $\G$. 
By duplicating each vertex of $U$ and its incident edges $\DNNMO$ times and applying Hall's theorem to the resulting bipartite graph, we deduce that $\G$ has a perfect $\DNNMO$-matching if and only if \eqref{eq:Hall_inequality_variation} holds for every nonempty set $J \subseteq U$. 
\end{proof} 

As a corollary of Lemma~\ref{lem:matching} and Lemma~\ref{lem:Hall_delta}, we obtain a necessary and sufficient condition for the existence of a zero-waste transition when one particular machine leaves. 

\begin{corollary}
\label{cr:one_leave}
There exists a zero-waste transition from an \NT~$\SN$ to an \NMOT~$\SNMO$ when Machine $n^*$ leaves if and only if the following inequality holds for every nonempty set $J \subseteq [N]\setminus \{n^*\}$. 
\begin{equation}
\label{eq:Hall_inequality}
|(\cup_{n \in J} \overline{\SNn}) \cap \SNns| \geq |J|\DNNMO.
\end{equation} 
\end{corollary}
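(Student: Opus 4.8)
The plan is to obtain Corollary~\ref{cr:one_leave} as an immediate consequence of the two lemmas that precede it, so the ``proof'' is really a matter of correctly composing Lemma~\ref{lem:matching} with Lemma~\ref{lem:Hall_delta} applied to the specific transition graph $\Gns$ from Definition~\ref{def:transition_graph}. First I would recall that, by Lemma~\ref{lem:matching}, a zero-waste transition from $\SN$ to $\SNMO$ when Machine $n^*$ leaves exists if and only if the transition graph $\Gns$, with parts $\Uns = [N]\setminus\{n^*\}$ and $\Vns = \SNns$, admits a perfect $\DNNMO$-matching. Then I would invoke Lemma~\ref{lem:Hall_delta} with $\G = \Gns$, $U = \Uns$, and $V = \Vns$: the perfect $\DNNMO$-matching exists if and only if $|\cup_{n\in J}\Gamma_{\Gns}(n)| \ge |J|\DNNMO$ for every nonempty $J \subseteq \Uns = [N]\setminus\{n^*\}$.

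The only genuine content beyond ``chain the two lemmas'' is to rewrite the neighborhood $\Gamma_{\Gns}(n)$ explicitly. By Definition~\ref{def:transition_graph}, the neighbors of $n \in \Uns$ are exactly the vertices $v \in \Vns = \SNns$ with $v \in \overline{\SNn} = [[F]]\setminus \SNn$; that is, $\Gamma_{\Gns}(n) = \overline{\SNn}\cap \SNns$. Taking the union over $n \in J$ and using that intersection distributes over union, $\cup_{n\in J}\Gamma_{\Gns}(n) = (\cup_{n\in J}\overline{\SNn})\cap \SNns$, which is precisely the left-hand side of \eqref{eq:Hall_inequality}. Substituting this identity into the Hall-type inequality of Lemma~\ref{lem:Hall_delta} yields \eqref{eq:Hall_inequality} for every nonempty $J \subseteq [N]\setminus\{n^*\}$, completing the argument.

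There is essentially no obstacle here: the statement is a corollary in the strict sense, and the only thing one must be careful about is the bookkeeping of which part of the bipartite graph plays the role of $U$ (the $N-1$ surviving machines, over which the Hall condition is quantified) versus $V$ (the $\frac{LF}{N}$ tasks of the departing machine, each of which must be covered exactly once), and the fact that the multiplicity parameter in Lemma~\ref{lem:Hall_delta} is $\DNNMO = \frac{LF}{N(N-1)}$, matching the per-machine load increase. If anything requires a sentence of justification, it is the distributivity step $(\cup_{n\in J}\overline{\SNn})\cap \SNns = \cup_{n\in J}(\overline{\SNn}\cap \SNns)$, which is elementary set algebra. One could also remark in passing that the condition is trivially satisfied for singletons $J=\{n\}$ precisely when each surviving machine is missing at least $\DNNMO$ of Machine $n^*$'s tasks, a sanity check consistent with the load-balancing constraints.
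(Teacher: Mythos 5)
Your proposal is correct and follows exactly the paper's route: chain Lemma~\ref{lem:matching} with Lemma~\ref{lem:Hall_delta} applied to $\Gns$, using the identification $\Gamma_{\Gns}(n) = \overline{\SNn}\cap \SNns$ and distributivity of intersection over union to arrive at \eqref{eq:Hall_inequality}. The only difference is that you spell out the distributivity step explicitly, which the paper leaves implicit.
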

\begin{proof} 
The conclusion is straightforward from Lemma~\ref{lem:matching} and Lemma~\ref{lem:Hall_delta} and the following observation: by the definition of the transition matrix $\Gns$, the set of neighbours of a vertex $n \in \Uns = [N]\setminus \{n^*\}$ in $\Gns$ is 
$\Gamma_{\Gns}(n) = \overline{\SNn} \cap \SNns$. 
\end{proof} 

Theorem~\ref{thm:Hall_like} provides a necessary and sufficient condition for the existence of a zero-waste transition from an \NT~to an \NMOT~no matter which machine leaves. Essentially, it states that as long as the sets of tasks of different machines do not overlap too much then there exists a zero-waste transition.  
Recall that $\DNNMO = \frac{LF}{N(N-1)}$.

\begin{theorem} 
\label{thm:Hall_like}
There exists a zero-waste transition from an \NT~$\SN\hspace{-5pt}=\hspace{-3pt}(S^N_1\hspace{-3pt},\ldots,S^N_N)$ to an \NMOT{} when Machine $n^*$ leaves for every $n^* \in [N]$ if and only if
\begin{equation} 
\label{eq:main}
|\cap_{n \in I} \SNn| \leq (N-|I|)\DNNMO,
\end{equation} 
for every nonempty set $I \subseteq [N]$.
Moreover, such a transition can be found in time $\O\big((N-1+\frac{LF}{N})(N-1)F(1-\frac{L}{N})\big)$.
\end{theorem}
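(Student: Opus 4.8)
\textbf{Proof proposal for Theorem~\ref{thm:Hall_like}.}
The plan is to reduce the ``for every $n^*$'' statement to a single, cleaner inequality by using Corollary~\ref{cr:one_leave} as the starting point. By that corollary, a zero-waste transition exists when Machine $n^*$ leaves if and only if $|(\cup_{n \in J} \overline{\SNn}) \cap \SNns| \geq |J|\DNNMO$ for every nonempty $J \subseteq [N]\setminus\{n^*\}$. So the theorem's condition is equivalent to demanding this for every $n^* \in [N]$ and every nonempty $J \subseteq [N]\setminus\{n^*\}$. First I would rewrite the left-hand side via complementation inside $\SNns$: writing $I = J \cup \{n^*\}$ (so $I$ ranges over all subsets of $[N]$ of size $\geq 2$ containing $n^*$), we have $(\cup_{n\in J}\overline{\SNn})\cap\SNns = \SNns \setminus (\cap_{n\in J}\SNn) = \SNns \setminus (\cap_{n\in I}\SNn)$, since $n^* \in I$. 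Hence the left-hand side equals $|\SNns| - |\cap_{n\in I}\SNn| = \frac{LF}{N} - |\cap_{n\in I}\SNn|$, and the required inequality becomes $\frac{LF}{N} - |\cap_{n\in I}\SNn| \geq (|I|-1)\DNNMO$, i.e., $|\cap_{n\in I}\SNn| \leq \frac{LF}{N} - (|I|-1)\DNNMO$. Now using $\DNNMO = \frac{LF}{N(N-1)}$, the right-hand side is $\frac{LF}{N}\big(1 - \frac{|I|-1}{N-1}\big) = \frac{LF}{N}\cdot\frac{N-|I|}{N-1} = (N-|I|)\DNNMO$, which is exactly \eqref{eq:main}. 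So the core of the argument is this bookkeeping identity together with the observation that $I=J\cup\{n^*\}$ sets up a bijection between the two families of index sets being quantified over.

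Two small edge cases need to be checked so that the quantifier ``for every nonempty $I \subseteq [N]$'' in \eqref{eq:main} is exactly right. When $|I| = 1$, say $I = \{n\}$, the inequality \eqref{eq:main} reads $|\SNn| \leq (N-1)\DNNMO = \frac{LF}{N}$, which holds with equality by the Load Balancing property, so including singletons in the quantifier is harmless (they impose nothing). When $|I| = N$, \eqref{eq:main} reads $|\cap_{n\in[N]}\SNn| \leq 0$, i.e., the task sets have empty common intersection; this corresponds to $J = [N]\setminus\{n^*\}$ in Corollary~\ref{cr:one_leave}, the largest allowed $J$, so it is genuinely required. Thus ranging $I$ over all nonempty subsets of $[N]$ captures precisely the union, over all $n^*$, of the conditions from Corollary~\ref{cr:one_leave}, with the $|I|=1$ cases being automatically satisfied. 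This establishes the equivalence.

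For the complexity claim, the point is that once \eqref{eq:main} holds, Lemma~\ref{lem:matching} tells us a zero-waste transition is exactly a perfect $\DNNMO$-matching in the transition graph $\Gns$, and such a matching can be found by a max-flow computation: orient all edges from $\Uns$ to $\Vns$, add a source $s$ with an arc of capacity $\DNNMO$ to each $u \in \Uns$, add a sink $t$ with an arc of capacity $1$ from each $v \in \Vns$, and every edge of $\Gns$ capacity $1$; a saturating integral flow of value $|\Vns| = \frac{LF}{N}$ yields the matching. The graph $\Gns$ has $|\Uns| = N-1$ left vertices, $|\Vns| = \frac{LF}{N}$ right vertices, and each left vertex $u$ has degree $|\overline{\SNn}\cap\SNns| \leq |\overline{\SNn}| = F - \frac{LF}{N} = F(1-\frac{L}{N})$, so the number of edges is at most $(N-1)F(1-\frac{L}{N})$. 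Running an augmenting-path algorithm such as Ford--Fulkerson, each augmentation pushes at least one unit and there are at most $\frac{LF}{N}$ units total but the value is also bounded by $(N-1)\DNNMO = \frac{LF}{N}$; the dominant cost is number of augmentations times cost per augmentation (a BFS/DFS over $O(N-1+\frac{LF}{N})$ vertices and $O((N-1)F(1-\frac{L}{N}))$ edges), giving the stated bound $\O\big((N-1+\frac{LF}{N})(N-1)F(1-\frac{L}{N})\big)$ after absorbing constants.

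The main obstacle, such as it is, is purely in getting the complementation identity and the arithmetic with $\DNNMO$ exactly right — in particular making sure the substitution $I = J\cup\{n^*\}$ is a genuine bijection (it is: $n^*$ is recovered as the distinguished element, and $J = I\setminus\{n^*\}$ is nonempty precisely when $|I|\geq 2$) and that the $|I|=1$ boundary case is vacuous rather than contradictory. Everything else is an application of results already in hand (Corollary~\ref{cr:one_leave}, Lemma~\ref{lem:matching}, Lemma~\ref{lem:Hall_delta}) plus a textbook max-flow running-time estimate.
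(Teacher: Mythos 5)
Your proposal is correct and follows essentially the same route as the paper's proof: reduce to Corollary~\ref{cr:one_leave}, rewrite the union of complements via De Morgan inside $\SNns$, substitute $I = J \cup \{n^*\}$, observe that $|I|=1$ is vacuous, and attribute the complexity to a network-flow computation on the transition graph. The only difference is that you spell out the flow construction and edge-count bookkeeping in more detail than the paper does, which is a harmless elaboration rather than a different argument.
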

\begin{proof} 
Let $\Gns$ be the transition graph of an \NT~$\SN$ with the vertex set $(\Uns,\Vns)$ (Definition~\ref{def:transition_graph}).
By Corollary~\ref{cr:one_leave}, it suffices to show that the inequality \eqref{eq:Hall_inequality} holds for every nonempty set $J \subseteq [N]\setminus \{n^*\}$ and for every $n^* \in [N]$ if and only if \eqref{eq:main} holds for every nonempty set $I \subseteq [N]$.

Suppose that \eqref{eq:Hall_inequality} holds as stated. Note that
\[
\begin{split}
|(\cup_{n \in J} \overline{\SNn}) \cap \SNns| &= |\overline{\cap_{n \in J} \SNn} \cap \SNns|
= |\SNns \setminus \cap_{n \in J} \SNn|\\
&= |\SNns| - |\cap_{n \in J \cup \{n^*\}} \SNn|.
\end{split}
\] 
Therefore, \eqref{eq:Hall_inequality} is equivalent to 
\[
|\cap_{n \in J \cup \{n^*\}} \SNn| \leq |\SNns| - |J|\DNNMO.
\]
Setting $I = J \cup \{n^*\}$, this is also equivalent to 
\[
\begin{split}
|\cap_{n \in I} \SNn| &\leq |\SNns| - (|I|-1)\DNNMO\\
&= (N-1)\DNNMO - (|I|-1)\DNNMO\\
&= (N-|I|)\DNNMO.
\end{split}
\]
Note that as $n^*$ varies over $[N]$ and $J$ varies over all nonempty subsets of $[N]\setminus \{n^*\}$, $I = J \cup \{n^*\}$ varies over all subsets of $[N]$ of size at least two. Furthermore, \eqref{eq:main} holds trivially (with equality) when $|I|=1$. Therefore, \eqref{eq:main} holds for all nonempty sets $I \subseteq [N]$. Hence, we settle the \emph{only if} direction. As all steps are equivalent transformations, the \emph{if} direction is also true. The complexity of finding a zero-waste transition comes from that of a network flow algorithm~\cite{AhujaMagnantiOrlin} employed to find a perfect matching for $\Gns$. This completes the proof.     
\end{proof} 

Theorem~\ref{thm:Hall_like} provides us with an important insight: to make transitions with zero waste possible, we should assign to machines sets of tasks with small overlaps. This will be crucial in our construction of an ETAS with zero transition waste in the next section.

\subsection{A Zero-Waste Elastic Task Allocation Scheme}
So far we have discussed the case of a single machine leaving or joining. The more challenging question is how to allow a (possibly infinite) chain of such elastic events while guaranteeing zero-waste transitions. More specifically, we are interested in establishing a \emph{zero-waste range} $\Nmm \subset [L,F]$ where the system can start with any number $N_0$ of machines, $N_0 \in [\Nmi, \Nm]$, and then can transition \emph{with zero wastes} an arbitrary number of times \emph{within this range}, one machine leaving or joining at a time. 
We show the existence of a handful of such ranges in Theorem~\ref{thm:ZWR} and Corollary~\ref{cr:ZWR}. 
We first need a formal definition of a zero-waste range.

\begin{definition}[Zero-waste range] 
\label{def:ZWR}
Given $L$ and $F$, a range $\Nmm$, where $L \leq \Nmi \leq \Nm \leq F$ is called an $(L,F)$-zero-waste range ($(L,F)$-ZWR) if for every $N_0 \in \Nmm$ there exists an $(N_0,L,F)$-ETAS $(\SNz,\T)$ (see Definition~\ref{def:ETAS}) where the transition algorithm $\T$ incurs a zero waste whenever the transition is within the range $\Nmm$.  
\end{definition} 

Note that $\Nmi$ and $\Nm$ are usually functions of $L$ and~$F$. Also, the transition algorithm $\T$ mentioned in Definitions~\ref{def:ETAS} and~\ref{def:ZWR} can be applied repeatedly to enable a chain of transitions within $\Nmi$ and $\Nm$ machines by adding or removing one machine at a time.
It turns out that if we can construct an $(N_0,L,F)$-ETAS $(\SNz,\T)$ so that $\T$ incurs a zero transition waste within $\Nmm$ for \emph{some} $N_0 \in \Nmm$ then we can also construct an $(N'_0,L,F)$-ETAS satisfying the same property for \emph{every} $N'_0 \in \Nmm$, i.e., $\Nmm$ is an \lfr. In particular, we show that this claim is true when $N_0 = \Nm$.

\begin{lemma} 
\label{lem:Nmax}
If there exists an $(\Nm, L, F)$-ETAS $(\S^{\Nm},\T)$ so that $\T$ always incurs a zero transition waste for every possible chain of $\Nm-\Nmi$ transitions from $\Nm$ to $\Nmi$ machines (machines leaving only) then $\Nmm$ is an \lfr.
\end{lemma}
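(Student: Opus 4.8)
The plan is to show that a single ETAS anchored at $\Nm$ with zero-waste *downward* transitions already yields, for every $N_0 \in \Nmm$, an ETAS that is zero-waste for *all* transitions inside $\Nmm$ — not just downward ones. The key observation is reversibility: if a transition from an \NT{} $\SN$ to an \NMOT{} $\SNMO$ is zero-waste, then by Lemma~\ref{lem:zero_trivial} we have $\SNMOn \supset \SNn$ for every surviving machine $n$ (other machines keep their tasks and merely absorb the leaving machine's tasks). Running this backwards, the transition from $\SNMO$ to $\SN$ — i.e.\ the machine *rejoining* — is also zero-waste, since $\SNn \subset \SNMOn$ is exactly the condition in Lemma~\ref{lem:zero_trivial}. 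So the hypothesized chain of leave-transitions from $\Nm$ down to $\Nmi$ produces a family of TASs $\S^{\Nm}, \S^{\Nm-1}, \ldots, \S^{\Nmi}$ (one representative per population size), and moving between consecutive members in *either* direction is zero-waste.

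First I would make precise the transition algorithm $\T'$ for an arbitrary starting point $N_0 \in \Nmm$: the system, regardless of its current population $N$, always holds the TAS $\S^N$ from this precomputed family (up to relabelling of machine indices, which does not affect transition waste by the argument used for Theorem~\ref{thm:shifted_leaving}). When a machine leaves, $\T'$ moves from $\S^N$ to $\S^{N-1}$; when a machine joins, $\T'$ moves from $\S^N$ to $\S^{N+1}$, which is the reverse of the (zero-waste) leave-transition $\S^{N+1}\to\S^N$ and hence also zero-waste. Starting from $\S^{N_0}$, any finite chain of $\pm 1$ steps staying inside $\Nmm$ is a concatenation of such moves, each zero-waste, so the total transition waste is zero. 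This shows $(\S^{N_0},\T')$ is an $(N_0,L,F)$-ETAS with the required property, and since $N_0 \in \Nmm$ was arbitrary, $\Nmm$ is an \lfr.

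There is one subtlety to handle carefully, which I expect to be the main (though mild) obstacle: the hypothesis only guarantees that there *exists some* zero-waste chain from $\Nm$ to $\Nmi$, so a priori the TAS reached at population size $N$ could depend on the path taken and on which machines left. I would resolve this by fixing *one* such chain once and for all and defining $\S^N$ to be the TAS it visits at size $N$; this is legitimate because $\T'$ is free to ignore the identity of the departing machine and simply reindex — when a machine leaves at population $N$, the remaining machines can always be relabelled so that the configuration matches $\S^{N-1}$ (their task sets are the right sizes and satisfy $L$-redundancy, and a bijection realizing $\SNMOn \supseteq$ the appropriate $\S^N$-set exists by construction of the chain). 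The second minor point is the join direction: when a machine joins at population $N$, $\T'$ must tell the new machine which task set to adopt; it adopts the set that the departing machine held in the reverse leave-transition $\S^{N+1}\to\S^N$, and the existing machines shed exactly the tasks they had absorbed there, so $\SNn \subset \S^{N+1}_n$ holds and Lemma~\ref{lem:zero_trivial} applies. Once these bookkeeping points are nailed down, the proof is immediate.
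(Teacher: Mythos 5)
Your central observation --- that a zero-waste departure is reversible, so the corresponding join (moving back to the predecessor TAS) is also zero-waste by Lemma~\ref{lem:zero_trivial}, and that the joining machine contributes nothing to the waste since the sum in Definition~\ref{def:TW} runs only over the pre-existing machines --- is correct and is exactly the ``move to the parent node'' step in the paper's argument. However, there is a genuine gap in how you handle the identity of the departing machine. You read the hypothesis as guaranteeing only that \emph{some} zero-waste chain from $\Nm$ down to $\Nmi$ exists, and you then try to repair this by fixing one chain, letting $\S^N$ denote the TAS it visits at size $N$, and relabelling the survivors whenever a machine other than the ``scheduled'' one departs. That repair does not work. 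A zero-waste departure forces each surviving machine to \emph{keep} its current task set and only absorb new tasks, so after relabelling by a bijection $\sigma$ you would need $S^N_n \subseteq S^{N-1}_{\sigma(n)}$ for every survivor $n$. The fixed chain only certifies such containments for the $N-1$ machines other than the specific machine $n^*$ that departed in that chain; if instead some $m \neq n^*$ leaves, the surviving machine $n^*$ must be matched to a slot $j$ with $S^N_{n^*} \subseteq S^{N-1}_j$, and nothing about a single chain guarantees that any such slot exists (for the low-overlap TASs of Construction~1 one should expect it typically does not). The same problem recurs at every subsequent level, since the state reached after an ``off-chain'' departure need not admit any further zero-waste departures.

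The gap disappears once the hypothesis is read as stated: $\T$ is assumed zero-waste for \emph{every possible} chain of departures, i.e.\ for every order and every choice of which machine leaves at each step. This is exactly what the paper's transition tree encodes: the node reached at population $N$ depends on the entire departure history, there is one node per history rather than one TAS per population size, and a join is handled by returning to the parent node --- which is precisely your reversibility argument. So the correct proof is essentially yours with the ``fix one chain and relabel'' paragraph deleted and the single family $\{\S^N\}$ replaced by the history-indexed tree of TASs.
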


Before proving this lemma, we need the concept of a \emph{transition tree}, which keeps track of all the possible \emph{states} the system can be at and the transitions leading to them from the original state, where a state consists of the list of machines available and the corresponding TAS.
The transition tree is, in fact, an explicit way to represent an ETAS.  

\begin{figure}[!htb]
\centering
\includegraphics[scale=1]{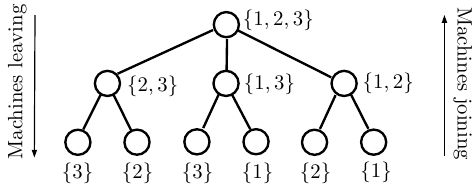}
\caption{Illustration of a transition tree when $\Nmi = 1$ and $\Nm = 3$. The set of available machines is given at each node (we omit the TAS associated with each node).}
\label{fig:ex_transition_tree}
\end{figure}

\begin{definition}[Transition tree] 
Given an $(\Nm, L, F)$-ETAS $(\S^{\Nm},\T)$ satisfying the assumption of Lemma~\ref{lem:Nmax}, the corresponding transition tree $\mathfrak{T}$ is a rooted tree created as follows.
The root node of the tree consists of the set $[\Nm]$ and the corresponding $(\Nm,L,F)$-TAS. Other nodes can be created in a recursive manner. Suppose that a node $u$ is already created that consists of a set of indices $\I$ and an $(|\I|,L,F)$-TAS. If $|\I| > \Nmi$, the $|\I|$ child nodes of $u$ can be created as follows. Each child node $v$ corresponds to the removal of one machine indexed by $n^*\in \I$ and consists of the list $\J \define \I \setminus \{n^*\}$ and a $(|\J|,L,F)$-TAS obtained by applying the transition algorithm $\T$ to the $(|\I|,L,F)$-TAS of $u$. 
\end{definition}  

For instance, when $\Nm = 3$, $\Nmi = L = 1$, we have a transition tree illustrated in Fig.~\ref{fig:ex_transition_tree}. 
\begin{proof}[Proof of Lemma~\ref{lem:Nmax}]
Based on the transition tree, it is easy to see that once the system can start from an $(\Nm,L,F)$-TAS and transition with zero wastes down to an $(\Nmi, L,F)$-TAS in all possible ways then we can also start from any intermediate $(N_0,L,F)$-TAS, $N_0 \in \Nmm$, and transition with zero wastes within this range. Indeed, if one machine leaves and the system is currently at a state corresponding to a node in the tree, then it can transition to a child node depending on which node is leaving. Vice versa, if one machine joins, the system can transition to the state stored at the parent node.  
\end{proof}

\begin{remark}[Overhead incurred by the transition tree]
As shown in the proof of Lemma~\ref{lem:Nmax}, the transition tree is used to keep track of all zero-waste transitions possible within the range $\Nmm$. The entire tree can be created once by the master machine before the computation session starts or can be created on the fly. The tree has height $\Nm-\Nmi$ and a total of 
$
1+\sum_{h = 1}^{\Nm-\Nmi} \prod_{i=0}^{h-1} (N-i)
$
nodes, which is in the order of $N!$. To create a child node, a network flow algorithm is invoked to find the zero-waste transition (however, the computation required becomes lighter when it gets closer to the leaves). The creation and storage of the transition tree incurs significant storage and computation overheads at the master node, and therefore, using the tree is beneficial when we have relatively small $N$ and $F$ and intensive tasks so that having zero transition waste pays off. Maintaining a zero-waste ETAS with lower overheads remains an open question for future research. 
\end{remark}

Based on Lemma~\ref{lem:Nmax}, we now describe our construction of \lfr s~based on the so-called \emph{symmetric configurations} from combinatorial designs.
  
\begin{definition}[Configuration~\cite{colbourn2006handbook}] 
\label{def:conf}
A $(v, b, k, r)$-configuration is an incident structure of $v$ points and $b$ lines such that 
\begin{itemize}
	\item each line contains $k$ points,
	\item each point lies on $r$ lines, and
	\item two different points are connected by at most one line.
\end{itemize}
If $v = b$ and, hence, $r = k$, the configuration is symmetric, denoted by $(v,k)$-configuration. 
\end{definition} 

The famous Fano plane is a $(7,3)$-configuration with seven points $\{1,2,\ldots,7\}$ and seven lines: $\{1,2,3\}$, $\{1,4,5\}$, $\{1,6,7\}$, $\{2,4,6\}$, $\{2,5,7\}$, $\{3,5,6\}$, and $\{3,4,7\}$ (Fig.~\ref{fig:Fano}).  \vspace{-10pt}
\begin{figure}[!htb]
\centering
\includegraphics[scale=0.35]{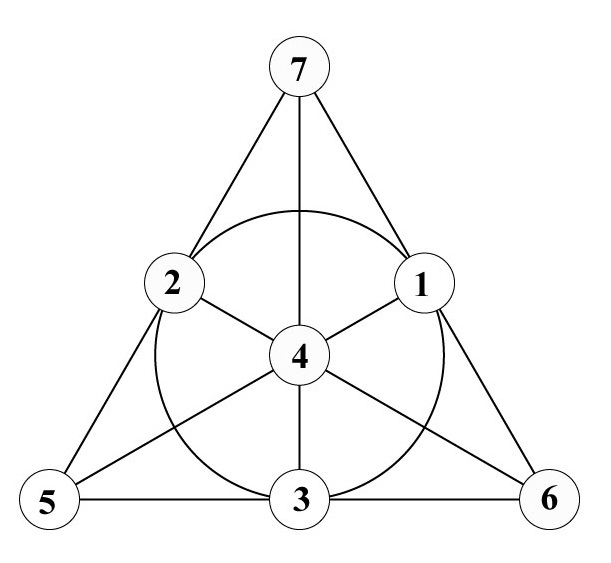}
\caption{A Fano plane with seven points and seven lines.}
\label{fig:Fano}
\end{figure}

We first show that an $(\Nm,L)$-configuration can be used to construct an $(\Nm,L,F)$-TAS with small pairwise overlaps and then present a method to establish an $\Nmm$-zero-waste range from such a TAS. Essentially, points correspond to tasks while lines correspond to sets of tasks.  As there are $\Nm$ points and $F$ tasks, it is natural to associate each point with $F/\Nm$ tasks.\\

\textbf{Construction 1.} Suppose that $\Nm$ divides $F$ and $\B=\{B_1,\ldots,B_{\Nm}\}$ is the set of $\Nm$ lines of an $(\Nm,L)$-configuration. An $(\Nm,L,F)$-TAS $\SNm$ can be constructed as follows. 
First, partition $[[F]]=\{0,\ldots,F-1\}$ into $\Nm$ equal sized parts $F_1,\ldots,F_{\Nm}$. Then for each $n \in [\Nm]$ we assign to Machine $n$ the tasks indexed by the parts $F_p$'s corresponding to all points $p$ in the line $B_n$. In other words, we set $\SNmn := \cup_{p \in B_n} F_p$, for every $n \in [\Nm]$.\\

For instance, when there are $\Nm = 7$ machines, $L = 3$, and $F = 14$ tasks, we first partition $[[F]]$ in to seven parts: 
\[
\begin{split}
&F_1 = \{0,1\}, F_2 = \{2,3\}, F_3 = \{4,5\},\\ &F_4 = \{6,7\}, F_5 = \{8,9\}, F_6 = \{10,11\}, F_7 = \{12,13\}.
\end{split}
\]
Then, using the $(7,3)$-configuration (the Fano plane) in Construction~1, we obtain a $(7,3,14)$-TAS, represented by Fig.~\ref{fig:Fano_TAS}. For instance, Machine~1 is allocated the task set $S^7_1 = \{0,1,\ldots,5\} =F_1\cup F_2 \cup F_3$, while Machine~2 has the task set $S^7_2 = \{0,1,6,7,8,9\} = F_1 \cup F_4 \cup F_5$. Clearly, each task is performed by $L= 3$ machines and each machine performs $LF/\Nm=6$ tasks.

\begin{figure}[!htb]
	\tabcolsep=0.07cm
		\renewcommand{\arraystretch}{1.2}%
	\centering	
	\begin{tabular}{|c|c|c|c|c|c|c|}
	\hline      
	$S^7_1$ & $S^7_2$ & $S^7_3$ & $S^7_4$ & $S^7_5$ & $S^7_6$ & $S^7_7$\\
    \hline
    $\{0,1\}$ & $\{0,1\}$ & $\{0,1\}$ 	&			& 			& 			& \\
    \hline
    $\{2,3\}$ &       &     	& $\{2,3\}$ 	& $\{2,3\}$		&  			& \\
    \hline
    $\{4,5\}$ &       &     	& 	 		&  			& $\{4,5\}$ 	& $\{4,5\}$ \\
    \hline
          & $\{6,7\}$ &     	& $\{6,7\}$ 	& 			& 			& $\{6,7\}$ \\
    \hline
          & $\{8,9\}$ &     	&  			& $\{8,9\}$		& $\{8,9\}$		& \\
    \hline
          &       & $\{10,11\}$ & $\{10,11\}$	& 			& $\{10,11\}$ 	& \\
    \hline
          &       & $\{12,13\}$ &  			& $\{12,13\}$ 	&			& $\{12,13\}$\\
    \hline
     \end{tabular}
     \caption{A $(7,3,14)$-TAS constructed from the Fano plane. The table rows/columns correspond to the plane points/lines.} 
     \label{fig:Fano_TAS}
\end{figure}
Since every two lines in a configuration either don't intersect or intersect at only one point, the resulting TAS also has small pairwise intersections, which is crucial for our construction of a zero-waste range. 

\begin{lemma} 
\label{lem:configuration_TAS}
Construction~1 produces an \NmT{} where every two task sets intersect at at most $F/\Nm$ tasks. 
\end{lemma}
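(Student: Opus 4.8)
The plan is to unpack Construction 1 and verify the two required properties of a TAS directly, then read off the intersection bound from the defining property of configurations. First I would recall the setup: $[[F]]$ is partitioned into $\Nm$ equal parts $F_1,\ldots,F_{\Nm}$, each of size $F/\Nm$, and for each $n \in [\Nm]$ we set $\SNmn = \cup_{p \in B_n} F_p$ where $B_n$ is the $n$-th line of the $(\Nm,L)$-configuration. Since $v = b = \Nm$ and hence $r = k = L$ in a symmetric configuration, each line $B_n$ contains exactly $L$ points, and since the parts $F_p$ are pairwise disjoint, $|\SNmn| = \sum_{p \in B_n}|F_p| = L \cdot (F/\Nm) = LF/\Nm$, giving Load Balancing. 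For $L$-Redundancy, a fixed task $f \in [[F]]$ lies in exactly one part, say $F_p$; then $f \in \SNmn$ if and only if $p \in B_n$, and since each point lies on exactly $r = L$ lines, $f$ is covered by precisely $L$ machines. Thus $\SNm$ is an $(\Nm,L,F)$-TAS.

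Next I would bound the pairwise intersections. Take two distinct machines $n \neq m$. Then
\[
\SNmn \cap S^{\Nm}_m = \Big(\bigcup_{p \in B_n} F_p\Big) \cap \Big(\bigcup_{p' \in B_m} F_{p'}\Big) = \bigcup_{p \in B_n \cap B_m} F_p,
\]
where the last equality again uses that the $F_p$ are pairwise disjoint (so a part survives the intersection exactly when its index lies in both lines). By the third axiom of a configuration, two distinct points lie on at most one common line; dually (or by the same bookkeeping) two distinct lines meet in at most one point, so $|B_n \cap B_m| \le 1$. Hence $|\SNmn \cap S^{\Nm}_m| \le |F_p| = F/\Nm$ for some single $p$ (or the intersection is empty), which is the claimed bound.

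I anticipate no serious obstacle here: the statement is essentially a translation of configuration axioms through the partition, and every step is a one-line set-theoretic manipulation. The only point requiring a modicum of care is the claim that two distinct lines of a configuration intersect in at most one point — this is not one of the three listed axioms verbatim (which instead says two points share at most one line), so I would either invoke it as a standard dual fact for configurations or, to be self-contained, note that if lines $B_n, B_m$ shared two distinct points $p, q$ then $p$ and $q$ would lie on two distinct common lines, contradicting the third axiom. With that remark in place the proof is complete.
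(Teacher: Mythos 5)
Your proof is correct and follows essentially the same route as the paper's: verify Load Balancing from $|B_n|=L$ and the disjointness of the parts, verify $L$-Redundancy from each point lying on $L$ lines, and bound the intersection by the fact that two lines share at most one point. Your extra remark deriving "two lines meet in at most one point" from the stated axiom (two points lie on at most one line) is a small but worthwhile addition, since the paper invokes that dual fact without justification.
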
 
\begin{proof} 
According to Construction~1, each set of task has size
\[
|\SNmn| = |B_n|\frac{F}{\Nm} = \frac{LF}{\Nm}.
\]
Moreover, as each point $p$ in the configuration belongs to exactly $L$ lines, each task also belongs to precisely $L$ task sets. Hence, the resulting $\SNm$ is indeed an \NmT. Moreover, since every two lines in the configuration intersect at at most one point, every two task sets $\SNmn$ and $\SN_{n'}$, $n \neq n'$, intersect at at most $F/\Nm$ tasks as claimed. 
\end{proof} 

Note that the \emph{expected} cardinality\footnote{Indeed, as each point in a set of size $F$ belongs to a random subset of size $\frac{LF}{N}$ with probability $\frac{L}{N}$, the probability that a point belongs to two independent random subsets of size $\frac{LF}{N}$ is $\frac{L^2}{N^2}$. This implies that the expected size of the intersection of the two random subsets of that size is $\frac{L^2}{N^2}F = \frac{L^2}{N}\frac{F}{N}$.} of the intersection of two \emph{random} subsets of cardinality $LF/N$ of $[[F]]=\{0,1,\ldots,F-1\}$ is $\frac{L^2}{N}\frac{F}{N}$, which is approximately $F/N$ for $L^2 \approx N$. 
Therefore, a random construction doesn't provide smaller (expected) pairwise intersections than Construction~1.


By Lemma~\ref{lem:configuration_TAS}, Construction~1
produces an initial \NmT~with small pairwise set overlaps. To show that $R$ machines can be removed one by one from this TAS with zero transition wastes, we first show that the pairwise intersections of the sets of intermediate TASs do not increase too much. Then, by using the pairwise intersection as an upper bound on the intersection of any set $I$ of task sets, $|I| \leq L$, we can guarantee that the intersections still satisfy the Hall-like condition in Theorem~\ref{thm:Hall_like}. As a consequence, zero-waste transitions will be possible within the range $[\Nm-R,\Nm]$.

\begin{theorem} 
\label{thm:ZWR} 
If there exists an $(\Nm,L)$-configuration then  there exists an \NmT~$\SNm=(S^{\Nm}_1,\ldots,S^{\Nm}_{\Nm})$ where
\[
|S^{\Nm}_n \cap S^{\Nm}_{n'}| \leq \frac{F}{\Nm},
\]
for every $n, n' \in [\Nm]$, $n \neq n'$. This leads to the existence of an $(L,F)$-zero-waste range $[\Nm-R,\Nm]$ where
\begin{equation} 
\label{eq:R}
R \define 1+\left\lfloor\frac{(3L\Nm-2\Nm-2L+1)-\sqrt{\Delta}}{4L-2}\right\rfloor
\end{equation} 
and 
\begin{equation} 
\label{eq:delta}
\Delta = L\Nm(L\Nm + 8L^2 - 16L + 6) + (2L-1)^2.
\end{equation}
We assume here that $N$ \hoang{divides} $F$ for every $N \in [\Nm-R,\Nm]$, \hoang{and $L\geq 1$ and $\Nm \geq 2$.} 
\end{theorem}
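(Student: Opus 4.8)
The first claim —that the \NmT{} of Construction~1 has $|S^{\Nm}_n\cap S^{\Nm}_{n'}|\le F/\Nm$— is exactly Lemma~\ref{lem:configuration_TAS}, so the real work is establishing the zero-waste range. By Lemma~\ref{lem:Nmax} it suffices to start from this \NmT{} and show that, \emph{for every} chain of single-machine removals, one can perform zero-waste transitions all the way down to $\Nmi\define\Nm-R$ machines. The plan is to maintain two invariants along any such chain, at each $N$-machine stage: (i) an explicit upper bound on the pairwise overlap $\mu_N\define\max_{n\ne n'}|S^N_n\cap S^N_{n'}|$, and (ii) the property that any $L+1$ of the $N$ task sets have empty common intersection. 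Invariant (ii) holds initially because in an $(\Nm,L)$-configuration each point lies on exactly $L$ lines. I claim both invariants survive a zero-waste transition: when Machine $n^*$ leaves, the tasks it sheds are split into \emph{pairwise disjoint} blocks $A_a\subseteq S^N_{n^*}$ with $|A_a|=\DNNMO$ and $S^{N-1}_a=S^N_a\cup A_a$ (this is just a perfect $\DNNMO$-matching of the transition graph, cf.\ Lemma~\ref{lem:matching}); a task lying in $L+1$ of the new sets would have to lie in some $A_a$ and in $L$ old sets, hence in $S^N_{n^*}$ and $L$ other old sets — a point on $L+1$ lines, a contradiction — so (ii) persists, and disjointness of the $A_a$'s gives $|S^{N-1}_a\cap S^{N-1}_b|\le|S^N_a\cap S^N_b|+|A_a|+|A_b|$, whence $\mu_{N-1}\le\mu_N+2\DNNMO=\mu_N+\tfrac{2LF}{N(N-1)}$.

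Next I run the induction. Set $\bar\mu_{\Nm}=F/\Nm$ and $\bar\mu_{N-1}=\bar\mu_N+\tfrac{2LF}{N(N-1)}$; telescoping yields the closed form $\bar\mu_N=\tfrac{2LF}{N}-\tfrac{(2L-1)F}{\Nm}$, and by the above $\mu_N\le\bar\mu_N$ holds at every stage reached. To guarantee a zero-waste transition out of the current $N$-TAS (whichever machine leaves next), it is enough, by Theorem~\ref{thm:Hall_like}, that $|\cap_{n\in I}S^N_n|\le(N-|I|)\DNNMO$ for all nonempty $I\subseteq[N]$: for $|I|=1$ this is an equality, for $|I|>L$ the left side is $0$ by invariant (ii), and for $2\le|I|\le L$ the left side is at most $\mu_N\le\bar\mu_N$ while the right side is at least $(N-L)\DNNMO$. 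So the whole condition is implied by the single inequality $\bar\mu_N\le(N-L)\DNNMO$. Substituting the closed form for $\bar\mu_N$ and $\DNNMO=\tfrac{LF}{N(N-1)}$ and clearing the positive denominator $N(N-1)\Nm$, this reduces to the quadratic condition
\[
q(N)\define(2L-1)N^2-(L\Nm+2L-1)N-L\Nm(L-2)\ \ge\ 0 .
\]

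Finally I analyze $q$. Its leading coefficient $2L-1$ is positive, and for $L\ge2$ one computes $q(L)=L(L-1)\bigl(2L-1-2\Nm\bigr)<0$ (using $\Nm>L$, which always holds for a symmetric configuration), so $q$ has a unique root $N_+>L$ with $q(N)\ge0\iff N\ge N_+$. A short computation identifies the discriminant of $q$ with the quantity $\Delta$ in \eqref{eq:delta}, giving $N_+=\tfrac{L\Nm+2L-1+\sqrt{\Delta}}{4L-2}$. Hence the zero-waste transition out of an $N$-machine state is certified whenever $N\ge\lceil N_+\rceil$, so the chain may safely descend to $\Nmi=\lceil N_+\rceil-1$ machines; since every $N\in[\Nmi+1,\Nm]$ satisfies $N\ge N_+$, all the intermediate TASs obey invariants (i)--(ii) and the induction closes, and Lemma~\ref{lem:Nmax} then certifies that $[\Nmi,\Nm]$ is an \lfr. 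Writing $R=\Nm-\Nmi=1+(\Nm-\lceil N_+\rceil)=1+\lfloor\Nm-N_+\rfloor$ and simplifying gives exactly \eqref{eq:R}. The genuinely delicate step is the preservation of invariants (i)--(ii) under \emph{every} admissible zero-waste reallocation — not merely a favourable one — since the adversary chooses which machine departs; once that is in hand, the rest is the bookkeeping that matches the quadratic's root to the stated closed form for $R$.
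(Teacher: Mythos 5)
Your proposal is correct and follows essentially the same route as the paper: reduce to chains of removals via Lemma~\ref{lem:Nmax}, bound the growth of pairwise intersections by $2\DNNMO$ per zero-waste transition, use the pairwise bound to control all $|I|$-wise intersections with $2\le|I|\le L$ in the Hall-like condition of Theorem~\ref{thm:Hall_like}, and solve the resulting quadratic. Your quadratic $q(N)$ is exactly the paper's quadratic in $r$ under the substitution $r=\Nm-N$ (same discriminant $\Delta$, same resulting $R$), and your explicit verification that $L$-redundancy survives a transition is subsumed in the paper by the requirement that the transition output is again a TAS.
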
   
\begin{proof} 
The first statement is due to Lemma~\ref{lem:configuration_TAS}. We now prove the second statement, assuming that there exists an \NmT~as specified.
Thanks to Lemma~\ref{lem:Nmax}, it suffices to show that for every $1 \leq r < R$, after removing any $r$ machines one after another, the resulting $(\Nm-r, L, F)$-TAS still admits a zero-waste transition when one more machine leaves. Equivalently, we aim to show that this TAS satisfies the Hall-like condition~\eqref{eq:main}.

Suppose that $r < R$ machines have been removed with $r$ zero-waste transitions and $\S^{\Nm-r} = (S^{\Nm-r}_1,\ldots,S^{\Nm-r}_{\Nm-r})$ is the resulting $(\Nm-r,L,F)$-TAS. Let $I$ be a nonempty subset of indices of $|I|$ machines among the remaining ones. Note that when $|I| = 1$ or $|I| > L$, the inequality~\eqref{eq:main} is trivially satisfied. Indeed, when $|I| = 1$, the equality is achieved. When $|I| > L$, as each task cannot belong to more than $L$ task sets, the intersection of $|I|$ task sets is empty and hence, \eqref{eq:main} holds trivially. We henceforth assume $2 \leq |I| \leq L$. Suppose $n,n'\in I$, $n\neq n'$. Note that whenever there is a zero-waste transition from an \NT~to an \NMOT, each machine keeps its current task set and also takes $\DNNMO$ extra tasks. Hence, the intersection of a pair of task sets is increased by at most $2\DNNMO$ tasks. Therefore, 

\[
\begin{split}
&|\cap_{i \in I}S^{\Nm-r}_i| \leq |S^{\Nm-r}_n \cap S^{\Nm-r}_{n'}|\\
&\leq |\SNmn \cap \SNmnp| + \sum_{j = 0}^{r-1}2\Delta_{\Nm-j,\Nm-j-1}\\
&= |\SNmn \cap \SNmnp| + \sum_{j = 0}^{r-1}\bigg(\frac{2LF}{\Nm-j-1}-\frac{2LF}{\Nm-j}\bigg)\\
&\leq \frac{F}{\Nm}+\bigg(\frac{2LF}{\Nm-r}-\frac{2LF}{\Nm}\bigg)\\
&= \frac{F}{\Nm}+\frac{2LFr}{(\Nm-r)\Nm}. 
\end{split}
\]
Therefore, in order to show that \eqref{eq:main} holds for the $(\Nm-r,L,F)$-TAS $\S^{\Nm-r}$, that is, 
\[
|\cap_{i \in I}S^{\Nm-r}_i| \leq (\Nm-r-|I|)\Delta_{\Nm-r,\Nm-r-1},
\]
as we assume $|I| \leq L$, it suffices to show that
\[
\frac{F}{\Nm}+\frac{2LFr}{(\Nm-r)\Nm} \leq (\Nm-r-L)\Delta_{\Nm-r,\Nm-r-1},
\] 
or equivalently,
\begin{multline} 
\label{eq:bound}
\frac{F}{\Nm}+\frac{2LFr}{(\Nm-r)\Nm} \\ 
\leq (\Nm-r-L)\frac{LF}{(\Nm-r)(\Nm-r-1)}.
\end{multline} 
Simplifying \eqref{eq:bound}, we obtain 
\begin{multline} 
\label{eq:quadratic}
(2L-1)r^2-(3L\Nm-2\Nm-2L+1)r \\
+ \Nm(L\Nm-\Nm-L^2+1) \geq 0.
\end{multline} 
The left-hand side of \eqref{eq:quadratic} can be regarded as a quadratic polynomial in $r$, which has two positive roots
\[
\frac{(3L\Nm-2\Nm-2L+1)\pm\sqrt{\Delta}}{4L-2},
\]  
where $\Delta$ is given as in \eqref{eq:delta}. This is because when $L \geq 1$ and $\Nm \geq 2$, we have 
\[
\Delta = L\Nm\big((L\Nm-2) + 8(L-1)^2\big) + (2L-1)^2 > 0,
\]
and also, the coefficient of $r^2$ is $2L-1>0$, the coefficient of $r$ is negative, and the free coefficient is non-negative:
\begin{multline*}
\Nm(L\Nm-\Nm-L^2+1)\\ = \Nm(L-1)(\Nm-L-1) \geq 0.
\end{multline*}
Therefore, $R = 1 + \lfloor r_1\rfloor \geq 1$, where $r_1$ is the smaller (positive) root of \eqref{eq:quadratic}. Moreover, when 
\[
r \leq R-1 = \left\lfloor\frac{(3L\Nm-2\Nm-2L+1)-\sqrt{\Delta}}{4L-2}\right\rfloor,
\]
the left-hand side of \eqref{eq:quadratic} is non-negative, which implies that this inequality holds. Therefore, we have shown that for every $r < R$ defined as in \eqref{eq:R}, the inequality \eqref{eq:main} holds for the $(\Nm-r,L,F)$-TAS in consideration. Hence, there is a zero-waste transition from this TAS to an $(\Nm-r-1,L,F)$-TAS. Thus, $[\Nm-R,\Nm]$ is an $(L,F)$-zero-waste range.  
\end{proof} 

Equipped with Theorem~\ref{thm:ZWR}, we now present a few explicit zero-waste ranges based on known results on configurations from the literature of combinatorial designs. 

\begin{corollary} 
\label{cr:ZWR}
The following zero-waste ranges exist for all relevant $F$, that is, $F$ is divisible by $N(N-1)$ for every $N \in [\Nmi+1,\Nm]$.
\begin{enumerate}
	\item $L = 3$, $\Nm \geq 7$, $\Nmi = \Nm-\left\lfloor \frac{7\Nm-5-\sqrt{\Delta}}{10}\right\rfloor-1$, where $\Delta = 9\Nm^2+90\Nm+25$.\vspace{3pt}
	\item $L = 4$, $\Nm \geq 13$, $\Nmi = \Nm- \left\lfloor \frac{10\Nm-7-\sqrt{\Delta}}{14}\right\rfloor-1$, where $\Delta = 16\Nm^2+280\Nm+49$.\vspace{3pt}
	\item $L = q+1$, $\Nm = q^2+q+1$, $\Nmi = \Nm-\left\lfloor \frac{3q^3+4q^2+2q-\sqrt{\Delta}}{4q+2}\right\rfloor-1$, where $\Delta = q^6+12q^5+16q^4+4q^3-8q^2-12q-4$, for every prime power $q$.\vspace{3pt}
	\item $L = q$, $\Nm = q^2$, $\Nmi = \Nm-\left\lfloor \frac{3q^3-2q^2-2q+1-\sqrt{\Delta}}{4q-2}\right\rfloor-1$, where $\Delta = q^6+8q^5-24q^4+10q^3+4q^2-4q+1$, for every prime power $q$.\vspace{3pt}
	\item $L = q$, $\Nm = q^2-1$, $\Nmi = \Nm-\left\lfloor \frac{3q^3-2q^2-5q+3-\sqrt{\Delta}}{4q-2}\right\rfloor-1$, where $\Delta = q^6+8q^5-26q^4+2q^3+29q^2-14q+1$, for every prime power $q$.	
\end{enumerate}
\end{corollary}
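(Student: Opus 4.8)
The plan is to derive each item of Corollary~\ref{cr:ZWR} as a direct specialization of Theorem~\ref{thm:ZWR}, by plugging in a concrete family of symmetric configurations and then simplifying the formulas \eqref{eq:R} and \eqref{eq:delta} for $R$ and $\Delta$. So the first step is to recall, for each case, which $(\Nm, L)$-configuration is being invoked: items (1) and (2) use the classical fact that $(v,3)$-configurations exist for all $v \ge 7$ and $(v,4)$-configurations exist for all $v \ge 13$ (these are standard existence results from the combinatorial designs literature, e.g.~\cite{colbourn2006handbook}); item (3) uses the projective plane $\mathrm{PG}(2,q)$, which is a $(q^2+q+1, q+1)$-configuration for every prime power $q$; items (4) and (5) use the affine plane $\mathrm{AG}(2,q)$ (a $(q^2, q)$-configuration) and a truncated/derived affine structure giving a $(q^2-1, q)$-configuration, respectively, again for every prime power $q$.

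Once the configuration is fixed, the substitution is purely mechanical: set $L$ and $\Nm$ to the stated values in \eqref{eq:delta} and expand. For instance, in case (1) with $L=3$ we get $\Delta = 3\Nm(3\Nm + 8\cdot 9 - 48 + 6) + 25 = 3\Nm(3\Nm + 30) + 25 = 9\Nm^2 + 90\Nm + 25$, matching the claim; and \eqref{eq:R} becomes $R = 1 + \lfloor (9\Nm - 6 - 2\cdot 3 + 1 - \sqrt{\Delta})/(12-2)\rfloor = 1 + \lfloor (7\Nm - 5 - \sqrt{\Delta})/10 \rfloor$, so $\Nmi = \Nm - R = \Nm - \lfloor(7\Nm-5-\sqrt{\Delta})/10\rfloor - 1$, again as stated. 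Cases (2)–(5) are the same computation with different numbers: substitute $(L,\Nm) = (4, \Nm)$, $(q+1, q^2+q+1)$, $(q, q^2)$, $(q, q^2-1)$ in turn into \eqref{eq:delta} and \eqref{eq:R}, collect terms, and read off $\Nmi$. The divisibility hypothesis ``$F$ divisible by $N(N-1)$ for every $N \in [\Nmi+1,\Nm]$'' is exactly the standing assumption needed to make all the necessary load changes $\Delta_{N,N-1} = LF/(N(N-1))$ integral, so it carries over verbatim from Theorem~\ref{thm:ZWR}.

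The only genuine content beyond bookkeeping is verifying that each claimed configuration actually exists for the stated parameter range, and that the resulting $\Nmi$ is a sensible bound (in particular $\Nmi \ge L$, so that the range $[\Nmi, \Nm]$ is nonempty and meaningful). The existence of projective and affine planes of prime-power order is classical, and the existence of $(v,3)$- and $(v,4)$-configurations for $v$ above the stated thresholds is a known result I would simply cite; the $(q^2-1, q)$-configuration in case (5) is obtained by deleting one point and the lines through it from $\mathrm{AG}(2,q)$ (or equivalently by a standard derivation), which I would spell out in one sentence. I expect the main — though still minor — obstacle to be confirming that the sign conditions inside \eqref{eq:quadratic} and the hypotheses ``$L \ge 2$, $\Nm \ge 3$'' guaranteeing $\Delta \ge 0$ in Theorem~\ref{thm:ZWR} are satisfied in every one of the five cases, and that the floor expressions are well-defined (non-negative argument) so that $R \ge 1$; for the planes this amounts to checking a polynomial inequality in $q$, which holds for all prime powers $q \ge 2$ (and one should note $q=2$ in case (3) gives the Fano plane, $\Nm = 7$, consistent with case (1)). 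With those checks in place, the corollary follows immediately from Theorem~\ref{thm:ZWR}.
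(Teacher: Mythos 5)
Your overall route is exactly the paper's: invoke the existence of the appropriate symmetric $(\Nm,L)$-configuration in each case, feed it into Theorem~\ref{thm:ZWR}, and mechanically simplify \eqref{eq:R} and \eqref{eq:delta}; your sample computation for case (1) is correct, and your remarks about checking $\Delta\ge 0$, the sign of the floor argument, and the divisibility hypothesis are the right housekeeping.

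The one genuine error is in case (4): the affine plane $\mathrm{AG}(2,q)$ is \emph{not} a $(q^2,q)$-configuration in the sense of Definition~\ref{def:conf}. It has $q^2$ points but $q^2+q$ lines, and every point lies on $q+1$ lines, so it fails both the symmetry requirement $v=b$ and the condition $r=k$. To get a genuine symmetric $(q^2,q)$-configuration you must remove one parallel class of $q$ lines from $\mathrm{AG}(2,q)$ (equivalently, as the paper does, delete from $\mathrm{PG}(2,q)$ a point $P$, all $q+1$ lines through $P$, and the remaining points of one of those lines); only then does each point lie on exactly $q$ lines. Your case (5) construction --- deleting one point of $\mathrm{AG}(2,q)$ together with the $q+1$ lines through it --- is correct and does yield a $(q^2-1,q)$-configuration. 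With the case (4) configuration repaired, the rest of your argument goes through and coincides with the paper's proof.
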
 
\begin{proof} 
Note that $(v,k)$-configurations exist for the following $v$ and $k$.
\begin{enumerate}
	\item $k \in \{3,4\}$ and $v \geq k(k-1)+1$ (See~\cite{colbourn2006handbook}).
	\item $k = q+1$ and $v = q^2+q+1$ for any prime power $q$. Such a $(q^2+q+1,q+1)$-configuration is also referred to as a finite projective plane. This gives us the Fano plane when $q = 2$. For this existence result and the following ones, see, e.g., \cite[p. 2]{Funk_etal_2009}. 
	\item $k = q$ and $v = q^2$ for any prime power $q$. A $(q^2,q)$-configuration can be obtained from a $(q^2+q+1,q+1)$-configuration by removing a point $P$ and all $q+1$ lines containing $P$ \emph{without} removing their points, and also removing one line \emph{containing} $P$ together with all of its points.
	\item $k = q$ and $v = q^2-1$ for any prime power $q$. A $(q^2-1,q)$-configuration can be obtained from a $(q^2+q+1,q+1)$-configuration by removing a point $P$ and all $q+1$ lines containing $P$ \emph{without} removing their points, and also removing one line \emph{not} containing $P$ together with all of its points.		 
\end{enumerate}
Applying Theorem~\ref{thm:ZWR} to these configurations, setting $\Nm = v$ and $L = k$, we deduce the conclusions of the corollary.
\end{proof} 

Applying Corollary~\ref{cr:ZWR} to the case $L = 3$ and $\Nm = 7$, we obtain a $(3,F)$-ZWR $[\Nmi=5,\Nm=7]$ where the $(7,3,F)$-TAS corresponds to the Fano plane. In other words, zero-waste transitions are possible between \emph{five} and \emph{seven} machines when $L = 3$.  Similarly, when applying the corollary to the case $L = 4$ and $\Nm = 13$, we obtain a $(4,F)$-ZWR \hoang{$[\Nmi=9,\Nm=13]$}, which implies that zero-waste transitions are possible between \emph{nine} and \emph{thirteen} machines. When $L=q$ and $\Nm = q^2$, for instance, we obtain a $(q,F)$-ZWR $\Nmm$ where $\Nmi = \Theta(\Nm/2)$. 
\hoang{Ideally, we would like to expand these ranges to $[\Nmi=L,\Nm]$ for every $\Nm > L$, which remains an open question.} 

\section{Experiments and Evaluations}
\label{sec:evaluations}

As discussed in Section~\ref{subsec:costETAS}, the case of machines joining seem less practical due to the extra communication overhead associated with data downloading. Therefore, we focus on the case of (one) machine leaving. 
We first performed simulations of different task allocation schemes in Python to evaluate the impact of the transition wastes on the \emph{CPU usage} and the \emph{computation time} for different sets of parameters (Section~\ref{subsec:sim}). 
We also implemented and ran these schemes on virtual machines for a specific set of parameters (corresponding to the Fano plane) to see the impact of the transition wastes on the actual \textit{completion time} of different schemes (Section~\ref{subsec:imp}).
These experiments demonstrate reasonable reductions in the CPU usage, the computation time, as well as the completion time, when shifted cyclic TAS or zero-weight TAS are used compared to the original cyclic TAS~\cite{Yang_etal_2019}. 

\subsection{Performance metrics}
\label{subsec:metrics}

\emph{First}, we note that from its definition (see Definition~\ref{def:TW}), the transition waste incurred at Machine $n$ when Machine $n^*$ leaves, i.e., $W_{n^*}(\SNn \to \SNMOn) \define \left|\SNn \Delta \SNMOn\right|-\frac{LF}{N(N-1)}$, is equal to two times the number of tasks \emph{abandoned} by Machine $n$, defined by $A_{n^*}(\SNn \to \SNMOn) \define \left|\SNn \setminus \SNMOn\right|$. We henceforth use \emph{abandoned} and \emph{wasted} interchangeably. \emph{Second}, in reality, the quantity $A_{n^*}(\SNn \to \SNMOn)$ only serves as an upper bound on the actual number of tasks abandoned by Machine $n$; the reason is that only those tasks already \emph{completed} by Machine $n$ when Machine $n^*$ left can be wasted. Tasks that were originally allocated to Machine $n$ but hadn't been executed by the time Machine $n^*$ left do not contribute to the (actual) transition waste\footnote{Investigating the actual transition waste given the list of completed tasks at all machines is a more general problem and left for future research.}. 
We ignored the coding/decoding time as this is the same for all schemes.

As such, to reflect the system's performance more accurately, we use $C_{n^*}(\SNn \to \SNMOn)$ to denote the set of \emph{completed} tasks (indices) at Machine $n$ when Machine $n^*$ left and observe that $A^{\text{comp}}_{n^*}\big(\SNn \to \SNMOn\big) \define \left|C_{n^*}(\SNn \to \SNMOn) \setminus  \SNMOn\right|$ is the number of tasks completed but abandoned (wasted) during the transition. We then use the following three different metrics to evaluate different task allocation schemes: the first metric represents the average waste in \emph{CPU usage} while the second and the third represent the impact of transition waste on the actual \emph{computation time} in slightly different ways. Regarding the CPU usage, as long as a task was executed and completed but not used, the CPU time spent on the task is consider wasted. 
The \textit{completion time} is the time the system requires from the start of computation until the heaviest loaded machine (the bottleneck) finishes all tasks\footnote{To avoid overcomplicating the discussion, we do not consider in our evaluation the straggler-tolerance capability of the underlying coded computing schemes. The analysis can be readily extended to that context by considering, for example, the second or higher-order maximum instead of the maximum.}. 
This is precisely the machine that wasted the largest number of completed tasks in the transition. 
That is why we need to examine the \emph{maximum} number of wasted completed tasks (over all active machines), which directly translates into the extra amount of time required for the system to complete the computation compared to the case when none of the completed tasks are wasted (as in a zero-waste TAS). In the following metrics, $\texttt{avg}$ is the abbreviation of ``average''.
\begin{itemize}
	\item $\aaw \define$\\ $\frac{1}{N}\sum_{n^* \in [N]}\bigg(\frac{1}{N-1}\sum_{n\in [N]\setminus \{n^*\}}A^{\text{comp}}_{n^*}\big(\SNn \to \SNMOn\big)\bigg)$, which is the average over all possible indices $n^* \in [N]$ of the average numbers of abandoned completed tasks $A^{\text{comp}}_{n^*}(\SNn \to \SNMOn)$ over all active machines $n \in [N]$, $n \neq n^*$. The higher $\aaw$, the higher waste in CPU usage \emph{on average}.
	\item $\amw \define$\\ $\frac{1}{N}\sum_{n^* \in [N]}\bigg(\max_{n\in [N]\setminus \{n^*\}}A^{\text{comp}}_{n^*}\big(\SNn \to \SNMOn\big)\bigg)$, which is the average over all possible indices $n^* \in [N]$ of the maximum numbers of wasted completed tasks $A^{\text{comp}}_{n^*}(\SNn \to \SNMOn)$ among all active machines $n \in [N]$, $n \neq n^*$. The higher $\amw$, the longer the \emph{averaged} computation time over all $n^*$.
	\item $\mmw \define$\\ $\max_{n^* \in [N]}\bigg(\max_{n\in [N]\setminus \{n^*\}}A^{\text{comp}}_{n^*}\big(\SNn \to \SNMOn\big)\bigg)$, which is the maximum among all possible indices $n^* \in [N]$ of the maximum numbers of wasted completed tasks $A^{\text{comp}}_{n^*}(\SNn \to \SNMOn)$ among all active machines $n \in [N]$, $n \neq n^*$. The higher $\mmw$, the longer the \emph{maximum} computation time among all $n^*$. 	
\end{itemize}
 
Note that for a zero-waste TAS, $C_{n^*}(\SNn \to \SNMOn) \subseteq \SNn \subseteq \SNMOn$, which implies that $A^{\text{comp}}_{n^*}\big(\SNn \to \SNMOn\big) = 0$. Alternatively, this can be deduced from the fact that the number of abandoned \emph{completed} tasks is not greater than the number of abandoned tasks, or half of the transition waste, which is zero in this case. Therefore, all the three metrics defined above are zero for a zero-waste TAS, which is the best possible. It remains to evaluate the performance of the cyclic and shifted cyclic schemes (against the zero-waste schemes).  
 
Here, we examine the transitions when a $\fr$ of the original tasks assigned to each machine have been completed, for $\fr \in \{0.1, 0.5, 0.9\}$. Note that $\left|C_{n^*}(\SNn \to \SNMOn)\right|$, i.e., the number of \emph{completed} tasks  at Machine $n$ is around $\fr\times \frac{LF}{N}$. As both cyclic and shifted cyclic TAS allocate a (cyclically) contiguous chunk of task indices to each machine, naturally, we assume that (cyclically) consecutive tasks are executed starting from the starting point of that set. We measure the percentage of the completed tasks that have been wasted and the percentage of the maximum number of the wasted completed tasks among all machines over the total number of tasks allocated to one machine when each machine has performed a fraction of $1/10$, $1/2$, and $9/10$ originally allocated tasks. Note that for each machine, the amount of \emph{extra} tasks it has to do compared to the case of zero waste ($LF/(N-1)$ tasks) is precisely the number of wasted completed tasks. Instead of using the three aforementioned metrics $\aaw$, $\amw$, and $\mmw$ directly, we transform them into percentages as follows.
\begin{itemize}
	\item $\aawp \define$\\ $100\times\aaw / \big(\fr \times (LF/N)\big)$: the percentage of the completed tasks that have been wasted (averaged over all active machines $n \neq n^*$ and then averaged over all $n^*\in [N]$).
	\item $\amwp \define$\\ $100\times\amw / \big(LF/(N-1)\big)$: the percentage of the wasted completed tasks over the total number of allocated task per machine (maximized over active machines $n \neq n^*$ and then averaged over $n^*\in [N]$).
	\item $\mmwp \define$\\ $100\times\mmw / \big(LF/(N-1)\big)$: the percentage of the wasted completed tasks over the total number of allocated task per machine (maximized over all active machines $n \neq n^*$ and over $n^*\in [N]$).
\end{itemize}  

We assume that each task takes the same amount of time to carry out (which makes sense because tasks correspond to computations over data of the same dimensions) and that machines have homogeneous computational capacities/loads (we focus on the performance evaluation of tasks allocation schemes and separate it from the underlying coded computing schemes, which consider stragglers). 
Then, the CPU usage and computation time of each TAS, as discussed earlier, can be captured accurately by the metrics defined in this section.

\begin{figure*}[!htb]
\centering
	\subfloat[Percentage of completed tasks that were wasted due to the transition (averaging over all active machines $n$ and over all machine $n^*$ that left). This represents the waste in CPU usage.]
     {\includegraphics[width=0.9\columnwidth]{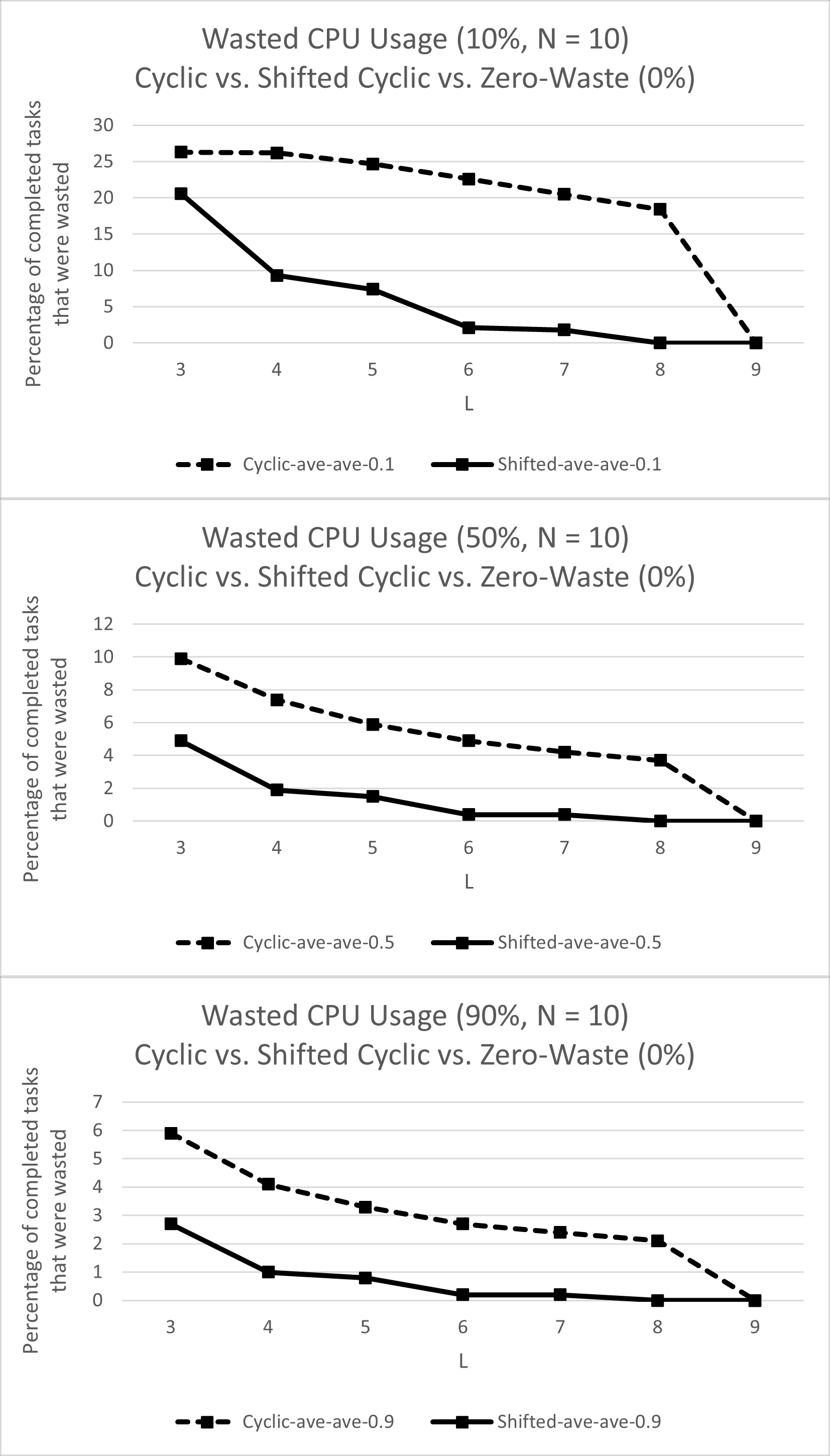}
     }
     \qquad
     \subfloat[The percentage of extra tasks (compared to a zero-waste scheme) due to wasted completed tasks in the transition. This corresponds to the overhead in computation time.]
     {\includegraphics[width=0.9\columnwidth]{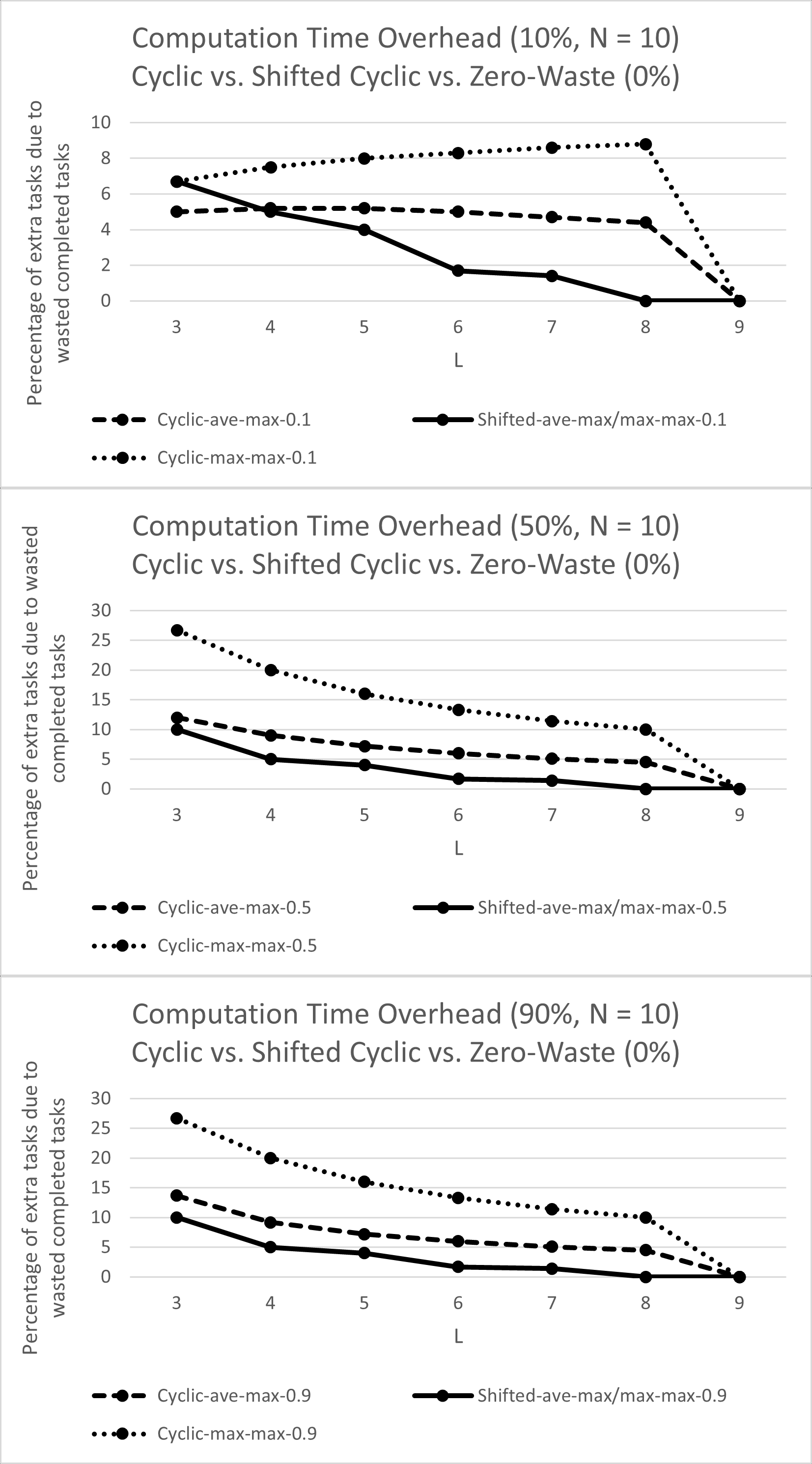}
     }
     \caption{The evaluation of the waste in CPU usage (measured by $\aawp$) and the overhead in computation time (measured by $\amwp$ and $\mmwp$) due to completed tasks being wasted during a transition when one machine leaves. We set up the transitions when 10\%, 50\%, and 90\% of the tasks originally allocated to one machine ($LF/N$ tasks) had been completed. We set $N = 10$ and $L \in \{3,4,\ldots,9\}$. We observe that the shifted cyclic TAS almost always performs better than the cyclic counterpart (except for a peculiar case when $L=3$ and $N=10$ at 10\%) and moreover, the gap between the performance of the cyclic and the zero-waste TAS decreases as $L$ grows.}
    \label{fig:N10}
\end{figure*}

\begin{figure*}[!htb]
\centering
	\subfloat[Percentage of completed tasks that were wasted due to the transition (averaging over all active machines $n$ and over all machine $n^*$ that left). This represents the waste in CPU usage.]
     {\includegraphics[width=0.9\columnwidth]{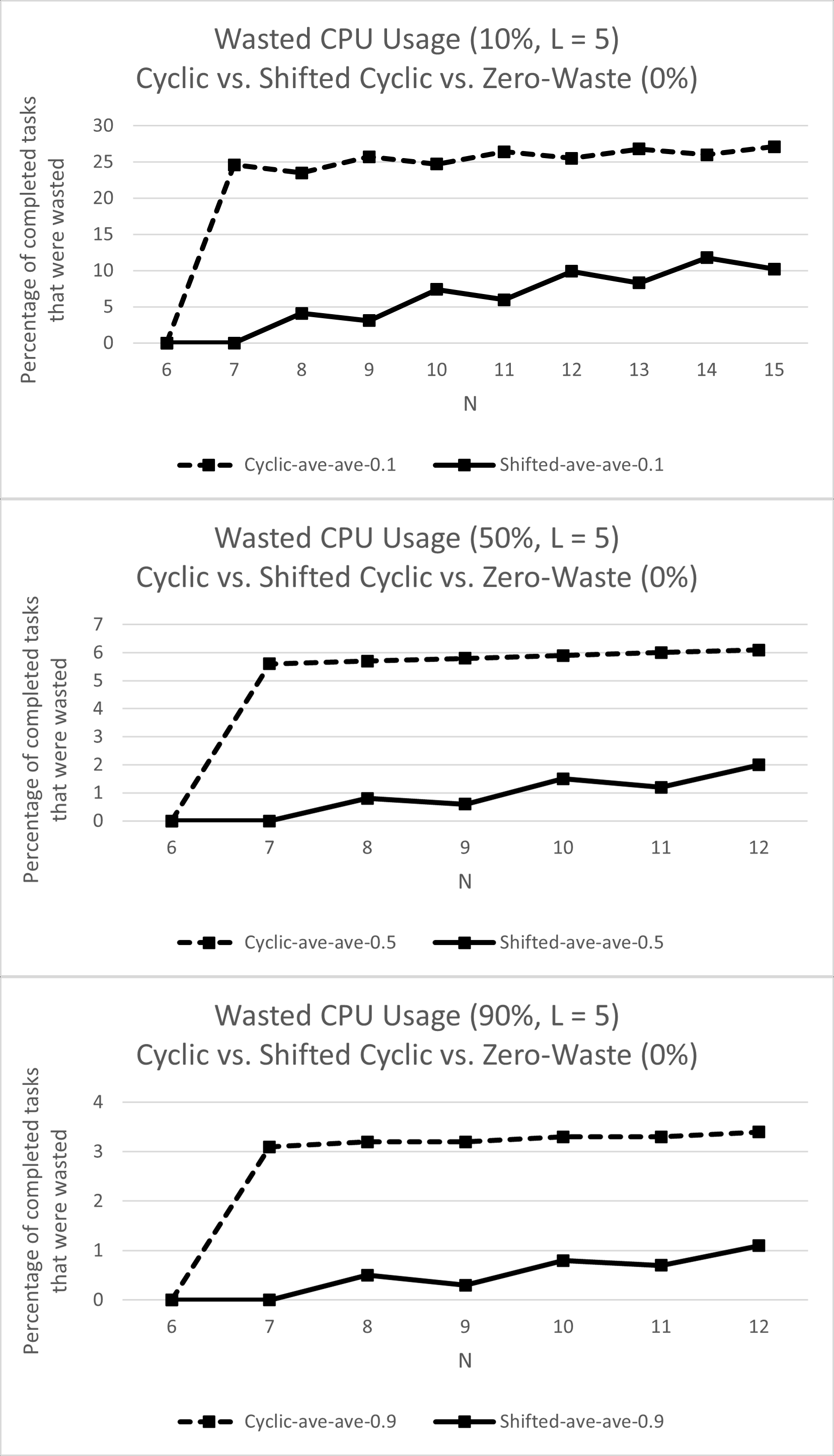}
     }
     \qquad
     \subfloat[The percentage of extra tasks (compared to a zero-waste scheme) due to wasted completed tasks in the transition. This corresponds to the overhead in computation time.]
     {\includegraphics[width=0.9\columnwidth]{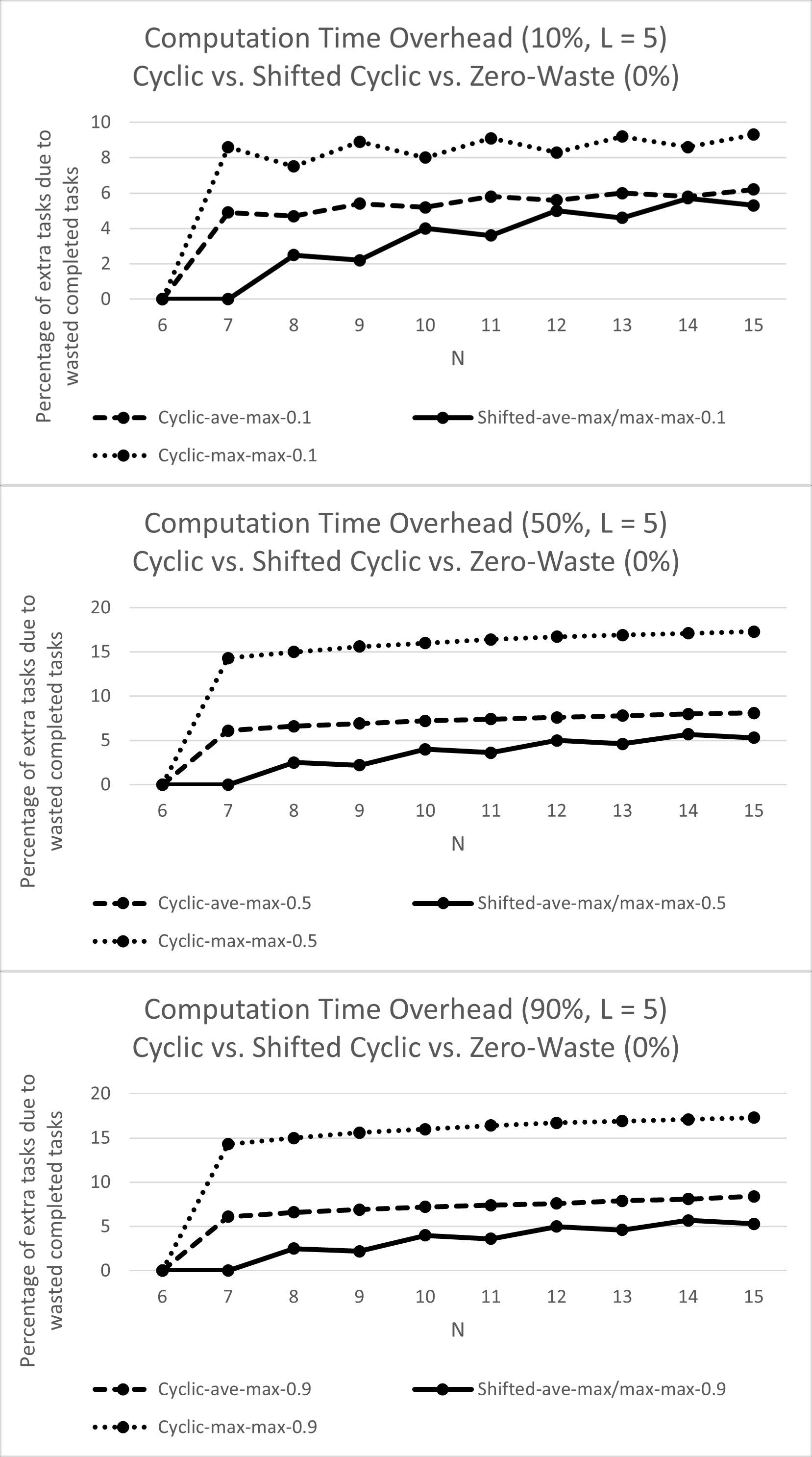}
     }
     \caption{The evaluation of the waste in CPU usage and the overhead in computation time due to wasted completed tasks in the cyclic and shifted cyclic schemes. Same as Fig.~\ref{fig:N10} but we set $L = 5$ and $N \in \{6,7,\ldots,15\}$ instead. We observe that the shifted cyclic TAS almost always performs better than the cyclic counterpart and moreover, the gap between the performance of the cyclic and the zero-waste TAS tends to increase as $N$ grows.}
    \label{fig:L5}
\end{figure*}

\subsection{Simulation}
\label{subsec:sim}

Our simulation results are summarized in Fig.~\ref{fig:N10}, where we fix $N = 10$ and let $L\in [3,4,\ldots,9]$, and in Fig.~\ref{fig:L5}, in which we fix $L = 5$ and let $N\in [6,7,\ldots,15]$. We let the transition (one machine leaving) happen when 10\%, 50\%, or 90\% of the originally allocated tasks to each machine had been completed. These are the points where there are some differences in the number of wasted completed tasks, which represent better the differences in the performance metrics among different schemes (e.g., compared to 25\%, 50\%, and 75\%). 
Note that all metrics used in Section~\ref{subsec:metrics} depend on the number of completed tasks that are wasted/abandoned. Our codes are available online at~\cite{python_sim}.

In summary, the shifted cyclic TAS almost always incurs less waste in CPU and smaller overhead in computation time compared to the cyclic TAS. We also observe that the gap between the performance of the cyclic/shifted cyclic TAS and the zero-waste TAS (no waste in CPU usage or computation time) grows \textit{gradually} when $N$ increases but shrinks more \textit{sharply} when $L$ increases. This is consistent with the derived formulas of the transition waste (in Theorem~\ref{thm:leaving} and Theorem~\ref{thm:shifted_leaving}), which serves as the upper bound on twice of the number of actual wasted completed tasks. Intuitively, this could also be explained by the fact that the total number of tasks allocated to each machine (the denominator of the metrics), that is $\frac{LF}{N}$, grows linearly with $L$, while the number of wasted completed tasks (the numerator of the metrics) seems to stay mostly independent of $L$.  
Thus, another quick take-away from the simulation is that the zero-waste TAS offers the largest gain over the cyclic TAS for small $L$ (minimum number of machines required by the system) and large $N$ (the number of available machines), and that $L$ plays a more significant role than $N$ in their performance.

\subsection{Implementation}
\label{subsec:imp}

We implemented the three task allocation schemes (cyclic, shifted cyclic, and zero-waste) on virtual machines and evaluated their performance (completion time) when $N = 7$, $L = 3$, and $F = 210$. The goal is to see if the completion times of different TAS are consistent with our simulation results. The selection of $N$ and $L$ is due to the parameter of the Fano plane (seven lines with three points per line). In theory, we only need $F=42$ to ensure that $F$ is divisible by $N=7$ and $N-1=6$, considering one machine leaving in our experiment. However, we set $F=210$, which is a medium number of tasks to make sure that the computation time is not too short to be ignored and at the same time, to avoid large overhead for the zero-waste scheme. Each task was carried out by multiplying a $2000\times 5000$ matrix and a vector of length $5000$ with integer entries randomly generated between -100 and 100. It took approximately 0.8 second to run each task at a worker.  
Each machine was initially allocated $LF/N = 90$ tasks when there are seven machines, and later with $LF/(N-1)=105$ tasks when one machine leaves. 

We used one virtual machine (the master) to run a bash script that coordinates the experiment on seven other virtual machines (the workers), all of which are Oracle cloud’s virtual machines VM.Standard.E2.1 with one OCPU, 0.7Gbs network bandwidth and 8GB of memory. The Python modules that performed the tasks were loaded into the workers. 
The master used \texttt{parallel-ssh} to send/retrieve data and commands to/from the workers.
First, the master set \texttt{time\_start} to be the start time and issue a command to run the Python modules on all seven workers.
To simulate the transition when one machine leaves at different times, we let the main Python module in each worker stop itself once it had completed 10\% (9 tasks), 50\% (45 tasks), and 90\% (81 tasks) of its originally allocated tasks (90 tasks), respectively. Each machine wrote into its log the list of tasks that had been completed. As all workers have the same configurations, they finished almost at the same time.

Once the master gathered that all workers had stopped, it removed one worker (Machine $n^*$) and issued another command to run the main Python modules on the six remaining ones (Machines $n=1,2,\ldots,7$ with $n \neq n^*$). At each remaining worker, the main Python module allocated a new set of tasks to the machine, depending on its index $n$ and the index of the machine that left $n^*$, and also on the particular task allocation scheme selected for that experiment (cyclic, shifted cyclic, or zero-waste). The list of tasks completed before the transition was read from its log and ignored because there is no need to run them the second time. Only tasks that hadn't been completed before were run.
The master then waited for all six workers to complete their allocated tasks and set \texttt{time\_end} to be the ending time.
The \textit{completion time} of the system was set to be $\texttt{completion\_time}=\texttt{time\_end}-\texttt{time\_start}$. 
This effectively recorded the \textit{maximum} running time among all remaining machines, which was then averaged out over all $n^*=1,2,\ldots,7$. 


\begin{figure*}[!htb]
\centering
	\subfloat[(Simulated) Computation overheads of the cyclic and shifted cyclic schemes compared to the zero-waste scheme (set at 0\%).]
     {\includegraphics[width=0.9\columnwidth]{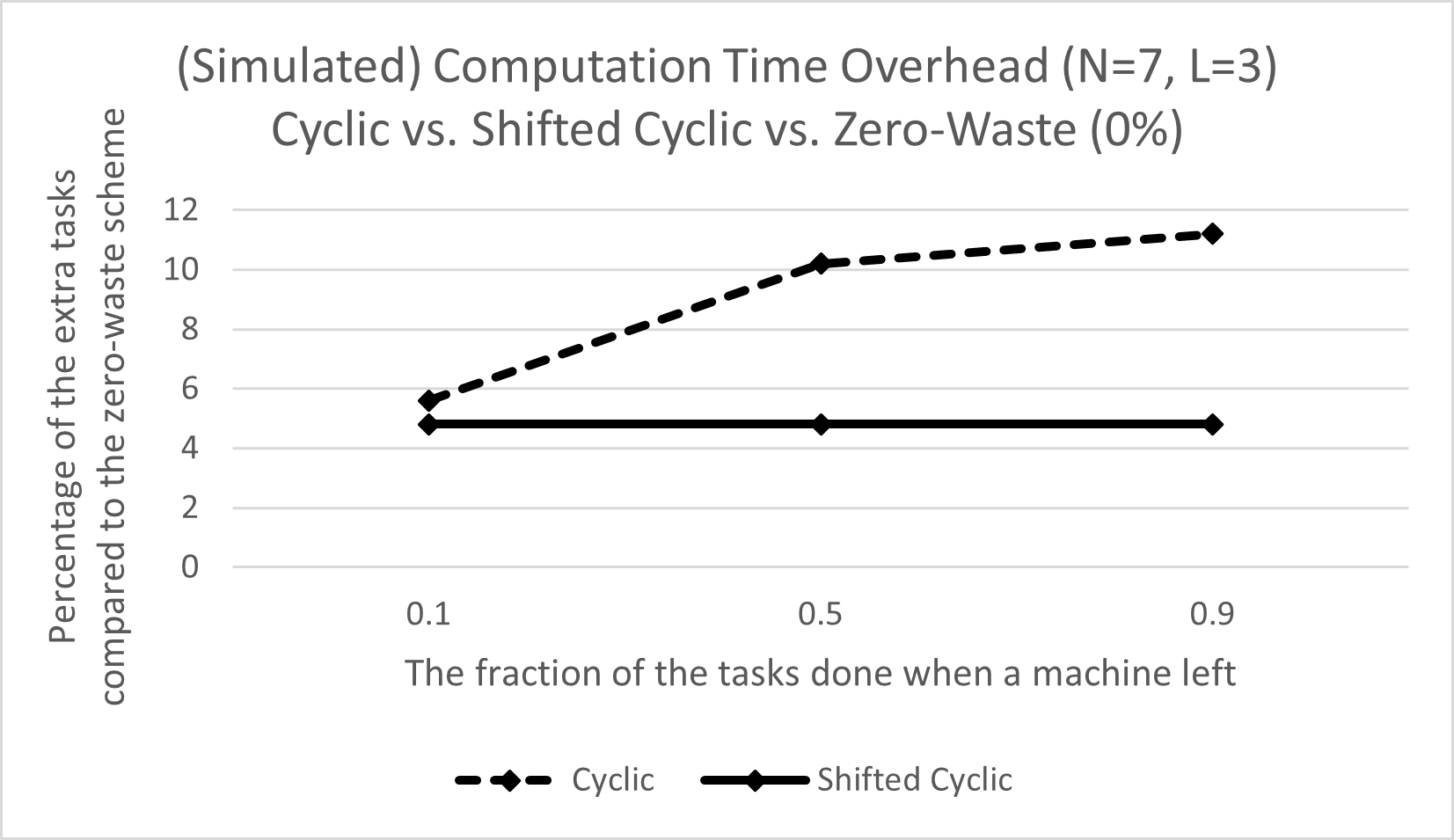}
     }
     \qquad
     \subfloat[Completion time overheads of the cyclic and shifted cyclic schemes compared to the zero-waste scheme (set at 0\%).]
     {\includegraphics[width=0.9\columnwidth]{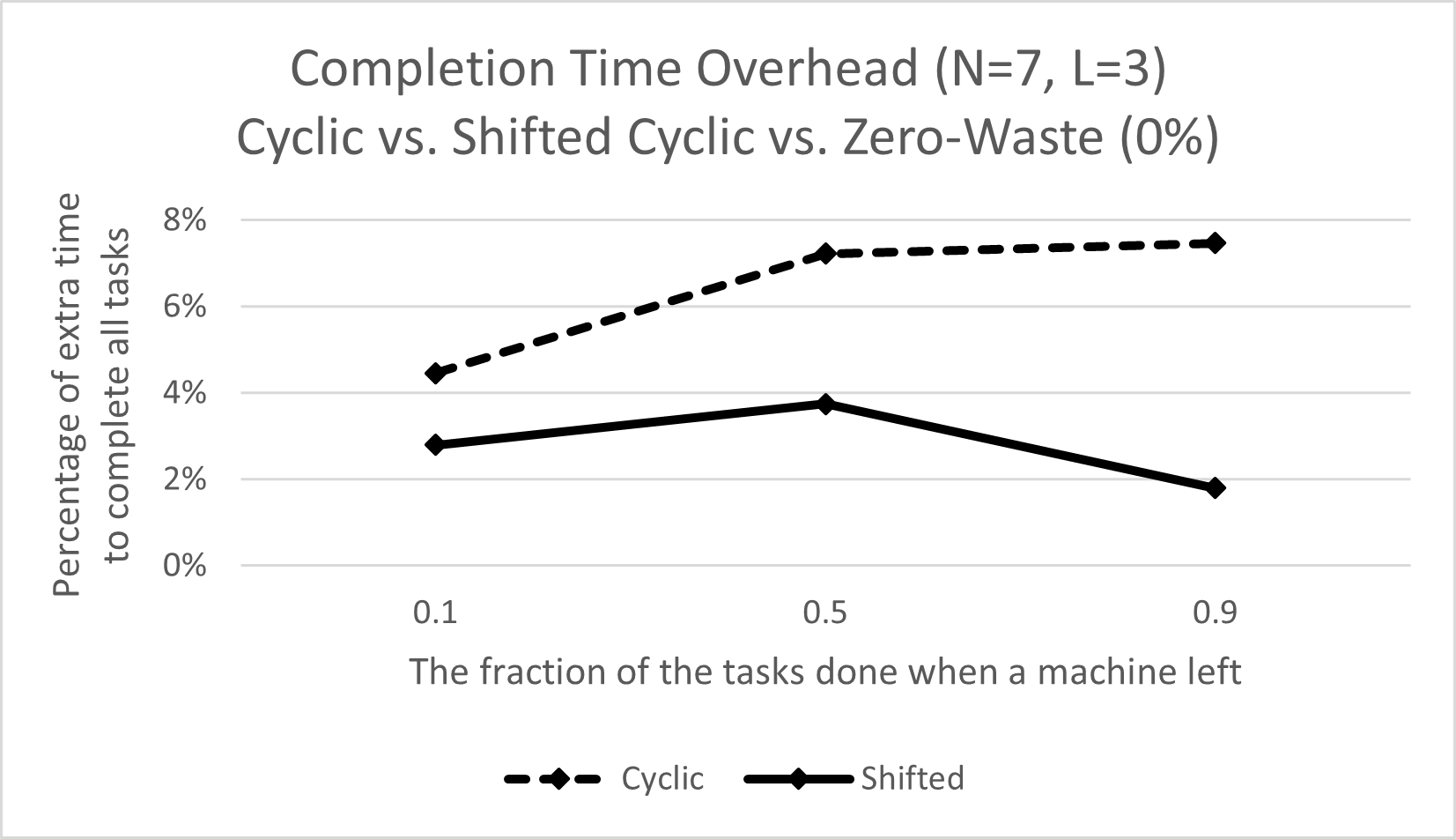}
     }
     \caption{The predicted computation time and the actual completion time overheads of the cyclic and shifted cyclic schemes versus the zero-waste scheme when $N = 7$, $L = 3$, and $F=210$, running on Oracle's virtual machines.}
    \label{fig:implement_N7_L3}
\end{figure*}

Note that the completion time includes the computation time and others such as I/O and communication overhead.
Compared to the simulated computation times (Fig.~\ref{fig:implement_N7_L3}(a)), the gaps in the completion times of these three schemes are smaller (Fig.~\ref{fig:implement_N7_L3}(b)). This was partly due to the impact of the communication overhead caused by \texttt{parallel-ssh} and of the I/O time (reading the large matrix into the memory). In total, the overhead, apart from computing the tasks, was approximately 20 seconds.
Better management of the communication and I/O may increase the impact of the computation time on the overall completion time.


\section{Conclusions}
\label{sec:conclusions}

Building up on the work of Yang \et{}~\cite{Yang_etal_2019} on coded elastic computing, we first propose a complete \emph{separation} between the elastic task allocation scheme and the coded computing scheme. As a result, we have the freedom to design \emph{efficient} elastic task allocation schemes as a combinatorial object \emph{independent} of the underlying coded computing schemes. Moreover, our result can be applied to almost every coded computing scheme developed in the literature. We illustrate the application of our result in matrix-vector and matrix-matrix multiplication, linear regression, and multivariate polynomial evaluation. 
The proposed separation \emph{simplifies} the coupling significantly compared to the original approach in~\cite{Yang_etal_2019}.  

Our main contributions in this work include the introduction of a new performance criterion for elastic task allocation schemes called the \emph{transition waste} and constructions of different schemes that achieve \emph{optimal} transition wastes. This quantity measures the number of tasks that available machines must abandon or take anew when one machine leaves or joins in the middle of the computation of a large scaled job. Smaller transition wastes reduce the waste of computing resources and speed up the job completion time. 

Our work and a few others~\cite{Yang_etal_2019,WoolseyChenJi-ISIT-2020, WoolseyChenJi-TCOM-2021, WoolseyKliewerChenJi-Globecom-2021,KianiAdikariDraper-ICASSP-2021} address the need to bridge the gap between the common setup of most coded computing schemes in the literature, where the number of available machines remain fixed, and an emerging trend in the cloud computing industry where the number of available machines can vary, due to the fact that low-priority virtual machines are often offered at much cheaper prices but can be taken back under a short notice (e.g. Amazon EC2 Spot and Microsoft Azure Batch). 

We can imagine one application of the coded elastic computing scheme as follows. We purchase a number of EC2 on-demand instances at a higher price while also get a few Spot instances at a much cheaper cost to run our computation. During the computation cycle, the low-priority Spot instances may leave, reducing the number of available machines. Our system can still handle this if we employ a coded elastic computing scheme in which the number of on-demand instances is greater than or equal to the minimum number of available machines required by the scheme. Thus, instead of maintaining all the costly on-demand instances from the beginning to the end, this approach allows us to take advantage of low-cost Spot instances available to us while keeping the computation run smoothly even when machines leave.
An interesting related approach from Amazon in 2018 was implemented in a new feature called Amazon EC2 Fleet~\cite{AmazonFleet}, which allows users to specify the target capacity and the preferred EC2 instances while automatically performs mix-and-match to meet customers specifications at a lowest price.      
       



\section*{Acknowledgement}
This work was supported by the Australian Research Council via the Discovery Project under Grant DP200100731. The implementation was done on Oracle Cloud virtual machines sponsored by the Oracle for Research.
The work of Yu-Chih Huang was supported by the Ministry of Science and Technology (MOST), Taiwan, under Grant MOST 111-2221-E-A49-069-MY3.

We thank Yaoqing Yang for helpful discussions and Pham Ngoc Duy and Khang Vo for their assistance in the implementation. We also thank the anonymous referees for many constructive comments, which helped to greatly improve the paper.

\bibliographystyle{IEEEtran}
\bibliography{ETAS}

\section{Appendix}

\subsection{Coupling an Elastic Task Allocation Scheme and a Coded Computing Scheme}
\label{app:coupling}

\textbf{Matrix-Matrix Multiplication.}
The goal is to compute the product $\bA\bB$, where $\bA$ and $\bB$ are matrices of matching dimensions, in the presence of $E$ stragglers $(0 \leq E < L)$ and with a varied number of available machines $N$ $(L \leq N \leq \Nm)$.

We partition $\bA$ and $\bB$ column-wise and row-wise, respectively, into $F$ equal-sized sub-matrices (padding with zeros if necessary) as follows,   \hspace{-15pt}
\[
    \bA=\begin{bmatrix}
                 \bA_0, & \bA_1, & \ldots & \bA_{F-1} \\
               \end{bmatrix},\quad
               \bB=\begin{bmatrix}
                            \bB_0 \\
                            \bB_1 \\
                            \vdots \\
                            \bB_{F-1} \\
                          \end{bmatrix}.
\]
The pair $(\bA_f, \bB_f)$, $f\in[[F]]$, forms the $f$th sub-instance and the computation of $\bA_f\bB_f$ is referred to as Task~$f$. As $\bA\bB = \sum_{f=0}^{F-1}\bA_f\bB_f$, the completion of all $F$ tasks gives us the product $\bA\bB$. 
For each Task~$f$, an existing CCS for matrix-matrix multiplication can be applied (e.g., MatDot~\cite{Fahim2017}).

\textbf{Linear Regression.}
Given a data matrix $\bX$ and a vector $\by$, we aim to find a weight vector $\bw$ that minimizes the loss function $\|\bX\bw -y \|^2$. Using gradient descent, in each iteration, we update the weight using the gradient of the loss function, which requires the computation of $\bX^{\text{T}}(\bX\bw^{(t)}-\by)$.  

The algorithm in~\cite{Yang_etal_2019} first computes $\bX\bw^{(t)}$ via coded elastic computing, computes $\bz^{(t)}=\bX\bw^{(t)}-\by$ at the master node, and adaptively encodes $\bz^{(t)}$ according to the knowledge of machines that are active. Hence, it is not suitable for the scenario where machines join or leave in the middle of each iteration.
Our approach presented below simplifies the approach in~\cite{Yang_etal_2019} and also overcomes its drawback. 

Note that both $\bX$ and $\by$ are fixed while $\bw^{(t)}$ varies from one iteration to the next. 
Therefore, the matrix-matrix product $\bA=\bX^T\bX$ and the matrix-vector product $\bX^T\by$ can be computed once in advance with amortized cost using an ETAS as described earlier. The only job left is to repeatedly compute $\bA\bw^{(t)}$, $t = 0,1,\ldots$ Again, we use an ETAS to perform this matrix-vector multiplication. Despite of its conceptual simplicity, this procedure not only allows machines join or leave in the middle of each iteration but also saves communication bandwidth as at each iteration, we only send $\bw^{(t)}$ to machines rather than both $\bw^{(t)}$ and a coded version of $\bz^{(t)}$.

\textbf{Multivariate polynomial evaluation.}
We aim to compute $g(\bX_1),\ldots,g(\bX_K)$, where $g$ is a multivariate polynomial and $\bX_k$ is a large matrix or vector $(k \in [K])$, in a way that tolerates $E$ stragglers and allows the number of available machines vary between $L$ and $\Nm$. 

Suppose that $K$ is divisible by $F$ (padding if necessary). We partition the set of evaluation points into $F$ equal parts
\[
\P_f = \left\{\bX_{fK/F+1},\ldots,\bX_{fK/F+K/F}\right\},\quad f \in [[F]].
\] 
Task~$f$ refers to the computations of $g(\bX_p)$, $p \in \P_f$. Clearly, the completion of all $F$ tasks gives us $g(\bX_1),\ldots,g(\bX_K)$ as desired. 
Yu~\et{}~\cite{Yu_etal_2019} propose a CCS called the Lagrange coded computing to perform distributed polynomial evaluation that tolerates stragglers. We can apply this CCS to each task using $\Nm$ machines and recovery threshold $L-E$.

\subsection{Proof of Lemma~\ref{lem:leave_2}}
\label{app:proof_leave_2}

\begin{proof}[Proof of Lemma~\ref{lem:leave_2}]
As $n > n^*$, we have
\[
S^{N-1}_n = \bigg[(n-2)\frac{F}{N-1},(n-2)\frac{F}{N-1}+\frac{LF}{N-1}-1\bigg]\hspace{-10pt} \pmod F.
\]
We now apply Lemma~\ref{lem:sym} to the sets 
\[
S = \SNMOn = [a,b]\hspace{-5pt}\pmod F, \quad T = \SNn=[c,d]\hspace{-5pt}\pmod F.
\] 
The common assumptions of Lemma~\ref{lem:sym} are verified as follows. We have
\[
0 \leq a = (n-2)\frac{F}{N-1} < c = (n-1)\frac{F}{N} < F,\]
\[
0 < |S| = \frac{LF}{N-1} < F,\quad 0 < |T| = \frac{LF}{N} < F.
\]

\textbf{Case 1.} When $n^* \geq N-L$ or $n^* < N-L$ but $n > N-L$, we aim to show $W_{n^*}(\SNn \to \SNMOn) = 0$ by proving that $\SNn \subset \SNMOn$ (Lemma~\ref{lem:zero_trivial}). Note that in this case, we always have $n \geq N-L+1$. Therefore, 
\begin{equation}
\label{eq:inequality}
\frac{LF}{N(N-1)} \geq \frac{(N-n+1)F}{N(N-1)},  
\end{equation}
which is equivalent to $|\SNMOn| - |\SNn| \geq (c-a)$, or $|S| \leq (c-a)+T$. By Lemma~\ref{lem:sym}~(b), we conclude that $T = \SNn \subset S = \SNMOn$, as desired. Hence the transition waste incurred at Machine $n$ is zero. 

\textbf{Case 2.} Suppose that $n^* < n \leq N-L$. The inequality \eqref{eq:inequality} is reversed, which gives us $|S| < (c-a)+|T|$. We now verify that other conditions of Lemma~\ref{lem:sym}~(a) are also satisfied. First, it is clear that
\[
c-a = \frac{(N-n+1)F}{N(N-1)} < \frac{LF}{N-1} = |S|. 
\]
Moreover, as $N > L+1$ (our assumption),  
\[
(c-a) + |T| = \frac{(N-n+1)F}{N(N-1)} + \frac{LF}{N} < F. 
\] 
Therefore, by Lemma~\ref{lem:sym}~(a), we obtain 
\[
\begin{split}
|\SNn \Delta \SNMOn|
&= 2(c-a) + (|\SNn|-(|\SNMOn|)\\ 
&= \frac{2(N-n+1)F}{N(N-1)} - \frac{LF}{N(N-1)}.
\end{split}
\]
Noting that $\DNNMO = \frac{LF}{N(N-1)}$, we obtain
\[
\begin{split}
W_{n^*}(\SNn \to \SNMOn) &= |\SNn \Delta \SNMOn| - \DNNMO\\
&= \frac{2(N-L-n+1)F}{N(N-1)}.
\end{split}\vspace{-10pt}
\] 
This completes the proof.
\end{proof} 

\subsection{Proof of Theorem~\ref{thm:shifted_joining}}
\label{app:proof_shifted_joining}

\begin{proof}[Proof of Theorem~\ref{thm:shifted_joining}]
Without loss of generality, we can always assume that $\delta' = 0$ and $\delta = \hspace{-2pt} \lfloor \frac{N+L-1}{2} \rfloor \frac{F}{N(N+1)}$. We provide a proof when $N+L$ is odd, i.e., $\delta =\frac{(N+L-1)F}{2N(N+1)}$ noting that we assume $N(N+1)$ divides $F$ (padding with dummy tasks if necessary). A proof for the case when $N+L$ is even can be done similarly. 

With $\delta' = 0$ and $\delta =\frac{(N+L-1)F}{2N(N+1)}$, we have
\[
\SDPCN = (S^N_1,\ldots,S^N_N),\quad
\SDCNPO = (S^{N+1}_1,\ldots,S^{N+1}_{N+1}),
\]
where for $n \in [N]$, 
\[
\SNn = \left[(n-1)\frac{F}{N}, (n-1)\frac{F}{N} + \frac{LF}{N}-1\right]\hspace{-10pt} \pmod F.
\]
\begin{multline*}
S^{N+1}_n = \bigg[(n-1)\frac{F}{N+1}+\frac{(N+L-1)F}{2N(N+1)},\\ 
(n-1)\frac{F}{N+1} + \frac{LF}{N+1}-1+\frac{(N+L-1)F}{2N(N+1)}\bigg] \pmod F.
\end{multline*}
To compute the transition waste $W(\SNn \to S^{N+1}_n)$ incurred at Machine $n \in [N]$, we consider the following three cases.

\textbf{Case 1.} $1 \leq n < \frac{N-L+1}{2}$. It can be easily verified that all conditions of Lemma~\ref{lem:sym}~(a) are satisfied for $S \define \SNn = [a,b] \pmod F$ and $T \define S^{N+1}_n = [c,d] \pmod F$. Therefore, 
\[
\begin{split}
W(\SNn \to S^{N+1}_n) &= 2(c-a)-2\DNNPO\\
&= \frac{(N+L+1-2n)F}{N(N+1)} - \frac{2LF}{N(N+1)}\\
&= \frac{(N-L+1-2n)F}{N(N+1)}.
\end{split}
\]

\textbf{Case 2.} $\frac{N-L+1}{2} \leq n < \frac{N+L+1}{2}$. We can verify that all conditions of Lemma~\ref{lem:sym}~(b) are satisfied for $S \define \SNn = [a,b] \pmod F$ and $T \define S^{N+1}_n = [c,d] \pmod F$. Hence, $T \subset S$ and 
$W(\SNn \to S^{N+1}_n) = 0.$

\textbf{Case 3.} $\frac{N+L+1}{2} \leq n \leq N$. We can verify that all conditions of Lemma~\ref{lem:sym}~(a) are satisfied for $S \define S^{N+1}_n = [a,b] \pmod F$ and $T \define S^{N}_n = [c,d] \pmod F$. Therefore, 
\[
\begin{split}
W(\SNn \to S^{N+1}_n) &= 2(c-a) + \DNNPO - \DNNPO\\
&= \frac{(2n-(N+L+1))F}{N(N+1)}.
\end{split}
\]

Thus, the waste when transitioning from $\SDCN$ to $\SDPCNPO$ is

\begin{multline*}
W(\SDCN \to \SDPCNPO) =
\frac{F}{N(N+1)} \bigg( \sum_{n=1}^{\frac{N-L-1}{2}} (N-L+1-2n)\\ 
+\sum_{n=\frac{N-L+1}{2}}^{\frac{N+L-1}{2}}0
+ \sum_{n=\frac{N+L+1}{2}}^N (2n-(N+L+1))
\bigg)\\
= \frac{(N-L-1)(N-L+1)F}{2N(N+1)}.\vspace{-10pt}
\end{multline*}
This completes the proof. 
\end{proof} 

\subsection{Proof of Theorem 5}
\label{sub:AppendixOptProof}

Note that we only need to prove Theorem~\ref{thm:optimal} for the case when Machine $N+1$ joins. The following lemma holds for all $\delta \in [[F]]$.

\begin{lemma}
\label{lem:opt}
The transition waste when transitioning from a cyclic \NT~$\SCN$ to a $\delta$-shifted cyclic \NPOT~$\SDCNPO$ is 
\[
{W(\SCN\to \SDCNPO)} = \texttt{Sum1} + \texttt{Sum2} + \texttt{Sum3},
\]
where these three sums are given as follows.
Setting $d = \frac{F}{N(N+1)} \in \bbZ$, the first sum is 
\[
\texttt{Sum1} = \sum_{n \in [N]: (n-1)d > \delta} 2((n-1)d-\delta).
\]
When $L < \lceil \frac{N+1}{2} \rceil$, the second and third sums are 
\[
\begin{split}
\texttt{Sum2} &= 
\sum_{n \in [N]: (n-1+L)d \leq \delta < (n-1+L+LN)d}\hspace{-30pt} 2\big( \delta - (n-1+L)d \big),\\ 
\texttt{Sum3} &= 
\sum_{n \in [N]: (n-1+L+LN)d \leq \delta \leq F+(n-1)d-LNd}\hspace{-50pt} 2LNd\\
&\quad\qquad +\sum_{n \in [N]: F+(n-1)d-LNd < \delta}\hspace{-15pt} 2\big( F+(n-1)d-\delta \big).
\end{split}
\]
When $L \geq \lceil \frac{N+1}{2} \rceil$, the second and third sums are 
\[
\begin{split}
\texttt{Sum2} &= 
\sum_{n \in [N]: (n-1+L)d \leq \delta \leq F + (n-1)d-LNd}\hspace{-30pt} 2\big( \delta - (n-1+L)d \big)\\ 
&\ + \sum_{F+(n-1)d-LNd < \delta < (n-1+L+LN)d}\hspace{-30pt} 2(N-L)F/N,\\
\texttt{Sum3} &= 
\sum_{n \in [N]: (n-1+L+LN)d  \leq \delta}\hspace{-30pt} 2\big( F+(n-1)d-\delta \big).
\end{split}
\]
\end{lemma}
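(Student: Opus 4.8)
The plan is to compute the per-machine transition waste $W(\SNn \to S^{N+1}_n)$ for each $n \in [N]$ by a careful case analysis, exactly as in the proof of Theorem~\ref{thm:shifted_joining}, but now tracking the dependence on a \emph{general} shift $\delta$ rather than the specific optimal shift. First I would write $\SNn = [a_n, b_n] \pmod F$ with $a_n = (n-1)F/N$, $|\SNn| = LF/N$, and $S^{N+1}_n = [c_n, d_n] \pmod F$ with $c_n = (n-1)F/(N+1) + \delta$, $|S^{N+1}_n| = LF/(N+1)$. Introducing $d = F/(N(N+1)) \in \bbZ$ as in the statement, I would express the relevant offsets in units of $d$: $a_n = (n-1)(N+1)d$, $c_n = (n-1)Nd + \delta$, so that $c_n - a_n \equiv \delta - (n-1)d \pmod F$ and $b_n - a_n = LF/N - 1 = L(N+1)d - 1$, $d_n - c_n = LF/(N+1) - 1 = LNd - 1$. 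The necessary load change is $\DNNPO = LF/(N(N+1)) = Ld$.

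The core of the argument is then to apply Lemma~\ref{lem:sym} to the pair $(\SNn, S^{N+1}_n)$, but the correct labelling of which set is ``$S$'' and which is ``$T$'', and which branch (a) or (b) applies, depends on where $\delta$ sits relative to the thresholds $(n-1)d$, $(n-1+L)d$, $(n-1+L+LN)d$, and $F + (n-1)d - LNd$ (all taken mod $F$, after a cyclic rotation that lets us assume one of the two left endpoints is the smaller). Concretely: if $\delta < (n-1)d$ then $S^{N+1}_n$ starts ``before'' $\SNn$ and one applies Lemma~\ref{lem:sym}(a) with the roles giving $|S \Delta T| = 2((n-1)d - \delta) + (|T| - |S|)$, whence (subtracting $\DNNPO$) the contribution $2((n-1)d - \delta)$ — this is exactly Sum~1. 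For $\delta$ in the middle range one either gets containment $S^{N+1}_n \subset \SNn$ (Lemma~\ref{lem:sym}(b), zero waste) when $L$ is large enough relative to $N$, or an overlap handled by (a) giving $2(\delta - (n-1+L)d)$ — this is Sum~2. For $\delta$ near the top, the wrap-around forces $\SNn$ and $S^{N+1}_n$ to overlap on the ``other side'' of the circle, contributing either $2LNd$ (a full-displacement term) or $2(F + (n-1)d - \delta)$ — this is Sum~3. Summing over $n \in [N]$ and collecting the surviving terms according to which threshold interval $\delta$ falls into yields the three displayed sums, with the split into the cases $L < \lceil (N+1)/2 \rceil$ versus $L \ge \lceil (N+1)/2 \rceil$ arising precisely because the containment branch of Lemma~\ref{lem:sym}(b) is available only when $|S^{N+1}_n| \ge (c_n - a_n) + |\SNn|$ can hold, i.e.\ when $L$ is large.

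The main obstacle I anticipate is bookkeeping rather than conceptual depth: getting every threshold exactly right (strict vs.\ non-strict inequalities, and the mod-$F$ reduction that decides which endpoint to treat as the ``first'' one when travelling clockwise), and verifying that the case hypotheses of Lemma~\ref{lem:sym} — namely $0 < |S| < F$, $0 < |T| < F$, and the ordering $0 \le a \le c < F$ after rotation — are genuinely met in each sub-range so that the lemma applies. In particular the boundary cases where $\delta$ equals one of the thresholds, and the transition point $L = \lceil (N+1)/2 \rceil$, need to be checked so that the two branches of the lemma statement agree on the overlap. Once the per-$n$ formula is pinned down, summing over $n$ and regrouping the index sets is mechanical. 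I would then, in the proof of Theorem~\ref{thm:optimal} proper, treat $W(\SCN \to \SDCNPO)$ as a function of $\delta$ on the arithmetic-progression lattice $\delta \in d\bbZ$ (justified since only shifts that are multiples of $d$ align the block boundaries, and a non-aligned shift can only increase the waste), observe it is piecewise linear and convex-like in $\delta$ on that lattice, and locate its minimum at $\delta = \lfloor (N+L-1)/2 \rfloor d$, recovering the value in Theorem~\ref{thm:shifted_joining}; the leaving case follows by the symmetry argument already given in the proof of Theorem~\ref{thm:shifted_leaving}.
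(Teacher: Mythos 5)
Your proposal follows essentially the same route as the paper's proof: a per-machine case analysis via Lemma~\ref{lem:sym} keyed to the thresholds $(n-1)d$, $(n-1+L)d$, $(n-1+L+LN)d$, and $F+(n-1)d-LNd$, with Sums 1, 2, 3 arising from the three possible positions of the shifted arc $S^{N+1}_n$ relative to $\SNn$ on the circle mod $F$, and the final summation over $n$ regrouped by interval. One small correction to your stated reason for the dichotomy at $L = \lceil (N+1)/2 \rceil$: it is not that the containment branch of Lemma~\ref{lem:sym}(b) becomes available only for large $L$ (that sub-case, giving zero waste, occurs in both regimes); rather, it is that for $L < \lceil (N+1)/2 \rceil$ the two arcs can be rotated completely apart, producing the disjoint term $2LNd$, whereas for $L \geq \lceil (N+1)/2 \rceil$ one has $F+(n-1)d-LNd < (n-1+L+LN)d$ and the arcs always intersect, possibly in two separate pieces, producing the term $2(N-L)F/N$.
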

\begin{proof} 
These sums are obtained by considering all possible cases of the intersection between $\SNn$ and $\SNPOn$ taking into account the fact that we have shifted $\SNPOn$ cyclicly by $\delta$ positions compared to the ordinary cyclic TAS. 

Let $\SCN = (\SN_1,\ldots,\SN_N)$ and $\SDCNPO = (\SNPO_1,\ldots,\SNPO_{N+1})$.
For $n \in [N]$, 
\[
\SNn = \left[(n-1)\frac{F}{N}, (n-1)\frac{F}{N} + \frac{LF}{N}-1\right]\hspace{-10pt} \pmod F.
\]
\begin{multline*}
S^{N+1}_n = \bigg[(n-1)\frac{F}{N+1}+\delta,\\ 
(n-1)\frac{F}{N+1} + \frac{LF}{N+1}-1+\delta\bigg] \pmod F.
\end{multline*}
To compute the transition waste $W(\SNn \to S^{N+1}_n)$ incurred at Machine $n \in [N]$, we consider the following three cases depending on the relative position of the endpoints of $\SNn$ and $\SNPOn$ on the circle of integers mod $F$.

\textbf{Case 1.} $\delta < \frac{(n-1)F}{N(N+1)} = (n-1)d$. The left endpoint of $\SNPOn$ lies between $0$ and the left endpoint of $\SNn$ (see Fig.~\ref{fig:opt_case1}). Applying Lemma~\ref{lem:sym}~(a) to $S = S^{N+1}_n$ and $T = S^N_n$, we have
\[
W(\SNn \to S^{N+1}_n) = 2((n-1)d-\delta).
\]
Case~1 gives rise to \texttt{Sum1}.
\begin{figure}[!htb]
\centering
\includegraphics[scale=0.6]{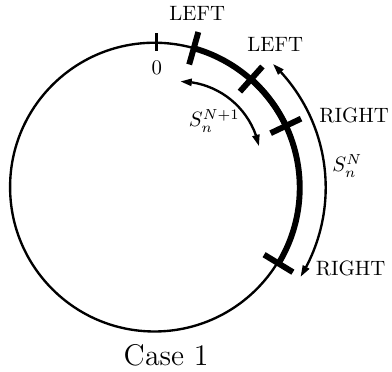}
\caption{Illustration of Case 1.}
\label{fig:opt_case1}
\end{figure}
\vspace{-10pt}
\begin{figure}[!htb]
\centering
\includegraphics[scale=0.6]{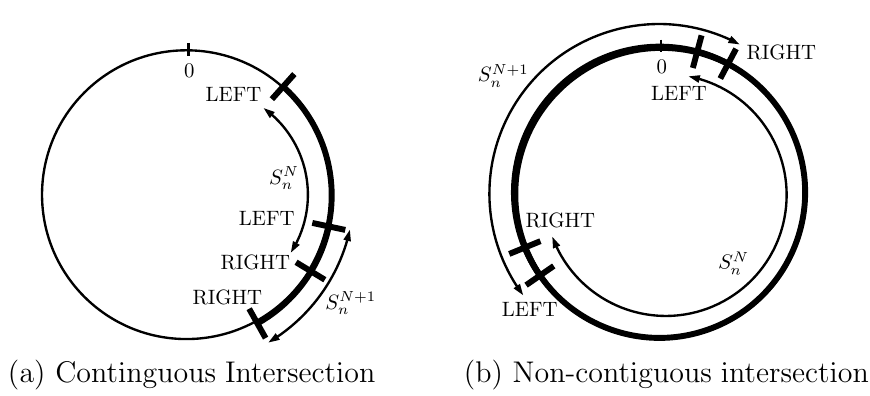}
\caption{Illustration of scenarios in Sub-case 2.2.}
\label{fig:opt_case2}
\end{figure}

\textbf{Case 2.} $(n-1)d \leq \delta < (n-1+L+LN)d$.
The left endpoint of $\SNPOn$ lies between the two endpoints of $\SNn$ (inclusive). We further divide Case~2 into two sub-cases. 

\emph{Sub-case 2.1.} $(n-1)d \leq \delta < (n-1+L)d$. Since $\SNPOn \subset \SNn$, by Lemma~\ref{lem:zero_trivial}, the transition waste is zero and we can ignore this sub-case.  

\emph{Sub-case 2.2.} $(n-1+L)d \leq \delta < (n-1+L+LN)d$. When $L < \lceil \frac{N+1}{2} \rceil$, the intersection of $\SNn$ and $\SNPOn$ is contiguous (see Fig.~\ref{fig:opt_case2}~(a) and we can use similar argument as in Lemma~\ref{lem:sym}~(a) to deduce that
\[
W(\SNn \to S^{N+1}_n) = 2(\delta-(n-1+L)d).
\]
When $L \geq \lceil \frac{N+1}{2} \rceil$, we have
\[
F+(n-1)d-LNd < (n-1+L+LN)d.
\] 
This inequality is important because for $(n-1+L)d \leq \delta \leq F+(n-1)d-LNd$, the intersection of $\SNn$ and $\SNPOn$ is contiguous and the transition waste is
\[
W(\SNn \to S^{N+1}_n) = 2(\delta-(n-1+L)d),
\]
while for $F+(n-1)d-LNd < \delta < (n-1+L+LN)d$, the intersection between the two sets is non-contiguous (see Fig.~\ref{fig:opt_case2}~(b)) and the transition waste is
\[
W(\SNn \to S^{N+1}_n) = 2(N-L)F/N.
\]
Indeed, as the right endpoint of $\SNPOn$ is $(n-1)\frac{F}{N+1}+\frac{LF}{N+1}-2+\delta-F$ in this case, the intersection of the two sets has size
\[
\begin{split}
&\bigg(\bigg(\frac{(n-1)F}{N}+\frac{LF}{N}-1\bigg) - \bigg(\frac{(n-1)F}{N+1}+\delta\bigg) + 1\bigg)\\
&+\bigg(\bigg(\frac{(n-1)F}{N+1}+\frac{LF}{N+1}-1+\delta-F\bigg) -\frac{(n-1)F}{N} + 1\bigg)\\
&= \frac{LF}{N} + \frac{LF}{N+1} - F.
\end{split}
\]
Therefore, the transition waste is 
\[
\begin{split}
&W(\SNn\hspace{-3pt} \to\hspace{-2pt} S^{N+1}_n)\hspace{-2pt} =\hspace{-2pt} (|\SNn|\hspace{-2pt} +\hspace{-2pt} |\SNPOn|)\hspace{-2pt}-\hspace{-2pt}2|\SNn\hspace{-2pt}\cap\hspace{-2pt} \SNPOn|\hspace{-2pt} -\hspace{-2pt} \DNNPO\\
&= \bigg(\frac{LF}{N}\hspace{-2pt} +\hspace{-2pt} \frac{LF}{N+1}\bigg)-2\bigg(\frac{LF}{N}\hspace{-2pt} +\hspace{-2pt} \frac{LF}{N+1} - F\bigg)-\frac{LF}{N(N+1)}\\
&= 2(N-L)F/N. 
\end{split}
\]
These explain the formula of Sum~2.

\begin{figure}[!htb]
\centering
\includegraphics[scale=0.6]{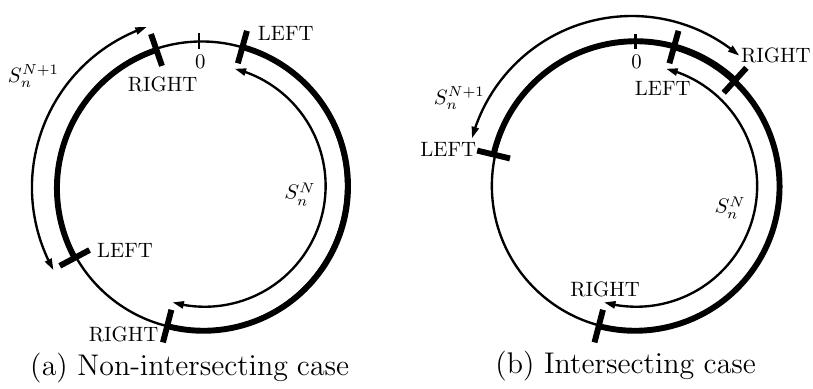}
\caption{Illustration of scenarios in Case 3.}
\label{fig:opt_case3}
\end{figure} 

\textbf{Case 3.} $(n-1+L+LN)d \leq \delta < F$. The right endpoint of $\SNn$ lies between its left endpoint and the left endpoint of $\SNPOn$. We divide this case further into two sub-cases, depending on whether the two sets intersect or not (see~Fig.~\ref{fig:opt_case3}). Note that when the two sets do intersect, the right endpoint of $\SNPOn$ is $(n-1)\frac{F}{N+1}+\frac{LF}{N+1}-2+\delta-F$ (i.e., having $-F$).

When $L < \lceil \frac{N+1}{2} \rceil$, for $(n-1+L+LN)d \leq \delta \leq F+(n-1)d-LNd$, the two sets do not intersect (see Fig.~\ref{fig:opt_case3}~(a)), and so, the transition waste is 
\[
W(\SNn \to S^{N+1}_n) = \frac{LF}{N}+\frac{LF}{N+1}-\frac{LF}{N(N+1)}= 2LNd,
\] 
while for $F+(n-1)d-LNd < \delta < F$, the two sets intersect~(see Fig.~\ref{fig:opt_case3}~(a)) and the transition waste is 
\[
\begin{split}
&W(\SNn \to\hspace{-2pt} S^{N+1}_n)\hspace{-2pt} =\hspace{-2pt} (|\SNn|\hspace{-2pt} +\hspace{-2pt} |\SNPOn|)\hspace{-2pt}-\hspace{-2pt}2|\SNn\hspace{-2pt}\cap\hspace{-2pt} \SNPOn|\hspace{-2pt} -\hspace{-2pt} \DNNPO\\
&= \bigg(\frac{LF}{N}\hspace{-2pt} +\hspace{-2pt} \frac{LF}{N+1}\bigg)-2\bigg(\hspace{-2pt}\bigg(\frac{(n-1)F}{N+1}\hspace{-2pt}+\hspace{-2pt}\frac{LF}{N+1}\hspace{-2pt}-\hspace{-2pt}1\hspace{-2pt}+\hspace{-2pt}\delta\hspace{-2pt}-\hspace{-2pt}F\bigg)\\ 
&\quad -\frac{(n-1)F}{N}+1\bigg)-\frac{LF}{N(N+1)} = 2(F+(n-1)d-\delta).
\end{split} 
\]

When $L \geq \lceil \frac{N+1}{2} \rceil$, the two sets $\SNn$ and $\SNPOn$ always intersect and the transition waste is $2(F+(n-1)d-\delta)$. 
These explain the formula of Sum3.
\end{proof} 

\begin{proof}[Proof of Theorem~\ref{thm:optimal}]
Lemma~\ref{lem:opt} establishes an \emph{implicit} formula for the transition waste when transitioning from a cyclic \NT~$\SCN$ to a $\delta$-shifted cyclic \NPOT~$\SDCNPO$. It remains to determine an \emph{explicit} form of the transition waste and show that it is minimized at $\delta_{\sf{opt}} = \left\lfloor \frac{N+L-1}{2}\right\rfloor d$. 
To simplify the computation, we assume that $\delta$ is divisible by $d \define \frac{F}{N(N+1)}$. 
Even with this simplification, the computation is still very tedious with many cases depending on the relation between $N$ and $L$ and the exact interval $\delta$ lies in (four cases, each has seven intervals to consider - Figs.~\ref{fig:case12},~\ref{fig:case34}). 

\begin{figure}[!htb]
\centering
\includegraphics[scale=0.8]{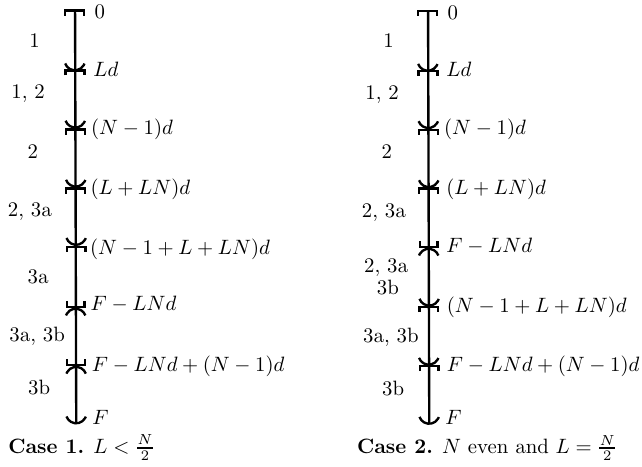}
\caption{Illustration of the intervals for $\delta$ and the non-empty sums contributing to the transition waste when $L < \lceil \frac{N+1}{2} \rceil$. 
The labels 3a/3b refer to the two component sums of \texttt{Sum3} (see Lemma~\ref{lem:opt}).
The appearance of the labels 1, 2, 3a, 3b in each interval indicate that these sums are non-empty in that interval of $\delta$.}
\label{fig:case12}
\end{figure}
Note that while the transition waste can be written as the sum of four component sums, depending on the interval that $\delta$ belongs to, only a few sums are \textit{non-empty} (the lower limit doesn't exceed the upper limit). We must know which sums are non-empty in which intervals of $\delta$ to obtain a precise formula for the transition waste. 
We provide below the explicit formulas of the transition wastes in all four cases and seven intervals, resulting in 28 sub-cases in total.
For each sub-case, given the expression of the transition waste, we identify the $\delta^*$ (divisible by $d$) in that interval that minimizes the transition waste and show that this minimum transition waste is greater than or equal to the transition waste provided in Theorem~\ref{thm:shifted_joining}.

\noindent\textbf{Case 1:} $L<\frac{N}{2}$ (see Fig.~\ref{fig:case12}).
\begin{itemize}
	\item \textbf{Case 1-a:} $0\leq \delta < Ld$. By Lemma~\ref{lem:opt}, and noting that only \texttt{Sum1} is non-empty, we have
	\[
	\begin{split}	
W(\SCN \to \SDCNPO) &= \texttt{Sum1} = 
 \sum_{n = \frac{\delta}{d}+2}^N 2((n-1)d-\delta)\\ &= \frac{\delta^2}{d}-(2N-1)\delta+N(N-1)d,
\end{split}
\]
which achieves its minimum value $(N-L+1)(N-L)d$ (among all $\delta$ divisible by $d$) at $\delta^*=(L-1)d$. 
This value is larger than the transition waste obtained in Theorem~\ref{thm:shifted_joining}, noting that $d = F/(N(N+1))$. 
	\item \textbf{Case 1-b:} $Ld \leq \delta < (N-1)d$. By Lemma~\ref{lem:opt}, 
\[
W(\SCN \to \SDCNPO) = \texttt{Sum1} + \texttt{Sum}2,
\]  
where 
\[
\begin{split}
\texttt{Sum1} &= \sum_{n = \delta/d+2}^N 2((n-1)d-\delta)\\ &= \frac{\delta^2}{d}-(2N-1)\delta+N(N-1)d,
\end{split}
\]
\[
\begin{split}
\texttt{Sum2} &= \sum_{n = 1}^{\delta/d-L+1} 2(\delta-(n-1+L)d)\\ &= \frac{\delta^2}{d}-(2L-1)\delta+L(L-1)d. 
\end{split}
\]
Note that it is important to determine the precise lower and upper limits for each sum. Hence, 
\begin{multline*}
W(\SNn \to S^{N+1}_n)\\ = \frac{2\delta^2}{d}-2(N+L-1)\delta+(N(N-1)+L(L-1))d.
\end{multline*}
This is a quadratic function of $\delta$, which achieves the minimum at $\delta_{\sf{opt}} = \left\lfloor \frac{N+L-1}{2}\right\rfloor d$. This is indeed the shift recommended in Theorem~\ref{thm:shifted_joining}. 
	\item \textbf{Case 1-c:} $(N-1)d \leq \delta < (L+LN)d$. We have
\begin{multline*}
W(\SCN \to \SDCNPO) = \texttt{Sum2}\\
= \begin{cases}
\sum_{n = 1}^{\frac{\delta}{d-L+1}} 2(\delta\hspace{-2pt}-\hspace{-2pt}(n\hspace{-2pt}-\hspace{-2pt}1\hspace{-2pt}+\hspace{-2pt}L)d),&\text{if } \delta\hspace{-2pt} \leq \hspace{-2pt}(N\hspace{-2pt}+\hspace{-2pt}L\hspace{-2pt}-\hspace{-2pt}1)d,\\
\sum_{n = 1}^{N} 2(\delta\hspace{-2pt}-\hspace{-2pt}(n\hspace{-2pt}-\hspace{-2pt}1\hspace{-2pt}+\hspace{-2pt}L)d)
,&\text{if } \delta\hspace{-2pt} \geq \hspace{-2pt}(N\hspace{-2pt}+\hspace{-2pt}L)d,
\end{cases}\\
= \begin{cases}
\frac{\delta^2}{d}\hspace{-2pt}-\hspace{-2pt}(2L\hspace{-2pt}-\hspace{-2pt}1)\delta\hspace{-2pt}+\hspace{-2pt}L(L\hspace{-2pt}-\hspace{-2pt}1)d,&\text{if } \delta \leq (N\hspace{-2pt}+\hspace{-2pt}L\hspace{-2pt}-\hspace{-2pt}1)d,\\
\underset{\delta=(N+L)d}{\geq} N(N\hspace{-2pt}+\hspace{-2pt}1)d = F,&\text{if } \delta\hspace{-2pt} \geq\hspace{-2pt} (N\hspace{-2pt}+\hspace{-2pt}L)d,
\end{cases}
\end{multline*}
which achieves its minimum value $\frac{(N-L)(N-L-1)F}{N(N+1)}$ (among all $\delta$ divisible by $d$) at $\delta^*=(N-1)d$. 
This value is larger than the transition waste obtained in Theorem~\ref{thm:shifted_joining}. 
	\item \textbf{Case 1-d:} $(L\hspace{-2pt}+\hspace{-2pt}LN)d\hspace{-2pt} \leq\hspace{-2pt} \delta\hspace{-2pt} <\hspace{-2pt} (N\hspace{-2pt}-\hspace{-2pt}1\hspace{-2pt}+\hspace{-2pt}L\hspace{-2pt}+\hspace{-2pt}LN)d$.
	We have
	\begin{multline*}
W(\SCN \to \SDCNPO) = \texttt{Sum2} + \texttt{Sum3a}\\
= \sum_{n = \delta/d-LN-L+2}^{N}\hspace{-15pt} 2(\delta-(n-1+L)d)
+ \sum_{n=1}^{\delta/d-LN-L+1}\hspace{-15pt}2LNd\\
= -\frac{\delta^2}{d} + (N-L-1+2LN)\delta\\ - (N^2L-N^2-NL-3N-2L+2)Ld,
\end{multline*}
which is minimized at either $\delta_1^*=(LN+L)d$ or $\delta_2^*=(N-2+L+LN)d$, i.e., $\min\{Ld(2N^2+3N-3),d(2N^2L+N+3L-2)\}$, which is greater than $2N^2Ld$, which in turn is larger than the transition waste in Theorem~\ref{thm:shifted_joining}.   
	\item \textbf{Case 1-e:} $(N\hspace{-2pt}-\hspace{-2pt}1\hspace{-2pt}+\hspace{-2pt}L\hspace{-2pt}+\hspace{-2pt}LN)d\hspace{-2pt} \leq\hspace{-2pt} \delta\hspace{-2pt} \leq\hspace{-2pt} F\hspace{-2pt}-\hspace{-2pt}LNd$.
	We have 
	\[
W(\SNn \to S^{N+1}_n) = \texttt{Sum3a}
= \sum_{n=1}^{N}2LNd = 2LN^2d,\\
\]
which is larger than the transition waste in Theorem~\ref{thm:shifted_joining}.   
	\item \textbf{Case 1-f:} $F\hspace{-2pt}-\hspace{-2pt}LNd\hspace{-2pt} < \hspace{-2pt}\delta\hspace{-2pt} \leq\hspace{-2pt} F\hspace{-2pt}-\hspace{-2pt}LNd\hspace{-2pt}+\hspace{-2pt}(N\hspace{-2pt}-\hspace{-2pt}1)d$.
	We have
	\begin{multline*}
W(\SCN \to \SDCNPO) = \texttt{Sum3a} + \texttt{Sum3b}\\
= \sum_{n = \delta/d-N^2-N+LN+1}^{N} \hspace{-20pt}2LNd
+ \sum_{n=1}^{\delta/d+LN-N^2-N}\hspace{-15pt}2(F+(n-1)d-\delta)\\
= -\frac{\delta^2}{d} + (2N^2-2LN+2N-1)\delta\\ - Nd(N^3-2N^2L+2N^2+NL^2-4NL+L-1),
\end{multline*}
which is minimized at either $\delta_1^*=F-LNd+d=(N^2+N-LN+1)d$ or $\delta_2^*=F-LNd+(N-1)d=(N^2+2N-LN-1)d$. Therefore, the minimum transition waste in this range of $\delta$ (assuming $\delta$ is divisible by $d$) is 
\begin{multline*}
\min\{2N^2Ld-2d,2N^2Ld-N^2d+Nd\}\\=2N^2Ld-N^2d+Nd,
\end{multline*}
which is greater than the transition waste in Theorem~\ref{thm:shifted_joining}.
	\item \textbf{Case 1-g:} $F-LNd+(N-1)d < \delta < F$. We have 
\[	
	\begin{split}
&W(\SCN \to \SDCNPO) = \texttt{Sum3b}
= \sum_{n=1}^{N} 2(F\hspace{-2pt}+\hspace{-2pt}(n\hspace{-2pt}-\hspace{-2pt}1)d\hspace{-2pt}-\hspace{-2pt}\delta)\\ &= -2N\delta\hspace{-2pt}+\hspace{-2pt}(2N^3d\hspace{-2pt}+\hspace{-2pt}3N^2d\hspace{-2pt}-\hspace{-2pt}Nd)\\
&\geq -2N(F\hspace{-2pt}-\hspace{-2pt}d)\hspace{-2pt}+\hspace{-2pt}(2N^3d\hspace{-2pt}+\hspace{-2pt}3N^2d\hspace{-2pt}-\hspace{-2pt}Nd) = N(N\hspace{-2pt}+\hspace{-2pt}1)d,
\end{split}
\]
which is greater than the transition waste in Theorem~\ref{thm:shifted_joining}.
\end{itemize} 

\noindent\textbf{Case 2:} $L = \frac{N}{2}$ and $N$ is even (see Fig.~\ref{fig:case12}).
\begin{itemize}
	\item \textbf{Case 2-a:} $0\leq \delta < Ld$. The formula of the transition waste is the same as Case 1-a.  
	\item \textbf{Case 2-b:} $Ld \leq \delta < (N-1)d$. The formula of the transition waste is the same as Case 1-b.  
	\item \textbf{Case 2-c:} $(N-1)d \leq \delta < (L+LN)d$. The formula of the transition waste is the same as Case 1-c.  
	\item \textbf{Case 2-d:} $(L+LN)d \leq \delta \leq F-LNd$.  The formula of the transition waste turns out to be the same as Case 1-d.  
	\item \textbf{Case 2-e:} $F\hspace{-2pt} -\hspace{-2pt}LNd\hspace{-2pt} < \hspace{-2pt}\delta \hspace{-2pt}< \hspace{-2pt}(N\hspace{-2pt}-\hspace{-2pt}1\hspace{-2pt}+\hspace{-2pt}L\hspace{-2pt}+\hspace{-2pt}LN)d$.
	We have 
	\[
	\begin{split}
&W(\SNn \to S^{N+1}_n) = \texttt{Sum2} + \texttt{Sum3a} + \texttt{Sum3b}\\ 
&\geq \texttt{Sum3a} = \sum_{n=\delta/d-N^2-N+LN+1}^{\delta/d-LN-L+1}2LNd\\ &= (N^2+N-2LN-L)2LNd \underset{L=N/2}{=} LN^2d,
\end{split}
\]
which is larger than the transition waste in Theorem~\ref{thm:shifted_joining}.   
	\item \textbf{Case 2-f:} \hspace{-2pt}$(N\hspace{-2pt}-\hspace{-2pt}1\hspace{-2pt}+\hspace{-2pt}L\hspace{-2pt}+\hspace{-2pt}LN)d\hspace{-2pt} \leq\hspace{-2pt} \delta\hspace{-2pt} <\hspace{-2pt} F\hspace{-2pt}-\hspace{-2pt}LNd\hspace{-2pt}+\hspace{-2pt}(N\hspace{-2pt}-\hspace{-2pt}1)d$.
	We have
	\[
	\begin{split}
&W(\SCN \to \SDCNPO) = \texttt{Sum3a} + \texttt{Sum3b} \geq \texttt{Sum3a}\\
&=\hspace{-10pt} \sum_{n = \delta/d-N^2-N+LN+1}^{N}\hspace{-30pt} 2LNd = 2(N^2\hspace{-2pt}+\hspace{-2pt}2N\hspace{-2pt}-\hspace{-2pt}LN\hspace{-2pt}-\hspace{-2pt}1\hspace{-2pt}-\hspace{-2pt}\delta/d)LNd\\
&\underset{\delta=F-LNd+(N-2)d}{\hspace{-65pt}\geq}\hspace{-65pt} 2\big(N^2\hspace{-2pt}+\hspace{-2pt}2N\hspace{-2pt}-\hspace{-2pt}LN\hspace{-2pt}-\hspace{-2pt}1\hspace{-2pt}-\hspace{-2pt}(N^2\hspace{-2pt}+\hspace{-2pt}N\hspace{-2pt}-\hspace{-2pt}LN\hspace{-2pt}+\hspace{-2pt}N\hspace{-2pt}-\hspace{-2pt}2)\big)LNd\\
&= 2LNd \underset{L=N/2}{=} N^2d,
\end{split}
\]
which is greater than the transition waste in Theorem~\ref{thm:shifted_joining}. 
	\item \textbf{Case 2-g:} $F-LNd+(N-1)d < \delta < F$. The formula of the transition waste is the same as Case 1-g.  
\end{itemize}

\noindent\textbf{Case 3:} $L> \frac{N+1}{2}$ (see Fig.~\ref{fig:case34}).
\begin{figure}[htb!]
\centering
\includegraphics[scale=0.8]{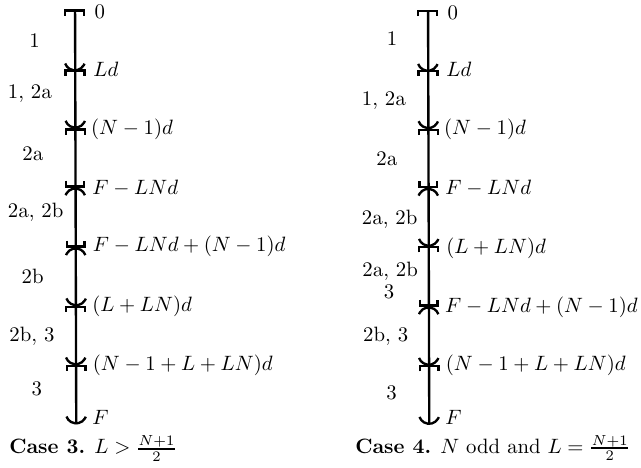}
\caption{Illustration of the intervals for $\delta$ and the non-empty sums contributing to the transition waste when $L \geq \lceil \frac{N+1}{2} \rceil$. The labels 2a/2b refer to the component sums of \texttt{Sum2} (see Lemma~\ref{lem:opt}).
The appearance of the labels 1, 2a, 2b, 3 in each interval indicate that these sums are non-empty in that interval. }
\label{fig:case34}\vspace{-10pt}
\end{figure} 

\begin{itemize}
	\item \textbf{Case 3-a:} $0\leq \delta < Ld$. The same as Case 1-a.  
	\item \textbf{Case 3-b:} $Ld \leq \delta < (N-1)d$. The same as Case 1-b. The minimum transition waste is achieved at $\delta^*=\lfloor \frac{N+L-1}{2}\rfloor d$, which is indeed the shift provided in Theorem~\ref{thm:shifted_joining}.
	\item \textbf{Case 3-c:} $(N\hspace{-2pt}-\hspace{-2pt}1)d\hspace{-2pt} \leq\hspace{-2pt} \delta\hspace{-2pt} \leq\hspace{-2pt} F\hspace{-2pt}-\hspace{-2pt}LNd$. The same as Case 1-c.
	\item \textbf{Case 3-d:} $F\hspace{-2pt}-\hspace{-2pt}LNd\hspace{-2pt} <\hspace{-2pt} \delta\hspace{-2pt} \leq\hspace{-2pt} F\hspace{-2pt}-\hspace{-2pt}LNd\hspace{-2pt}+\hspace{-2pt}(N\hspace{-2pt}-\hspace{-2pt}1)d$.
	We have \vspace{-5pt}
	\[
	\begin{split}
&W(\SCN \to \SDCNPO) = \texttt{Sum2a} + \texttt{Sum2b} \geq \texttt{Sum2a}\\
&= \sum_{n = 1}^{N} 2(\delta-(n-1+L)d)\\
&\underset{\delta=F-LNd+d}{\hspace{-40pt}\geq} 2N^2d(N-L)+(N^2d+Nd)\\ &\geq N(N+1)d = F,
\end{split}
\]
which is greater than the transition waste in Theorem~\ref{thm:shifted_joining}.   
	\item \textbf{Case 3-e:} $F-LNd+(N-1)d < \delta < (L+LN)d $.
	From Lemma~\ref{lem:opt}, we deduce that \vspace{-10pt}
	\[
W(\SNn \to S^{N+1}_n) = \texttt{Sum2b}
=\hspace{-10pt} \sum_{n=\delta/d-LN-L+2}^{\delta/d+LN-N^2-N}\hspace{-10pt} 2\frac{(N-L)F}{N},
\]
which is larger than the transition waste in Theorem~\ref{thm:shifted_joining} as long as there is at least one term in the sum. This can be easily shown by verifying that the upper limit is strictly larger than the lower limit of the sum.   
	\item \textbf{Case 3-f:} $(L\hspace{-2pt}+\hspace{-2pt}LN)d\hspace{-2pt} \leq\hspace{-2pt} \delta\hspace{-2pt} <\hspace{-2pt} (L\hspace{-2pt}+\hspace{-2pt}LN\hspace{-2pt}+\hspace{-2pt}N\hspace{-2pt}-\hspace{-2pt}1)d$. We have
	\[
W(\SCN\hspace{-3pt} \to\hspace{-1pt} \SDCNPO) = \texttt{Sum2b} + \texttt{Sum3} \geq \texttt{Sum2b},
\]
which is greater than the transition waste in Theorem~\ref{thm:shifted_joining}, using the same argument as Case 3-e.
	\item \textbf{Case 3-g:} $(L+LN+(N-1))d \leq \delta < F$. We have \vspace{-5pt} 
	\[
	\begin{split}
&W(\SCN\hspace{-3pt} \to\hspace{-1pt} \SDCNPO) = \texttt{Sum3}
= \sum_{n=1}^{N} 2(F+(n-1)d-\delta)\\ 
&\underset{\delta=F-d}{\geq} N(N+1)d = F,  \vspace{-5pt}
\end{split}
\]
which is greater than the transition waste in Theorem~\ref{thm:shifted_joining}.
\end{itemize} 

\noindent\textbf{Case 4:} $L=\frac{N+1}{2}$ and $N$ is odd (see Fig.~\ref{fig:case34}).
\begin{itemize}
	\item \textbf{Case 4-a:} $0\leq \delta < Ld$. The formula of the transition waste is the same as Case 3-a.  
	\item \textbf{Case 4-b:} $Ld \leq \delta < (N-1)d$. The formula of the transition waste is the same as Case 3-b.
	\item \textbf{Case 4-c:} $(N-1)d \leq \delta \leq F-LNd$. The formula of the transition waste is the same as Case 3-c.
	\item \textbf{Case 4-d:} $F-LNd < \delta < (L+LN)d$. This case can be settled using exactly the same argument as in Case 3-d.
	\item \textbf{Case 4-e:} $(L+LN)d \leq \delta \leq F-LNd+(N-1)d$. This case can be settled using exactly the same argument as in Case 3-e.
	\item \textbf{Case 4-f:} $F-LNd+(N-1)d < \delta < (N-1+L+LN)d$. This case can be settled using exactly the same argument as in Case 3-f.
	\item \textbf{Case 4-g:} The formula of the transition waste is the same as Case 3-g. \qedhere
\end{itemize} 
\end{proof}

\subsection{Frequently Used Notations}
\label{app:notations}

\begin{table}[htb!]
\centering
\tabcolsep=0.05cm
\begin{tabular}{|c|p{5.7cm}|}
    \hline
    \textbf{Notation} & \textbf{Meaning}\\
    \hline
    $N$ & The total number of machines in the system.\\ 
    \hline
    $n$ &
    The label of an individual machine. We have $n \in [N]\define \{1,2,\ldots,N\}$.\\
    \hline
    $L$ & The number of machines required by the underlying coded computing scheme. In the task allocation scheme, each task (index) is allocated to exactly $L$ different machines.\\
    \hline
    $F$ & The total number of tasks.\\
    \hline
    $f$ & The label of an individual task. We have $f\in [[F]] \define \{0,1,\ldots,F-1\}$.\\
    \hline
    $\SNn$ & A subset of $[[F]]$ representing the set of task indices allocated to Machine $n$ when the system has $N$ machines.\\
    \hline
    \NT & An ordered list of $N$ sets $\S^N=(S^N_1,\ldots,S^N_N)$ satisfying the $L$-Redundancy and the Load Balancing properties (see Definition~\ref{def:TAS}).\\ 
    \hline  
    $\SCN = (S^N_1,\ldots,S^N_N)$ & The order list of sets of task indices allocated to Machine $n\in [N]$ by a cyclic TAS (see~\eqref{eq:cyclic}).\\
    \hline
    $\SDCN = (S^N_1,\ldots,S^N_N)$ & The order list of sets of task indices allocated to Machine $n\in [N]$ by a $\delta$-shifted cyclic task allocation scheme (see Definition~\ref{def:shift}).\\
    \hline
    $\DNNP$ & $\DNNP\hspace{-2pt} \define\hspace{-2pt} |LF/N\hspace{-2pt}-\hspace{-2pt}LF/N'|$: the necessary load change when the system transitions from $N$ machines to $N'\hspace{-2pt}=\hspace{-2pt}N\hspace{-2pt}\pm\hspace{-2pt} 1$ ones. This is called \textit{necessary} as when a machine leaves/joins, even without any transition waste, the remaining ones must take more/less tasks to maintain the $L$-redundancy: every task must be covered by $L$ machines.\\
    \hline
    $W(\SNn \to \SNPn)$ & $W(\SNn \to \SNPn) \define |\SNn \Delta \SNPn|-\DNNP$: the transition waste incurred at Machine $n$ when transitioning from a set of tasks $\SNn$ to another set of tasks $\SNPn$ (see Definition~\ref{def:TW_one}).\\
    \hline
    $W_{n^*}(\SNn \to \SNMOn)$ & Same as $W(\SNn \to \SNPn)$ but more specific to the case $N'=N-1$ and Machine $n^*$ leaves.\\
    \hline
$W(\SN\to \SNPO)$ & $W(\SN\to \SNPO)\define \sum_{n\in [N]} W(\SNn\to \SNPOn)$ is the total transition waste at all machines when Machine $N+1$ joins.\\
	\hline 
	$W_{n^*}(\SN\to \SNMO)$ & $W_{n^*}(\SN\hspace{-3pt}\to\hspace{-3pt} \SNMO)\hspace{-3pt} \define \hspace{-3pt}\sum_{n \in [N]\setminus \{n^*\}}$ $W_{n^*}(\SNn\hspace{-3pt}\to\hspace{-3pt} \SNMOn)$: the total transition waste at all machines when Machine $n^*$ leaves.\\
	\hline
    $W_{\sf{avg}}(\SN\hspace{-4pt} \to\hspace{-2pt} \SNMO)$ & The average of $W_{n^*} (\SN\hspace{-5pt} \to\hspace{-3pt}  \SNMO)$ over $n^* \hspace{-3pt}\in\hspace{-3pt} [N]$.\\
	\hline
\end{tabular}
\caption{Frequently Used Notations.}
\label{tab:notations}
\end{table}


\newpage

\begin{IEEEbiographynophoto}
{Hoang Dau} (Member, IEEE) received the B.S. degree in applied
mathematics and informatics from Vietnam National
University, Hanoi, Vietnam, in 2006, and the M.S.
and Ph.D. degrees in mathematical sciences from
Nanyang Technological University, Singapore, in
2009 and 2012, respectively.
He is currently a senior lecturer in Computer Science at School of Computing Technologies, STEM College, RMIT University. His research interests include coding theory, discrete mathematics, and blockchain.
\end{IEEEbiographynophoto} 

\begin{IEEEbiographynophoto}
{Ryan Gabrys} (Member, IEEE) received the B.S. degree in mathematics and
computer science from the University of Illinois at Urbana-Champaing in
2005, and the Ph.D. degree in electrical engineering from the University of
California, Los Angeles in 2014. He is currently a Scientist jointly affiliated
with the Naval Information Warfare Center and the California Institute for
Telecommunications and Information Technology (Calit2) at the University
of California, San Diego. His research interests broadly lie in the areas
of theoretical computer science and electrical engineering, including coding
theory, combinatorics, and communication theory.
\end{IEEEbiographynophoto}

\begin{IEEEbiographynophoto}{Chen Feng} (Member, IEEE) received the B.Eng. degree from the Department of Electronic and Communications Engineering, Shanghai Jiao Tong University, China, in 2006, and the M.A.Sc. and Ph.D. degrees from the Department of Electrical and Computer Engineering, University of Toronto, Canada, in 2009 and 2014, respectively. From 2014 to 2015, he was a Post-Doctoral Fellow with Boston University, USA, and \'{E}cole Polytechnique F\'{e}d\'{e}rale de Lausanne (EPFL), Switzerland. He joined the School of Engineering, The University of British Columbia (UBC), Kelowna, Canada, in July 2015, where he is currently the Tier-2 Principal’s Research Chair in Blockchain and the Co-Cluster Lead of Blockchain at UBC. His research interests are in coding theory and its applications in various fields, including quantum communications and blockchain technology.
\end{IEEEbiographynophoto} 

\begin{IEEEbiographynophoto}{Yu-Chih Huang} (Member, IEEE) received the Ph.D. degree in electrical and computer engineering from Texas A\&M University (TAMU) in 2013. From 2013 to 2015, he was a Postdoctoral Research Associate with TAMU. In 2015, he joined the Department of Communication Engineering, National Taipei University, Taiwan, as an Assistant Professor and was promoted to an Associate Professor in 2018. In 2020, he joined the Institute of Communications Engineering, National Chiao Tung University (NCTU), Taiwan. He is currently an Associate Professor at National Yang Ming Chiao Tung University (the merger of National Yang Ming University and NCTU in 2021). His research interests are in information theory, coding theory, wireless communications, and statistical signal processing. He received the 2018 IEEE Information Theory Society Taipei Chapter and IEEE Communications Society Taipei/Tainan Chapter’s Best Paper Award for Young Scholars and was a recipient of the MOST Young Scholar Fellowship 2020. He is currently serving as an Associate Editor for IEEE Communications Letters.
\end{IEEEbiographynophoto}

\begin{IEEEbiographynophoto}{Quang-Hung Luu} (Member, IEEE) received the B.Sc. degree in applied mathematics and mechanics from Vietnam National
University, Hanoi, Vietnam, in 2004, and the Ph.D. degrees in earth and planetary sciences, and computer science and software engineering from Kyoto University and Swinburne University of Technology in 2012 and 2021, respectively. He is currently a research fellow sharing the time between Monash University and Swinburne University of Technology. His research interests include software testing, ocean modelling, connected and autonomous vehicles and data analysis.
\end{IEEEbiographynophoto}

\begin{IEEEbiographynophoto}{Eidah J. Alzahrani} is an assistant professor at Albaha University (Saudi Arabia). He obtained a bachelor's degree from Albaha University in 2007, a Master of Information Technology from La Trobe University (Australia) in 2010, and a PhD from RMIT University (Australia) in 2020, with a PhD thesis titled "Proactive auto-scaling techniques for containerised applications". His current research is on resource management for cloud computing data center, as well as Internet of Things (IoT) solutions in manufacturing.
\end{IEEEbiographynophoto}

\begin{IEEEbiographynophoto}{Zahir Tari} is a full professor at RMIT University (Australia) and the Research Director of
the RMIT Cyber Security Research and Innovation (CCSRI). His expertise is in the areas of system performance (e.g. P2P, Cloud,
Edge/IoT) and security (e.g.
SCADA, SmartGrid, Cloud, Edge/IoT).
\end{IEEEbiographynophoto}

\end{document}